\renewcommand{\And}{\wedge}
\newcommand{\Or}{\vee}
\newcommand{\variable}[1]{#1}
\newcommand{\st}{\variable{st}}
\newcommand{\status}{\variable{st}}
\newcommand{\dist}{\variable{d}}
\newcommand{\reset}{\variable{reset}}
\newcommand{\beRoot}{\variable{beRoot}}
\newcommand{\ComputeVariable}{\variable{cmpVar}}
\newcommand{\BestPointer}{\variable{bestPtr}}
\newcommand{\compute}{\variable{compute}}
\newcommand{\update}{\variable{upd}}
\newcommand{\VColor}{\variable{col}}
\newcommand{\VSatisfaction}{\variable{scr}}
\newcommand{\VCanQuit}{\variable{canQ}}
\newcommand{\VPointer}{\variable{ptr}}
\newcommand{\satisfaction}{realScr\xspace}
\newcommand{\numberYes}{\#InAll\xspace}
\newcommand{\N}{\variable{N}}
\newcommand{\A}{{\tt SDR}\xspace}
\newcommand{\DA}{{\tt FGA}\xspace}
\newcommand{\DU}{{\tt U}\xspace}
\DeclareMathOperator*{\argmin}{\arg\!\min}
\newcommand{\PB}{\mathbf{P\_RB}\xspace}
\newcommand{\PF}{\mathbf{P\_RF}\xspace}
\newcommand{\PC}{\mathbf{P\_C}\xspace}
\newcommand{\PRa}{\mathbf{P\_R1}\xspace}
\newcommand{\PRb}{\mathbf{P\_R2}\xspace}
\newcommand{\PRUp}{\mathbf{P\_Up}\xspace}
\newcommand{\PRoot}{\mathbf{P\_root}\xspace}
\newcommand{\PClean}{\mathbf{P\_Clean}\xspace}
\newcommand{\PQuitAlliance}{\mathbf{P\_canQuit}\xspace}
\newcommand{\PBecomeNormal}{\mathbf{P\_toQuit}\xspace}
\newcommand{\PReset}{\mathbf{P\_reset}\xspace}
\newcommand{\PICorrect}{\mathbf{P\_ICorrect}\xspace}
\newcommand{\PRulePointer}{\mathbf{P\_updPtr}\xspace}
\newcommand{\correct}{\mathbf{P\_Correct}\xspace}
\newcommand{\ruleRB}{\mathbf{rule\_RB}\xspace}
\newcommand{\ruleRF}{\mathbf{rule\_RF}\xspace}
\newcommand{\ruleC}{\mathbf{rule\_C}\xspace}
\newcommand{\ruleR}{\mathbf{rule\_R}\xspace}
\newcommand{\wordsI}{\mathbf{words\_I}\xspace}
\newcommand{\ruleColor}{\mathbf{rule\_Clr}\xspace}
\newcommand{\rulePointerA}{\mathbf{rule\_P1}\xspace}
\newcommand{\rulePointerB}{\mathbf{rule\_P2}\xspace}
\newcommand{\ruleCanQuitB}{\mathbf{rule\_Q}\xspace}
\newcommand{\ParU}{\variable{K}}
\newcommand{\val}{\variable{c}}
\newcommand{\PAgree}{\mathbf{P\_Ok}\xspace}
\newcommand{\PUp}{\mathbf{P\_Up}\xspace}
\newcommand{\ruleUA}{\mathbf{rule\_U}\xspace}
\newenvironment{proof}{\noindent {\em Proof. } }{{\hfill\
    $\Box$}\vspace{.5pc}} 
\newtheorem{lemma}{Lemma}
\newtheorem{corollary}{Corollary}
\newtheorem{theorem}{Theorem}
\newtheorem{remark}{Remark}
\newtheorem{definition}{Definition}
\newtheorem{property}{Property}
\renewcommand{\bmod}{\text{\%}}
\newcommand{\prewordsI}{\mathbf{prewords\_I}\xspace}
\newcommand{\confI}{\mathbf{cw\_I}\xspace}
\title{Self-Stabilizing Distributed Cooperative Reset\thanks{This
    study was partially supported by the French \textsc{anr} projects
    \textsc{ANR-16-CE40-0023} (\textsc{descartes}) and \textsc{ANR-16
      CE25-0009-03} (\textsc{estate}).}}
\author{ St\'ephane Devismes$^\dag$ and Colette Johnen$^\ddag$\\
  \\
  {\small $^\dag$ Universit\'e Grenoble Alpes, VERIMAG, UMR 5104, France}\\
  {\small $^\ddag$ Universit\'e de Bordeaux, LaBRI, UMR 5800, France}
}
\date{}
\begin{document}

\maketitle

\begin{abstract}
Self-stabilization is a versatile fault-tolerance approach that
characterizes the ability of a system to eventually resume a correct
behavior after any finite number of transient faults.
In this paper, we propose a {\em self-stabilizing reset algorithm}
working in {\em anonymous} networks. This algorithm resets the network
in a {\em distributed} non-centralized manner, {\em i.e.}, it is
multi-initiator, as each process detecting an inconsistency may
initiate a reset. It is also {\em cooperative} in the sense that it
coordinates concurrent reset executions in order to gain efficiency.
Our approach is general since our reset algorithm allows to build
self-stabilizing solutions for various problems and settings. As a
matter of fact, we show that it applies to both static and dynamic
specifications since we propose efficient self-stabilizing reset-based
algorithms for the (1-minimal) $(f,g)$-alliance (a generalization of
the dominating set problem) in identified networks and the unison
problem in anonymous networks. Notice that these two latter
instantiations enhance the state of the art. Indeed, in the former
case, our solution is more general than the previous ones; while in
the latter case, the time complexity of the proposed unison algorithm is
better than that of previous solutions of the literature.

\end{abstract}

\paragraph{Keywords:} Distributed algorithms, self-stabilization,
reset, alliance, unison.

\section{Introduction}

In distributed systems, a {\em self-stabilizing} algorithm is able to
recover a correct behavior in finite time, regardless of the
\emph{arbitrary} initial configuration of the system, and therefore
also after a finite number of transient faults, provided that those faults
do not alter the code of the processes.


For more than 40 years, a vast literature on self-stabilizing
algorithms has been developed. Self-stabilizing solutions have been
proposed for many kinds of classical distributed problems, {\em e.g.},
token circulation~\cite{HuangC93}, spanning tree
construction~\cite{CYH91}, clustering~\cite{CaronDDL10},
routing~\cite{Dolev1997122}, propagation of information with
feedback~\cite{BuiDPV99}, clock synchronization~\cite{CouvreurFG92},
{\em etc}. Moreover, self-stabilizing algorithms have been designed to
handle various environments, {\em e.g.}, wired
networks~\cite{HuangC93,CYH91,CaronDDL10,Dolev1997122,BuiDPV99,CouvreurFG92},
WSNs \cite{BenOthman2013199,Telematik_SSS_2013_Neighborhood},
peer-to-peer
systems~\cite{DBLP:journals/ppl/CaronDPT10,Caron20131533}, {\em etc}.
Drawing on this experience, general methodologies for making
distributed algorithms self-stabilizing have been proposed. In
particular, Katz and Perry~\cite{KP93j} give a characterization of
problems admitting a self-stabilizing solution. Precisely, they
describe a general algorithm that transforms almost all algorithms
(specifically, those algorithms that can be self-stabilized) into their
corresponding stabilizing version. However, this so-called {\em
  transformer} is, by essence, inefficient both in terms of space and
time complexities: actually, its purpose is only to demonstrate the feasibility
of the transformation.

Interestingly, many proposed general
methods~\cite{KP93j,APVD94c,AG94j,AO94c} are based on {\em reset}
algorithms. Such algorithms are initiated when an inconsistency is
discovered in the network, and aim at reinitializing the system to a
correct (pre-defined) configuration.

A reset algorithm may be centralized at a leader process ({\em e.g.},
see~\cite{AG94j}), or fully distributed, meaning multi-initiator (as
our proposal here).  In the former case, either the reset is coupled
with a snapshot algorithm (which makes a {\em global checking} of the
network), or processes detecting an incoherence (using {\em local
  checking}~\cite{APV91c}) should request a reset to the leader.  In
the fully distributed case, resets are locally initiated by processes
detecting inconsistencies. This latter approach is considered as more
efficient when the concurrent resets are coordinated. In other words,
concurrent resets have to be {\em cooperative} (in the sense
of~\cite{YNKM18c}) to ensure the fast convergence of the system to a
consistent global state.

Self-stabilization makes no hypotheses on the nature ({\em e.g.},
memory corruption or topological changes) or extent of transient
faults that could hit the system, and a self-stabilizing system
recovers from the effects of those faults in a unified manner.
Now, such versatility comes at a price, {\em e.g.}, after transient
faults cease, there is a finite period of time, called the {\em
  stabilization phase}, during which the safety properties of the
system are violated. Hence, self-stabilizing algorithms are mainly
compared according to their {\em stabilization time}, the maximum
duration of the stabilization phase.
 
General schemes and efficiency are usually understood as orthogonal
issues. We tackle this problem by proposing an efficient
self-stabilizing reset algorithm working in any anonymous connected
network. Our algorithm is written in the locally shared memory model
with composite atomicity, where executions proceed in atomic steps (in
which a subset of enabled processes move, {\em i.e.}, update their
local states) and the asynchrony is captured by the notion of {\em
  daemon}. The most general daemon is the {\em distributed unfair
  daemon}. So, solutions stabilizing under such an assumption are
highly desirable, because they work under any other daemon assumption.

The {\em stabilization time} is usually evaluated in terms of
rounds, which capture the execution time according to the speed of the
slowest processes. But, another crucial issue is the number of local
state updates, {\em i.e.} the number of {\em moves}. Indeed, the
stabilization time in moves captures the amount of computations an
algorithm needs to recover a correct behavior. 

The daemon assumption and time complexity are closely related.  To
obtain practical solutions, the designer usually tries to avoid strong
assumptions on the daemon, like for example, assuming all executions
are synchronous.  Now, when the considered daemon does not enforce any
bound on the execution time of processes, the stabilization time
in moves can be bounded only if the algorithm works under an unfair
daemon. For example, if the daemon is assumed to be {\em distributed
  and weakly fair} (a daemon stronger than the distributed unfair one)
and the studied algorithm actually requires the weakly fairness
assumption to stabilize, then it is possible to construct executions
whose convergence is arbitrarily long in terms of atomic steps (and so
in moves), meaning that, in such executions, there are processes whose
moves do not make the system progress in the convergence. In other
words, these latter processes waste computation power and so energy.
Such a situation should be therefore
prevented, making the unfair daemon more desirable than the weakly
fair one.

There are many self-stabilizing algorithms proven under the
distributed unfair daemon, {\em
  e.g.},~\cite{ACDDP14,DLP11,GHIJ14}.  However, analyzes
of the stabilization time in moves is rather unusual and this may be
an important issue.
Indeed, recently, several self-stabilizing algorithms which work under
a distributed unfair daemon have been shown to have an exponential
stabilization time in moves in the worst case, {\em e.g.}, the silent
leader election algorithms from~\cite{DLP11} (see~\cite{ACDDP14}) and the Breadth-First Search (BFS) algorithm of Huang
and Chen~\cite{HC92} (see~\cite{DJ16}).

\subsection{Contribution}

We propose an efficient self-stabilizing reset algorithm working in
any anonymous connected network. Our algorithm is written in the
locally shared memory model with composite atomicity, assuming a
distributed unfair daemon, {\em i.e.}, the most general scheduling
assumption of the model. 
It is based on local checking and is fully distributed ({\em i.e.},
multi-initiator). Concurrent resets are locally initiated by processes
detecting inconsistencies, these latter being cooperative to gain
efficiency.

As a matter of fact, our algorithm makes an input algorithm
recovering a consistent global state within at most $3n$ rounds, where
$n$ is the number of processes. During a recovering, any process executes at most $3n+3$ moves. 
Our reset algorithm allows to build efficient self-stabilizing
solutions for various problems and settings. In particular, it applies
to both static and dynamic specifications. In the static case, the
self-stabilizing solution we obtain is also {\em silent}~\cite{DolevGS96}: a
silent algorithm converges within finite time to a configuration from
which the values of the communication registers used by the algorithm
remain fixed.
Silence is a desirable property. Indeed, as noted 
in~\cite{DolevGS96}, the silent property usually implies more 
simplicity in the algorithm design. Moreover, a silent algorithm  
may utilize less communication operations and communication 
bandwidth.

To show the efficiency of our method, we propose two reset-based
self-stabilizing algorithms, respectively solving the unison problem in
anonymous networks and the 1-minimal
$(f,g)$-alliance in identified networks.

Our unison algorithm has a stabilization time in $O(n)$ rounds and
$O(\Delta.n^2)$ moves. Actually, its stabilization times in round
matches the one of the previous best existing solution~\cite{BPV04c}. However,
it achieves a better stabilization time in moves, since the algorithm
in~\cite{BPV04c} stabilizes in $O(D.n^3+\alpha.n^2)$ moves (as shown
in~\cite{DP12}), where $\alpha$ is greater than the length of
the longest chordless cycle in the network.

As explained before, our 1-minimal $(f,g)$-alliance algorithm is also
silent. Its stabilization time is $O(n)$ rounds and $O(\Delta.n.m)$
moves, where $D$ is the network diameter and $m$ is the number of
edges in the network. To the best of our knowledge, until now there
was no self-stabilizing algorithm solving that problem without any
restriction on $f$ and $g$.

\subsection{Related Work}

Several reset algorithms have been proposed in the literature.  In
particular, several solutions, {\em e.g.},~\cite{APVD94c,APV91c}, have
been proposed in the I/O automata model. In this model, communications
are implemented using message-passing and assuming {\em weakly
  fairness}. Hence, move complexity cannot be evaluated in that
model. In these papers, authors additionally assume links with known
bounded capacity. In \cite{APVD94c}, authors introduce the notion of
local checking, and propose a method that, given a self-stabilizing
global reset algorithm, builds a self-stabilizing solution of any {\em
  locally checkable} problem ({\em i.e.}, a problem where
inconsistency can be locally detected) in an {\em identified}
network. The stabilization time in rounds of obtained solutions
depends on the input reset algorithm. In \cite{APV91c}, authors focus
on an restrictive class of locally checkable problems, those that are
also {\em locally correctable}. A problem is locally correctable if
the global configuration of the network can be corrected by applying
independent corrections on pair neighboring processes. Now, for
example, the 1-minimal $(f,g)$ alliance problem is not locally
correctable since there are situations in which the correction of a
single inconsistency may provoke a global correction in a domino
effect reaction. Notice also that processes are not assumed to be
identified in \cite{APV91c}, however the considered networks are {\em not
fully anonymous} either. Indeed, each link has one of its incident
processes designated as leader. Notice also that authors show a
stabilization time in $O(H)$ when the network is a tree, where $H$ is
the tree height.

Self-stabilization by power supply \cite{AB98j} also assumes
message-passing with links of known bounded capacity and process
{\em identifiers}.  Using this technique, the stabilization time is in
$O(n)$ rounds in general.  Now, only {\em static} problems, {\em e.g.}
leader election and spanning tree construction, are considered.

Fully anonymous networks are considered in \cite{AO94c} in
message-passing systems with unit-capacity links and assuming {\em weakly
fairness}. The proposed self-stabilizing reset has a memory requirement
in $O(\log^\star(n))$ bits per process. But this small complexity comes
at the price of a stabilization time in {\em $O(n\log n)$ rounds}.

Finally, Arora and Gouda have proposed a mono-initiator reset
algorithm in the locally shared memory model with composite
atomicity. Their self-stabilizing reset works in {\em identified} networks,
assuming a distributed {\em weakly fair} daemon. The stabilization time of
their solution is in {\em $O(n+\Delta.D)$ rounds}, where $\Delta$ is the
degree of the network.

\subsection{Roadmap}
The remainder of the paper is organized as follows.  In the next
section, we present the computational model and basic definitions.
In Section~\ref{sec:algo}, we present, prove, and analyze the time
complexity of our reset algorithm. In the two last sections, we
propose two efficient instances of our reset-based method, respectively
solving the unison problem in anonymous networks and the 1-minimal
$(f,g)$-alliance in identified networks.

\section{Preliminaries}\label{model}

\subsection{Network}\label{sub:network}

We consider a distributed system made of $n$ interconnected processes.
Information exchanges are assumed to be
bidirectional. Henceforth, the communication network is conveniently
modeled by a simple undirected connected graph $G=(V,E)$, where $V$ is
the set of processes and $E$ a set of $m$ edges $\{u,v\}$ representing
the ability of processes $u$ and $v$ to directly exchange information
together. We denote by $D$ the diameter of $G$, {\em i.e.}, the
maximum distance between any two pairs of processes.  For every edge
$\{u,v\}$, $u$ and $v$ are said to be {\em neighbors}. For every
process $u$, we denote by $\delta_u$ the degree of $u$ in $G$, {\em
  i.e.}, the number of its neighbors. Let $\Delta = \max_{u \in V}
\delta_u$ be the (maximum) degree of $G$.

\subsection{Computational Model}

We use the {\em composite atomicity model of computation}~\cite{D74j}
in which the processes communicate using a finite number of locally
shared registers, simply called {\em variables}.  Each process can
read its own variables and that of its neighbors, but can write only
to its own variables.  The \textit{state} of a process
is defined by the values of its variables.  A {\em
  configuration} of the system is a vector consisting of the states of
each process.

Every process~$u$ can access the states of its neighbors using a {\em
  local labeling}.  Such labeling is called {\em indirect naming} in
the literature~\cite{Sloman:1987}.  All labels of $u$'s neighbors are
stored into the set $\N(u)$.  To simplify the design of our
algorithms, we sometime consider the {\em closed neighborhood} of a
process $u$, {\em i.e.}, the set including $u$ itself and all its
neighbors. Let $\N[u]$ be the set of labels local to $u$ designating
all members of its closed neighborhood, including $u$ itself. In
particular, $\N(u) \subsetneq \N[u]$. We assume that each process $u$
can identify its local label $\alpha_u(v)$ in the sets $\N(v)$ of each
neighbor $v$ and $\N[w]$ of each member $w$ of its closed
neighborhood.  When it is clear from the context, we use, by an abuse
of notation,~$u$ to designate both the process $u$ itself, and its
local labels ({\em i.e.}, we simply use $u$ instead of~$\alpha_u(v)$
for~$v \in \N[u]$).

A \emph{distributed algorithm\/} consists of one local 
program per process. 
The \textit{program} of each process consists of a finite
set of \textit{rules} of the form\ $$\langle label \rangle\ :\ \langle guard \rangle\ \to\ \langle action \rangle$$
{\em Labels} are only used to identify rules in the reasoning.  A
\textit{guard} is a Boolean predicate involving the state of the
process and that of its neighbors.  The {\em action\/} part 
of a rule updates the state of the process.  
A rule can be executed only if its
guard evaluates to {\em true}; in this case, the rule is 
said to be {\em enabled\/}.  
A process is said to be enabled if at least one of
its rules is enabled.  We denote by 
$\mbox{\it Enabled}(\gamma)$ the
subset of processes that are enabled in configuration~$\gamma$.

When the configuration is $\gamma$ and $\mbox{\it Enabled}(\gamma)
\neq \emptyset$, a non-empty set $\mathcal X \subseteq \mbox{\it
  Enabled}(\gamma)$ is {\em activated} by a so-called {\em daemon}; then every
process of $\mathcal X$ {\em atomically} executes one of its enabled
rules,\footnote{In case of several enabled actions at the activated
  process, the choice of the executed action is nondeterministic.}
leading to a new configuration $\gamma^\prime$, and so on.  The
transition from $\gamma$ to $\gamma^\prime$ is called a {\em step}.
The possible steps induce a binary relation over the set of
configurations, denoted by $\mapsto$.  An {\em execution\/} is a
maximal sequence of configurations $e=\gamma_0\gamma_1\cdots
\gamma_i\cdots$ such that $\gamma_{i-1}\mapsto\gamma_i$ for all $i>0$.
The term ``maximal'' means that the execution is either infinite, or
ends at a {\em terminal\/} configuration in which no rule is enabled
at any process.

Each step from a configuration to another is driven by a daemon.  We
define a daemon as a predicate $\mathfrak{D}$ over executions.  A daemon $\mathfrak{D}$ may
restrain the set of possible executions (in particular, it may forbid
some steps), {\em i.e.}, only executions satisfying $\mathfrak{D}$ are possible.
We assume here the daemon is {\em distributed} and {\em unfair}.
``Distributed'' means that while the configuration is not terminal,
the daemon should select at least one enabled process, maybe
more. ``Unfair'' means that there is no fairness constraint, {\em
  i.e.}, the daemon might never select an enabled process unless it is
the only enabled process. In other words, the distributed unfair
daemon is defined by the predicate {\em true} ({\em i.e.} it is the most
general daemon), and assuming that daemon, every execution is possible
and $\mapsto$ is actually the set of all possible steps.

\subsection{Self-Stabilization and Silence}

Let ${\tt A}$ be a distributed algorithm.   Let $P$
and $P'$ be two predicates over configurations of ${\tt A}$. Let $C$ and
$C'$ be two subsets of $\mathcal C_{{\tt A}}$, the set of ${\tt A}$'s
configurations.

\begin{itemize}
\item $P$ (resp. $C$) is {\em closed by} ${\tt A}$ if for every step
  $\gamma \mapsto \gamma'$ of ${\tt A}$, $P(\gamma) \Rightarrow
  P(\gamma')$ (resp. $\gamma \in C \Rightarrow \gamma' \in C$).
  
\item ${\tt A}$ {\em converges from} $P'$ (resp. $C'$) {\em to} $P$
  (resp. $C$)  if each of its executions
   starting from a configuration satisfying $P'$ (resp. in a
  configuration of $C'$) contains a configuration satisfying $P$
  (resp. a configuration of $C$).

\item $P$ (resp. $C$) is an {\em attractor} for ${\tt A}$ if $P$ (resp. $C$) is closed by ${\tt A}$
  and ${\tt A}$ converges from $true$ (resp. from $\mathcal C_{{\tt
      A}}$) to $P$ (resp. to $C$).
\end{itemize}
Let $SP$ be a specification, {\em i.e.}, a predicate over executions.
Algorithm ${\tt A}$ is {\em self-stabilizing} for $SP$ (under the
unfair daemon) if there exists a non-empty subset of its
configurations $\mathcal L$, called the {\em legitimate}
configurations, such that $\mathcal L$ is an attractor for ${\tt A}$
and every execution of ${\tt A}$ that starts in a configuration of
$\mathcal L$ satisfies $SP$. Configurations of $\mathcal C_{{\tt A}}
\setminus \mathcal L$ are called the {\em illegitimate}
configurations.

In our model, an algorithm is {\em silent}~\cite{DolevGS96} if and
only if all its possible executions are finite. Let $SP'$ be an
predicate over configurations of ${\tt A}$. Usually, silent
self-stabilization is (equivalently) reformulated as follows.  ${\tt
  A}$ is {\em silent and self-stabilizing} for the $SP'$ if all its
executions are finite and all its terminal configurations satisfy
$SP'$. Of course, in silent self-stabilization, the set of legitimate
configurations is chosen as the set of terminal configurations.

\subsection{Time Complexity}

We measure the time complexity of an algorithm using two notions: {\em
  rounds}~\cite{DIM93,CDPV02c} and {\em moves}~\cite{D74j}. 
 We say that a process {\em moves} in
$\gamma_i\mapsto\gamma_{i+1}$ when it executes a rule in
$\gamma_i\mapsto\gamma_{i+1}$.

The definition of round uses the concept of {\em
  neutralization}: a process~$v$ is \textit{neutralized} during a
step~$\gamma_i \mapsto \gamma_{i+1}$, if~$v$ is enabled in~$\gamma_i$
but not in configuration~$\gamma_{i+1}$, and it is not activated in the step~$\gamma_i \mapsto \gamma_{i+1}$.  

Then, the rounds are
inductively defined as follows. The first round of an execution~$e =
\gamma_0\gamma_1\cdots$ is the minimal prefix~$e' = \gamma_0\cdots\gamma_j$, such that every process that is enabled
in~$\gamma_0$ either executes a rule or is neutralized during a step
of~$e'$. Let~$e''$ be the suffix $\gamma_j\gamma_{j+1}\cdots$
of~$e$.  The second round of~$e$ is the first round of~$e''$, and so
on.

The {\em stabilization time} of a self-stabilizing algorithm is
the maximum time (in moves or rounds) over every possible execution (starting from any initial
configuration) to reach a legitimate configuration.

\subsection{Composition}

We denote by ${\tt A}$ $\circ$ ${\tt B}$ the composition of the two
algorithms ${\tt A}$ and ${\tt B}$ which is the distributed algorithm where
the local program (${\tt A}$ $\circ$ ${\tt B}$)($u$), for every process
$u$, consists of all variables and rules of both ${\tt A}$($u$) and
${\tt B}$($u$).


\section{Self-Stabilizing Distributed Reset Algorithm}\label{sec:algo}

\subsection{Overview of the Algorithm}

In this section, we present our distributed cooperative reset
algorithm, called \A. The formal code of \A, for each process $u$, is
given in Algorithm~\ref{alg:A}. This algorithm aims at reinitializing
an input algorithm {\tt I} when necessary. \A is self-stabilizing in
the sense that the composition {\tt I} $\circ$ \A is self-stabilizing
for the specification of {\tt I}. Algorithm \A works in anonymous
networks and is actually is multi-initiator: a process $u$ can
initiate a reset whenever it locally detects an inconsistency in {\tt
  I}, {\em i.e.}, whenever the predicate $\neg\PICorrect(u)$ holds
({\em i.e.}, {\tt I} is locally checkable). So, several resets may
be executed concurrently. In this case, they are coordinated: a reset
may be partial since we try to prevent resets from overlapping.

\subsection{The Variables}

Each process $u$ maintains two variables in Algorithm \A: $\status_u
\in \{C,RB,RC\}$, the {\em status} of $u$ with respect to the reset,
and $\dist_u \in \mathds{N}$, the {\em distance} of $u$ in a reset.

\paragraph{Variable $\status_u$.}
If $u$ is not currently involved into a reset, then it has status $C$,
which stands for {\em correct}. Otherwise, $u$ has status either $RB$
or $RF$, which respectively mean {\em reset broadcast} and {\em reset
  feedback}. Indeed, a reset is based on a (maybe partial) {\em
  Propagation of Information with Feedback (PIF)} where
processes reset their local state in {\tt I} (using the macro
$\reset$) during the broadcast phase.  When a reset locally terminates
at process $u$ ({\em i.e.}, when $u$ goes back to status $C$ by
executing $\ruleC(u)$), each member $v$ of its closed neighborhood
satisfies $\PReset(v)$, meaning that they are in a pre-defined initial
state of {\tt I}. At the global termination of a reset, every process
$u$ involved into that reset has a state in {\tt I} which is
consistent {\em w.r.t.} that of its neighbors, {\em i.e.},
$\PICorrect(u)$ holds.  Notice that, to ensure that $\PICorrect(u)$
holds at the end of a reset and for liveness issues, we enforce each
process $u$ stops executing {\tt I} whenever a member of its closed
neighborhood (in particular, the process itself) is involved into a reset:
whenever $\neg \PClean(u)$ holds, $u$ is not allowed to execute {\tt
  I}.

\paragraph{Variable $\dist_u $.}
This variable is meaningless when $u$ is not involved into a reset
({\em i.e.}, when $u$ has status $C$). Otherwise, the distance values
are used to arrange processes involved into resets as a {\em Directed
  Acyclic Graph (DAG)}. This distributed structure allows to prevent
both livelock and deadlock.  Any process $u$ initiating a reset (using
rule $\ruleR(u)$) takes distance 0. Otherwise, when a reset is
propagated to $u$ ({\em i.e.}, when $\ruleRB(u)$ is executed),
$\dist_u$ is set to the minimum distance of a neighbor involved in a
broadcast phase plus 1; see the macro $\compute(u)$.

\subsection{Typical Execution}\label{sub:normalexec}
Assume the system starts from a configuration where, for every process
$u$, $\status_u = C$. A process $u$ detecting an inconsistency in {\tt
  I} ({\em i.e.}, when $\neg\PICorrect(u)$ holds) stops executing {\tt
  I} and initiates a reset using $\ruleR(u)$, unless one of its neighbors
$v$ is already broadcasting a reset, in which case it joins the
broadcast of some neighbor by $\ruleRB(u)$.
To initiate a reset, $u$ sets $(\status_u,\dist_u)$ to $(RB,0)$
meaning that $u$ is the root of a reset (see macro $\beRoot(u)$), and
resets its {\tt I}'s variables to an pre-defined state of {\tt I},
which satisfies $\PReset(u)$, by executing the macro $\reset(u)$.
Whenever a process $v$ has a neighbor involved in a broadcast phase of
a reset (status $RB$), it stops executing {\tt I} and joins an
existing reset using $\ruleRB(v)$, even if its state in {\em I} is
correct, ({\em i.e.}, even if $\PICorrect(v)$ holds).  To join a
reset, $v$ also switches its status to $RB$ and resets its {\tt I}'s
variables ($\reset(v)$), yet it sets $\dist_v$ to the minimum distance
of its neighbors involved in a broadcast phase plus 1; see the macro
$\compute(v)$. Hence, if the configuration of {\tt I} is not
legitimate, then within at most $n$ rounds, each process receives the
broadcast of some reset. Meanwhile, processes (temporarily) stop
executing {\tt I} until the reset terminates in their closed
neighborhood thanks to the predicate $\PClean$.

When a process $u$ involved in the broadcast phase of some reset
realizes that all its neighbors are involved into a reset ({\em i.e.},
have status $RB$ or $RF$), it initiates the feedback phase by
switching to status $RF$, using $\ruleRF(u)$. The feedback phase is
then propagated up in the DAG described by the distance value: a
broadcasting process $u$ switches to the feedback phase if each of its
neighbors $v$ has not status $C$ and if $\dist_v > \dist_u$, then $v$
has status $RF$. This way the feedback phase is propagated up into the
DAG within at most $n$ additional rounds. Once a root of some reset
has status $RF$, it can initiate the last phase of the reset: all
processes involves into the reset has to switch to status $C$, using
$\ruleC$, meaning that the reset is done. The values $C$ are
propagated down into the reset DAG within at most $n$ additional
rounds. A process $u$ can executing {\tt I} again when all members of
its closed neighborhood (that is, including $u$ itself) have status
$C$, {\em i.e.}, when it satisfies $\PClean(u)$.

Hence, overall in this execution, the system reaches a configuration
$\gamma$ where all resets are done within at most $3n$ rounds. In
$\gamma$, all processes have status $C$. However, process has not
necessarily kept a state satisfying $\PReset$ ({\em i.e.}, the initial
pre-defined state of {\tt I}) in this configuration. Indeed, some
process may have started executing {\tt I} again before
$\gamma$. However, the predicate $\PClean$ ensures that no resetting
process has been involved in these latter (partial) executions of {\tt
  I}. Hence, \A rather ensures that all processes are in {\tt I}'s
states that are coherent with each other from $\gamma$. That is,
$\gamma$ is a so-called {\em normal configuration}, where $\PClean(u)
\And \PICorrect(u)$ holds for every process $u$.

\subsection{Stabilization of the Reset}\label{err:corr}

If a process $u$ is in an incorrect state of Algorithm \A ({\em i.e.},
if $\PRa(u) \Or \PRb(u)$ holds), we proceed as for inconsistencies in Algorithm {\tt
  I}. Either it joins an existing reset (using $\ruleRB(u)$) because at
least one of its neighbors is in a broadcast phase, or it initiates its
own reset using $\ruleR(u)$. 
Notice also that starting from an arbitrary configuration, the system
may contain some reset in progress. However, similarly to the typical
execution, the system stabilizes within at most $3n$ rounds to a normal
configuration.

Algorithm \A is also efficient in moves. Indeed, in
Sections~\ref{sect:alliance} and~\ref{sect:unison} we will give two
examples of composition {\tt I} $\circ$ \A that stabilize in a
polynomial number of moves. Such complexities are mainly due to the
coordination of the resets which, in particular, guarantees that if a
process $u$ is enabled to initiate a reset ($\PRUp(u)$) or the root of
a reset with status $RB$, then it satisfies this disjunction since the
initial configuration ({\em cf.}, Theorem~\ref{theo:pseudoRoots}, page
\pageref{theo:pseudoRoots}).

\subsection{Requirements on the Input Algorithm}\label{sect:require}

According to the previous explanation, Algorithm {\tt I} should satisfy the
following prerequisites:
\begin{enumerate}
\item Algorithm {\tt I} should not write into the variables of \A, {\em i.e.},
  variables $\status_u$ and $\dist_u$, for every process
  $u$. \label{RQ1}
\item For each process $u$, Algorithm {\tt I} should provide the two input
  predicates $\PICorrect(u)$ and $\PReset(u)$ to \A, and the macro
  $\reset(u)$. Those inputs should satisfy:
\begin{enumerate}
\item $\PICorrect(u)$ does not involve any variable of \A and is closed
  by Algorithm {\tt I}. \label{RQ2}
\item $\PReset(u)$ involves neither a variable of \A nor a variable of a neighbor of $u$. \label{RQ4}

\item If $\neg \PICorrect(u) \vee \neg \PClean(u)$ holds ({\em n.b.}
  $\PClean(u)$ is defined in \A), then no rule of Algorithm {\tt I} is enabled
  at $u$. \label{RQ3}

\item If $\PReset(v)$ holds, for every $v \in \N[u]$, then
  $\PICorrect(u)$ holds.\label{RQ6}
\item If $u$ performs a move in $\gamma \mapsto \gamma'$, where, in
  particular, it modifies its variables in Algorithm {\tt I} by
  executing $\reset(u)$ (only), then $\PReset(u)$ holds in $\gamma'$.\label{RQ5}
\end{enumerate}
\end{enumerate}

\begin{algorithm}
\small  
$~$ \\[0.3cm]
\textbf{Inputs:} \\[0.1cm]
\begin{tabular}{llll}
$\bullet$ & $\PICorrect(u)$ & : & predicate from the input algorithm {\tt I}\\
$\bullet$ & $\PReset(u)$ & : & predicate from the input algorithm {\tt I}\\
$\bullet$ & $\reset(u)$ & : & macro from the input algorithm {\tt I}
\end{tabular}
\\[0.3cm]
\textbf{Variables:} \\[0.1cm]
\begin{tabular}{llll}
$\bullet$ & $\st_u \in \{ C, RB, RF \}$ & : & the status of $u$\\
$\bullet$ & $\dist_u \in \mathds{N}$ & : & the  distance value associated to $u$
\end{tabular}
\\[0.3cm]
\textbf{Predicates:} \\[0.1cm]
\begin{tabular}{llll}
$\bullet$ & $\correct(u)$ & $\equiv$ & $\status_u =  C \Rightarrow \PICorrect(u)$\\
$\bullet$ & $\PClean(u)$  & $\equiv$ & $\forall v \in \N[u], \status_u  = C$\\
$\bullet$ & $\PRa(u)$     & $\equiv$ & $\status_u = C \And \neg\PReset(u) \And (\exists v \in \N(u)\ \mid\ \status_v  = RF)$\\
$\bullet$ & $\PB(u)$      & $\equiv$ & $\st_u = C \And (\exists v \in \N(u)\ |\ \st_v  = RB)$\\
$\bullet$ & $\PF(u)$      & $\equiv$ & $\st_u = RB \And \PReset(u) \And$\\ 
          &               &          & $(\forall v \in \N(u), (\st_v  = RB \And \dist_v \leq \dist_u) \Or (\status_v  = RF 
\And \PReset(v)))$\\
$\bullet$ & $\PC(u)$      & $\equiv$ & $\status_u = RF \And$\\
          &               &          & $(\forall v \in \N[u], \PReset(v) \And ((\status_v  = RF \And
\dist_v \geq \dist_u) \Or (\status_v = C)))$\\
$\bullet$ & $\PRb(u)$     & $\equiv$ & $\status_u \neq C \And \neg\PReset(u)$ \\
$\bullet$ & $\PRUp(u)$    & $\equiv$ & $\neg\PB(u) \And (\PRa(u) \Or \PRb(u) \Or \neg\correct(u))$
\end{tabular}  
\\[0.3cm]
\textbf{Macros:}\\[0.1cm]
\begin{tabular}{llll}
$\bullet$ & $\beRoot(u)$  & : & $\status_u := RB$;  $\dist_u := 0$; \\
$\bullet$ & $\compute(u)$ & : & $\status_u := RB$; $\dist_u := \argmin_{(v \in \N(u) ~ \And ~\st_v = RB)}
(\dist_v)+1$; 
\end{tabular}  
\\[0.3cm]
{\textbf{Rules:}} \\[0.1cm]
\begin{tabular}{lllll}
$\ruleRB(u)$ & : & $\PB(u)$   & $\to$ & $\compute(u)$; $\reset(u)$;\\
$\ruleRF(u)$ & : & $\PF(u)$   & $\to$ & $\status_u := RF$; \\
$\ruleC(u)$  & : & $\PC(u)$   & $\to$ & $\status_u := C$; \\
$\ruleR(u)$  & : & $\PRUp(u)$ & $\to$ & $\beRoot(u)$;  $\reset(u)$;
\end{tabular}
\\[0.3cm]
\caption{Algorithm \A, code for every process $u$}
\label{alg:A}
\end{algorithm}

\section{Correctness and Complexity Analysis}

\subsection{Partial Correctness}

\begin{lemma}\label{lem:nonPRabc}
In any terminal configuration of \A,  $\neg \PRa(u) \And \neg \PRb(u) \And \correct(u)$ holds for every process $u$.
\end{lemma}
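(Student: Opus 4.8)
The plan is to observe that the statement is an immediate consequence of just two of the four guard negations that hold in a terminal configuration, namely those attached to rules $\ruleRB$ and $\ruleR$. First I would fix a terminal configuration $\gamma$ and an arbitrary process $u$. By definition of ``terminal'', no rule is enabled at $u$ in $\gamma$; in particular, the guards of $\ruleRB(u)$ and $\ruleR(u)$ evaluate to false, so that both $\neg\PB(u)$ and $\neg\PRUp(u)$ hold in $\gamma$.

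The heart of the argument is then simply to unfold the definition of $\PRUp$. Recall that $\PRUp(u) \equiv \neg\PB(u) \And (\PRa(u) \Or \PRb(u) \Or \neg\correct(u))$. Since $\neg\PB(u)$ already holds in $\gamma$, the leading conjunct is true, so in $\gamma$ the predicate $\PRUp(u)$ is equivalent to the disjunction $\PRa(u) \Or \PRb(u) \Or \neg\correct(u)$. Combining this equivalence with the fact that $\neg\PRUp(u)$ holds yields $\neg\bigl(\PRa(u) \Or \PRb(u) \Or \neg\correct(u)\bigr)$, which by De Morgan's laws is exactly $\neg\PRa(u) \And \neg\PRb(u) \And \correct(u)$, the desired conclusion. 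Since $u$ was arbitrary, this establishes the claim for every process.

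There is essentially no obstacle here: the lemma is a purely propositional consequence of the way the definition of $\PRUp$ packages the three ``anomaly'' conditions ($\PRa$, $\PRb$, and $\neg\correct$) behind the guard $\neg\PB$. The only point requiring a little care is to invoke the correct two guards: the role of $\neg\PB(u)$ is to discharge the leading conjunct of $\PRUp$, while $\neg\PRUp(u)$ is what forces the whole anomaly disjunction to fail. Note in particular that the negations of the remaining guards, $\PF(u)$ and $\PC(u)$, play no role in this particular lemma and need not be used.
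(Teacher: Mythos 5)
Your proof is correct and follows exactly the paper's argument: both derive $\neg\PB(u)$ and $\neg\PRUp(u)$ from the disabledness of $\ruleRB(u)$ and $\ruleR(u)$ in a terminal configuration, and then conclude $\neg\PRa(u) \And \neg\PRb(u) \And \correct(u)$ by unfolding the definition of $\PRUp$. The paper states this last implication without detail, whereas you spell out the propositional reasoning (De Morgan), but the approach is the same.
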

\begin{proof}
  Let $u$ be any process and consider any terminal configuration of
  \A.  Since $\ruleRB(u)$ and $\ruleR(u)$ are disabled, $\neg\PB(u)$
  and $\neg\PRUp(u)$ hold. Since $\neg\PB(u) \And \neg\PRUp(u)$
  implies $\neg \PRa(u) \And \neg \PRb(u) \And \correct(u)$, we are
  done.
\end{proof}

Since $\neg \PRb(u) \equiv \status_u = C \Or \PReset(u)$, we have the
following corollary.

\begin{corollary}\label{coro:nonPRB}
  In any terminal configuration of \A, $\status_u = C \Or \PReset(u)$
  holds for every process $u$.
\end{corollary}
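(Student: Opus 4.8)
The plan is to obtain this statement as an immediate consequence of Lemma~\ref{lem:nonPRabc}, exploiting only the propositional shape of the predicate $\PRb$. The single conceptual step is the logical equivalence flagged just before the statement: unfolding the definition $\PRb(u) \equiv \status_u \neq C \And \neg\PReset(u)$ and applying De~Morgan's law gives $\neg\PRb(u) \equiv \status_u = C \Or \PReset(u)$.

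Concretely, I would first fix an arbitrary process $u$ and an arbitrary terminal configuration of \A. Lemma~\ref{lem:nonPRabc} already guarantees that $\neg\PRa(u) \And \neg\PRb(u) \And \correct(u)$ holds there; in particular its middle conjunct $\neg\PRb(u)$ holds. Substituting the equivalence above then yields $\status_u = C \Or \PReset(u)$, which is exactly the claim. Since $u$ was arbitrary, the conclusion holds for every process.

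There is essentially no obstacle here: the corollary is a purely propositional weakening of the lemma, and the only thing to verify carefully is that the De~Morgan rewriting of $\neg\PRb(u)$ is faithful to the definition of $\PRb$ given in Algorithm~\ref{alg:A}. The harder content, namely that $\neg\PRb(u)$ actually holds in every terminal configuration, has already been discharged in Lemma~\ref{lem:nonPRabc} (by reading off that $\ruleRB(u)$ and $\ruleR(u)$ are disabled, so that $\neg\PB(u) \And \neg\PRUp(u)$ forces the three conjuncts), so nothing further about the dynamics of \A is needed.
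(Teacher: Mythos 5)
Your proof is correct and follows exactly the paper's route: the paper also obtains this corollary directly from Lemma~\ref{lem:nonPRabc} via the equivalence $\neg\PRb(u) \equiv \status_u = C \Or \PReset(u)$, which is precisely your De~Morgan unfolding of the definition of $\PRb$. Nothing is missing.
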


\begin{lemma}\label{lem:nonRB}
  In any terminal configuration of \A, $\status_u \neq RB$ for every
  process $u$.
\end{lemma}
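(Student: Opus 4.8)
The plan is to argue by contradiction using an extremal (maximum-distance) argument, leaning on Corollary~\ref{coro:nonPRB} to discharge all the $\PReset$ side conditions for free. First I would observe the convenient consequence of being in a terminal configuration: by Corollary~\ref{coro:nonPRB}, every process $w$ satisfies $\status_w = C \Or \PReset(w)$, so \emph{any} process whose status differs from $C$ automatically satisfies $\PReset$. This is the lever that lets me ignore the $\PReset$-clauses appearing inside $\PF$ and the $RF$-disjunct of its universal part. I would then suppose, toward a contradiction, that the set $S = \{ w \in V \mid \status_w = RB \}$ is non-empty.

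Since $V$ is finite, $S$ admits a process $u$ with maximum distance value $\dist_u$; this choice is the crux of the whole argument. Because the configuration is terminal, rule $\ruleRF(u)$ is disabled, so $\neg\PF(u)$ holds. Now the first two conjuncts of $\PF(u)$, namely $\status_u = RB$ and $\PReset(u)$, are both satisfied ($\status_u = RB$ by $u \in S$, and $\PReset(u)$ by Corollary~\ref{coro:nonPRB} since $\status_u \neq C$). Hence the falsity of $\PF(u)$ must originate in its universal clause, which means there is a ``witness'' neighbor $v \in \N(u)$ with
$$\neg\big((\status_v = RB \And \dist_v \leq \dist_u) \Or (\status_v = RF \And \PReset(v))\big).$$

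The remaining step is an exhaustive case analysis on $\status_v \in \{C,RB,RF\}$. If $\status_v = C$, then since $u \in \N(v)$ has $\status_u = RB$, the predicate $\PB(v)$ holds, so $\ruleRB(v)$ is enabled, contradicting that the configuration is terminal. If $\status_v = RF$, then Corollary~\ref{coro:nonPRB} gives $\PReset(v)$, so the disjunct $(\status_v = RF \And \PReset(v))$ holds and $v$ is not in fact a witness, a contradiction. Therefore $\status_v = RB$, and being a witness forces $\neg(\dist_v \leq \dist_u)$, i.e. $\dist_v > \dist_u$; but then $v \in S$ with $\dist_v > \dist_u$ contradicts the maximality of $\dist_u$. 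Every case being impossible, $S = \emptyset$, which is exactly the claim.

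The main obstacle I anticipate is bookkeeping rather than conceptual: I must be careful that the case analysis on $\status_v$ is genuinely exhaustive and that in each branch I invoke Corollary~\ref{coro:nonPRB} correctly to retire the $\PReset$ obligations, since overlooking the $RF$ disjunct or mis-stating the negation $\dist_v > \dist_u$ would break the extremality contradiction. The choice of $u$ as a \emph{maximum}-distance element (using finiteness of $V$) is what converts the ``witness has larger distance'' conclusion into an outright contradiction, so I would state that selection explicitly up front.
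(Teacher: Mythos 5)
Your proof is correct and follows essentially the same argument as the paper: choose a process $u$ with $\status_u = RB$ of maximum distance, use Corollary~\ref{coro:nonPRB} to obtain the $\PReset$ conditions, and observe that a status-$C$ neighbor would have $\ruleRB$ enabled, so that $\ruleRF(u)$ must be enabled, contradicting terminality. The only difference is presentational: the paper directly establishes that $\PF(u)$ holds, whereas you start from $\neg\PF(u)$ and refute each possible witness neighbor, which is the same reasoning read contrapositively.
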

\begin{proof}
  Assume, by the contradiction, that some process $u$ satisfies
  $\status_u = RB$ in a terminal configuration of \A. Without the loss
  of generality, assume $u$ is a process such that $\status_u = RB$
  with $\dist_u$ maximum. First, $\PReset(u)$ holds by
  Corollary~\ref{coro:nonPRB}.
  Then, every neighbor $v$ of $u$ satisfies $\status_v \neq C$, since
  otherwise $\ruleRB(v)$ is enabled.  So, every $v$ satisfies
  $\status_v \in \{RB,RF\}$, $\status_v = RB \Rightarrow dist_v \leq
  \dist_u$ (by definition of $u$), and $\status_v = RF \Rightarrow
  \PReset(v)$ (by Corollary~\ref{coro:nonPRB}). Hence, $\ruleRF(u)$ is
  enabled, a contradiction.
\end{proof}

\begin{lemma}\label{lem:nonRF}
  In any terminal configuration of \A, $\status_u \neq RF$ for every
  process $u$.
\end{lemma}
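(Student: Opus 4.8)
The plan is to mirror the argument used for Lemma~\ref{lem:nonRB}, but now selecting the process of \emph{minimum} distance, since in the feedback phase the enabling predicate $\PC$ of rule $\ruleC$ compares distances in the direction opposite to $\PF$. Concretely, I would argue by contradiction: suppose some process has status $RF$ in a terminal configuration of \A, and let $u$ be such a process with $\dist_u$ minimum among all processes of status $RF$. The goal is to show that $\PC(u)$ holds, so that $\ruleC(u)$ is enabled, contradicting terminality.

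First I would dispatch the status/distance clause of $\PC(u)$. By Lemma~\ref{lem:nonRB}, no process has status $RB$ in a terminal configuration, so every $v \in \N[u]$ satisfies $\status_v \in \{C, RF\}$. If $\status_v = C$, the disjunct $\status_v = C$ of $\PC(u)$ is immediately satisfied. If $\status_v = RF$, then by minimality of $\dist_u$ we have $\dist_v \geq \dist_u$ (this also covers $v = u$, since $\dist_u \geq \dist_u$), so the first disjunct holds. Hence the status/distance part of $\PC(u)$ holds for every $v \in \N[u]$.

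The crux is to establish $\PReset(v)$ for every $v \in \N[u]$, which is the part requiring care. For $u$ itself and for every neighbor $v$ with $\status_v = RF$, $\PReset$ follows directly from Corollary~\ref{coro:nonPRB}, because $\status_v = RF \neq C$ forces $\PReset(v)$. The only delicate case is a neighbor $v$ with $\status_v = C$, where Corollary~\ref{coro:nonPRB} is vacuous. To handle it, I would invoke $\neg\PRa(v)$ from Lemma~\ref{lem:nonPRabc}. Since $u \in \N(v)$ has status $RF$, the clause $\exists w \in \N(v) \mid \status_w = RF$ of $\PRa(v)$ is witnessed by $u$; combined with $\status_v = C$, if $\PReset(v)$ failed then $\PRa(v)$ would hold, contradicting Lemma~\ref{lem:nonPRabc}. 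Thus $\PReset(v)$ holds in this case as well.

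Combining both parts, $\PC(u)$ holds, so $\ruleC(u)$ is enabled at $u$, contradicting terminality, which completes the proof. The main obstacle is precisely the status-$C$ neighbor case: one must notice that the chosen minimum-distance $RF$ process $u$ is itself the $RF$-witness that would activate $\PRa$ at such a neighbor, so that the absence of $\PReset$ there is ruled out by Lemma~\ref{lem:nonPRabc} rather than by Corollary~\ref{coro:nonPRB}.
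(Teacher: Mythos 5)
Your proof is correct and follows essentially the same route as the paper's: pick the $RF$ process $u$ of minimum distance, exclude $RB$ neighbors via Lemma~\ref{lem:nonRB}, derive $\PReset$ for status-$C$ neighbors from $\neg\PRa$ (Lemma~\ref{lem:nonPRabc}) with $u$ as the $RF$-witness, and handle $RF$ members of $\N[u]$ via minimality and Corollary~\ref{coro:nonPRB}, concluding that $\ruleC(u)$ is enabled. Your version is even slightly more explicit than the paper's in noting that $\PReset(u)$ itself is needed (since $\PC(u)$ quantifies over $\N[u]$, not just $\N(u)$) and is supplied by Corollary~\ref{coro:nonPRB}.
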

\begin{proof}
  Assume, by the contradiction, that some process $u$ satisfies
  $\status_u = RF$ in a terminal configuration of \A. Without the loss
  of generality, assume $u$ is a process such that $\status_u = RF$
  with $\dist_u$ minimum. First, every neighbor $v$ of $u$ satisfies
  $\status_v \neq RB$, by Lemma~\ref{lem:nonRB}. Then, every neighbor
  $v$ of $u$ such that $\status_v = C$ also satisfies $\PReset(v)$,
  since otherwise $\PRa(v)$ holds, contradicting then
  Lemma~\ref{lem:nonPRabc}. Finally, by definition of $u$ and by
  Corollary~\ref{coro:nonPRB}, every neighbor $v$ of $u$ such that
  $\status_v = RF$ both satisfies $dist_v \geq \dist_u$ and
  $\PReset(v)$. Hence, $\ruleC(u)$ is enabled, a contradiction.
\end{proof}

\begin{theorem}\label{theo:termA}
  For every configuration $\gamma$ of \A, $\gamma$ is terminal if and
  only if $\PClean(u)
  \And \PICorrect(u)$ holds in $\gamma$, for every process $u$.
\end{theorem}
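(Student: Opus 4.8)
The plan is to prove the two implications separately, leaning on the four preceding results for the forward direction and performing a direct rule-by-rule inspection for the converse.

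For the ($\Rightarrow$) direction, I would start from a terminal configuration $\gamma$ and first observe that Lemmas~\ref{lem:nonRB} and~\ref{lem:nonRF} together force $\status_u = C$ for every process $u$. Since $\PClean(u)$ requires every member of the closed neighborhood $\N[u]$ to have status $C$, this immediately yields $\PClean(u)$ for every $u$. To obtain $\PICorrect(u)$, I would invoke Lemma~\ref{lem:nonPRabc}, which gives $\correct(u)$; combined with $\status_u = C$ and the definition $\correct(u) \equiv (\status_u = C \Rightarrow \PICorrect(u))$, this delivers $\PICorrect(u)$. Hence $\PClean(u) \And \PICorrect(u)$ holds for every $u$.

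For the ($\Leftarrow$) direction, I would assume $\PClean(u) \And \PICorrect(u)$ holds for every process $u$ and show that no rule is enabled anywhere. The key first step is that $\PClean(u)$ for all $u$ means that every process has status $C$. Then I would check the four guards in turn. Since no process has status $RB$, the guard $\PB(u)$ of $\ruleRB$ is false; since no process has status $RB$ or $RF$, the guards $\PF(u)$ and $\PC(u)$ of $\ruleRF$ and $\ruleC$ are false, as they require $\status_u = RB$ and $\status_u = RF$ respectively. For $\ruleR$, whose guard is $\PRUp(u) \equiv \neg\PB(u) \And (\PRa(u) \Or \PRb(u) \Or \neg\correct(u))$, I would argue each disjunct fails: $\PRa(u)$ needs a neighbor with status $RF$ (none exists), $\PRb(u)$ needs $\status_u \neq C$ (false here), and $\neg\correct(u)$ fails because $\status_u = C$ together with the hypothesis $\PICorrect(u)$ makes $\correct(u)$ true. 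Thus $\PRUp(u)$ is false and $\ruleR$ is disabled. As all four rules are disabled at every process, $\gamma$ is terminal.

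I do not anticipate a genuine obstacle: the statement essentially repackages the four partial-correctness lemmas into a clean characterization of terminal configurations. The only point demanding care is the converse direction, where I must be exhaustive over all four rules and, in particular, correctly unfold the nested disjunction inside $\PRUp$; overlooking the coupling between $\status_u = C$ and $\neg\correct(u)$, which hinges on the hypothesis $\PICorrect(u)$, would be the easiest mistake to make.
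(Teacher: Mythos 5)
Your proposal is correct and follows essentially the same route as the paper's proof: the forward direction combines Lemmas~\ref{lem:nonRB} and~\ref{lem:nonRF} (to get $\status_u = C$, hence $\PClean(u)$) with Lemma~\ref{lem:nonPRabc} (to get $\correct(u)$ and hence $\PICorrect(u)$), and the converse checks that all four guards are disabled once every process has status $C$ and satisfies $\PICorrect(u)$, including the careful unfolding of $\PRUp(u)$. No gaps; this matches the paper's argument step for step.
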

\begin{proof}
  Let $u$ be any process and assume $\gamma$ is terminal. By
  Lemmas~\ref{lem:nonRB} and~\ref{lem:nonRF}, $\status_u =C$ holds in
  $\gamma$. So, $\PClean(u)$ holds in $\gamma$. Moreover, since
  $\correct(u)$ holds (Lemma~\ref{lem:nonPRabc}), $\PICorrect(u)$ also
  holds in $\gamma$, and we are done.

  Assume now that for every process $u$, $\PClean(u) \And
  \PICorrect(u)$ holds in $\gamma$. Then, $\status_u =C$ for every
  process $u$, and so $\ruleC(u)$, $\ruleRB(u)$, and $\ruleRF(u)$ are
  disabled for every $u$. Then, since every process has status $C$, 
  $\neg \PRa(u) \And \neg \PRb(u)$ holds, moreover, $\PICorrect(u)$ implies
  $\correct(u)$, so $\ruleR(u)$ is also disabled in $\gamma$. Hence $\gamma$ is
  terminal, and we are done.
\end{proof}

\subsection{Termination}

From Requirements~\ref{RQ1} and~\ref{RQ2}, we know that Algorithm
{\tt I} does not write into $\status_u$ and $\PICorrect(u)$ is closed
by {\tt I}, for every process $u$. Hence follows.

\begin{remark}\label{rem:cor:close:I}
  For every process $u$, predicate $\correct(u)$ (defined in \A) is
  closed by  {\tt I}.
\end{remark}

Requirements~\ref{RQ1},~\ref{RQ4}, and~\ref{RQ3} ensures the
following property.

\begin{lemma}\label{lem:closedI}
  For every process $u$, predicates $\neg\PRa(u)$, $\neg\PRb(u)$, and
  $\PB(u)$ are closed by {\tt I}.
\end{lemma}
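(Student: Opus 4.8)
The plan is to exploit the three cited requirements in a uniform way. First I would record the immediate consequence of Requirement~\ref{RQ1}: along any step $\gamma \mapsto \gamma'$ consisting solely of {\tt I}-moves, no process writes its variables $\status$ or $\dist$, so $\status_w$ and $\dist_w$ coincide in $\gamma$ and $\gamma'$ for every process $w$. Second, Requirement~\ref{RQ4} says the truth value of $\PReset(w)$ depends only on the {\tt I}-variables of $w$ itself, hence $\PReset(w)$ can change across the step \emph{only if $w$ actually moves}. These two facts reduce the whole lemma to controlling which processes move.

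Since $\PB(u)$ is a predicate over $\status$-variables only, it is unchanged by any {\tt I}-step by the first observation; thus $\PB(u)$, and therefore $\neg\PB(u)$ as well, is trivially closed by {\tt I}. It remains to treat $\neg\PRa(u)$ and $\neg\PRb(u)$, which I would rewrite as disjunctions: $\neg\PRb(u) \equiv (\status_u = C) \Or \PReset(u)$, and $\neg\PRa(u) \equiv (\status_u \neq C) \Or \PReset(u) \Or (\forall v \in \N(u),\ \status_v \neq RF)$. In both disjunctions the status-based disjuncts are preserved for free by the first observation, so closure can only fail if the witnessing disjunct is $\PReset(u)$ while all the others are false in $\gamma$.

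The heart of the argument is then a single freezing claim: if $\neg\PClean(u)$ holds in $\gamma$, that is, if some member of $\N[u]$ has status different from $C$, then by Requirement~\ref{RQ3} no rule of {\tt I} is enabled at $u$, so $u$ does not move during the step and, by the second observation, $\PReset(u)$ is preserved. I would apply this in each critical case. For $\neg\PRb(u)$, the critical case has $\status_u \neq C$ (with $\PReset(u)$ holding), whence $u \in \N[u]$ witnesses $\neg\PClean(u)$, so $\PReset(u)$ survives the step and $\neg\PRb(u)$ holds in $\gamma'$. For $\neg\PRa(u)$, the critical case has $\status_u = C$, $\PReset(u)$, and a neighbor $v$ with $\status_v = RF \neq C$; this neighbor again yields $\neg\PClean(u)$, so $\PReset(u)$ is preserved and $\neg\PRa(u)$ holds in $\gamma'$.

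The only subtle point, and the step I expect to be the main obstacle, is making precise that $\PReset(u)$ is immune to {\tt I}-moves performed at \emph{other} processes: this is exactly what Requirement~\ref{RQ4} guarantees through its clause that $\PReset(u)$ involves no variable of a neighbor of $u$, and it is essential to invoke it rather than merely Requirement~\ref{RQ1}. Once that is in place, everything else is a routine case split driven by whether $\status_u = C$, using only the invariance of the $\status$-variables under {\tt I}.
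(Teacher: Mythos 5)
Your proposal is correct and follows essentially the same route as the paper's proof: $\PB(u)$ is dispatched directly by Requirement~\ref{RQ1}, and for $\neg\PRa(u)$ and $\neg\PRb(u)$ you case-split on whether a status-based disjunct witnesses the predicate (preserved by Requirement~\ref{RQ1}) or $\PReset(u)$ does, in which case $\neg\PClean(u)$ together with Requirement~\ref{RQ3} freezes $u$ and Requirement~\ref{RQ4} preserves $\PReset(u)$. Your explicit emphasis on Requirement~\ref{RQ4} being what makes $\PReset(u)$ immune to moves of other processes is a point the paper leaves partly implicit in the $\neg\PRa$ case, but the argument is the same.
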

\begin{proof}
  Let $\gamma \mapsto \gamma'$ be any step of ${\tt I}$.

\begin{itemize}
\item Assume that  $\neg\PRa(u)$ holds at some process $p$ in $\gamma$. 

  If $\status_u \neq C \vee (\forall v \in \N(u)\ \mid\ \status_v \neq
  RF)$ in $\gamma$, then $\status_u \neq C \vee (\forall v \in \N(u)\
  \mid\ \status_v \neq RF)$ still holds in $\gamma'$ by Requirement~\ref{RQ1}, and we are done.

  Otherwise, $\status_u = C \And \PReset(u) \And (\exists v \in \N(u)\
  \mid\ \status_v = RF)$ holds in $\gamma$. 
In particular, $\neg
  \PClean(u)$ holds in $\gamma$. Hence,
 no rule of {\tt I} is enabled at $u$ in $\gamma$, by
  Requirement~\ref{RQ3}. Consequently, $\PReset(u)$ still holds in $\gamma'$. Since, $\PReset(u)$ implies $\neg\PRa(u)$, we are done.

\item Assume that $\neg\PRb(u)$ holds at some process $u$ in $\gamma$.
  If $\status_u = C$ holds in $\gamma$, then $\status_u = C$ holds in
  $\gamma'$ by Requirement~\ref{RQ1}, and so $\neg\PRb(u)$ still holds
  in $\gamma'$. Otherwise, $\status_u \neq C \wedge \PReset(u)$ holds
  in $\gamma$. In particular, $\neg \PClean(u)$ holds in $\gamma$.
  Hence, no rule of {\tt I} is enabled at $u$, by
  Requirement~\ref{RQ3}, and by Requirement \ref{RQ4}, $\PReset(u)$,
  and so $\neg\PRb(u)$, still holds in $\gamma'$.

\item By Requirement~\ref{RQ1}, $\PB(u)$ is closed by {\tt I}.
\end{itemize}
\end{proof}

Recall that two rules are {\em mutually exclusive} if there is no
configuration $\gamma$ and no process $u$ such that both rules at
enabled $u$ in $\gamma$.  Two algorithms are {\em mutually exclusive}
if their respective rules are pairwise mutually exclusive.  Now,
whenever a process $u$ is enabled in \A, $\neg \PICorrect(u) \vee \neg
\PClean(u)$ holds and, by Requirement \ref{RQ3}, no rule of {\tt I} is
enabled at $u$. Hence, follows.

\begin{remark}\label{rem:mutex}
  Algorithms \A and {\tt I} are mutually exclusive.
\end{remark}

\begin{lemma}\label{lem:mutex:2}
  Rules of Algorithm \A are pairwise mutually exclusive.
\end{lemma}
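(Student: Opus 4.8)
The plan is a direct pairwise analysis of the four guards $\PB(u)$, $\PF(u)$, $\PC(u)$, and $\PRUp(u)$ of \A, establishing for each of the six pairs that the two predicates cannot hold simultaneously at a single process $u$ in any configuration. The observation that dispatches most pairs at once is that each of the first three guards pins $\status_u$ to a distinct value: $\PB(u)$ forces $\status_u = C$, $\PF(u)$ forces $\status_u = RB$, and $\PC(u)$ forces $\status_u = RF$. Since these three values are pairwise distinct, the pairs $(\ruleRB, \ruleRF)$, $(\ruleRB, \ruleC)$, and $(\ruleRF, \ruleC)$ are immediately mutually exclusive.

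For the three pairs involving $\ruleR$, I would first note that $\PRUp(u)$ contains $\neg\PB(u)$ as a conjunct, so $\ruleR$ and $\ruleRB$ are trivially mutually exclusive. The remaining two pairs, $(\ruleR, \ruleRF)$ and $(\ruleR, \ruleC)$, are where a little work is needed. The strategy is to show that whenever $\status_u \neq C$ and $\PReset(u)$ both hold, the disjunction $\PRa(u) \Or \PRb(u) \Or \neg\correct(u)$ driving $\PRUp(u)$ is false, so $\PRUp(u)$ is false. Indeed, $\PRa(u)$ requires $\status_u = C$; $\PRb(u)$ requires $\neg\PReset(u)$; and $\correct(u)$, being the implication $\status_u = C \Rightarrow \PICorrect(u)$, holds vacuously when $\status_u \neq C$, so $\neg\correct(u)$ is false. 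It then suffices to exhibit $\status_u \neq C$ together with $\PReset(u)$ in each of the two guards: $\PF(u)$ supplies both directly, while $\PC(u)$ gives $\status_u = RF$ and, by instantiating its universal quantifier over $\N[u]$ at $v = u$ (recall $u \in \N[u]$), also yields $\PReset(u)$.

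I expect the only subtle point to be the extraction of $\PReset(u)$ from $\PC(u)$: one must remember that the quantification in $\PC(u)$ ranges over the \emph{closed} neighborhood $\N[u]$, so taking $v = u$ is legitimate and gives $\PReset(u)$ for free; without this, the pair $(\ruleR, \ruleC)$ would not close. Everything else is a mechanical check against the definitions of the predicates, and the argument is purely syntactic in the guards of \A, requiring no appeal to the earlier lemmas.
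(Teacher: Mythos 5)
Your proof is correct and follows essentially the same route as the paper's: a direct syntactic check of the guards, pinning $\status_u$ to distinct values to separate $\ruleRB$, $\ruleRF$, $\ruleC$ pairwise, and using the conjunct $\neg\PB(u)$ inside $\PRUp(u)$ for the pair with $\ruleRB$. Your unified observation that $\status_u \neq C \And \PReset(u)$ falsifies $\PRUp(u)$ is precisely the contrapositive of the paper's remark that $\PRUp(u)$ implies $\status_u = C \Or \neg\PReset(u)$, applied to both remaining pairs at once instead of via two separate mini-arguments (including the same instantiation of $\PC(u)$'s quantifier at $v = u$ to extract $\PReset(u)$).
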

\begin{proof}
  Since $\PB(u)$ implies $\status_u = C$, $\PF(u)$ implies $\status_u
  = RB$, and $\PC(u)$ implies $\status_u = RF$, we can conclude that
  $\ruleRB(u)$, $\ruleRF(u)$, and $\ruleC(u)$ are pairwise mutually
  exclusive. 

  Then, since $\PRUp(u)$ implies $\neg \PB(u)$, rules $\ruleR(u)$ and
  $\ruleRB(u)$ are mutually exclusive.

  $\PC(u)$ implies $\correct(u) \wedge \PReset(u)$ which, in turn,
  implies $\neg \PRUp(u)$. Hence, $\ruleR(u)$ and $\ruleC(u)$ are
  mutually exclusive.

  $\PF(u)$ implies $\status_u = RB \wedge \PReset(u)$. Now, $\PRUp(u)$
  implies $\status_u = C \vee \neg\PReset(u)$.  Hence, $\ruleR(u)$ and
  $\ruleRF(u)$ are mutually exclusive.
\end{proof}

\begin{lemma}
\label{lem:PRa}
For every process $u$, predicates $\neg\PRa(u)$ and $\neg\PRb(u)$ are
closed by {\tt I} $\circ$ \A.
\end{lemma}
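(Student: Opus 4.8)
The plan is to fix an arbitrary step $\gamma \mapsto \gamma'$ of {\tt I} $\circ$ \A and a process $u$, and to prove the closure of each of the two predicates separately by analyzing what $u$ does during the step together with what its neighbors may do. The subtlety, compared with Lemma~\ref{lem:closedI}, is that a step of the composition may \emph{mix} moves: $u$ or one of its neighbors may execute a rule of {\tt I} while, simultaneously, another process executes a rule of \A. Hence one cannot merely combine closure of each predicate under {\tt I} alone and under \A alone; the interaction between $u$ and its neighbors must be controlled directly. I will rely on two structural facts read off from the code and the requirements: first, a process changes its status to $RF$ only by executing $\ruleRF$ and to $C$ only by executing $\ruleC$, while from status $C$ the only \A-rules that can fire ($\ruleRB$ and $\ruleR$) set the status to $RB$; second, by Requirement~\ref{RQ1} Algorithm {\tt I} never writes $\status$, and by Requirement~\ref{RQ4} the predicate $\PReset(u)$ depends only on $u$'s own variables in {\tt I}.

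For $\neg\PRb$, I would observe that $\PRb(u)$ depends only on $u$'s own state, so only $u$'s own action matters. If $u$ does not move, the claim is immediate. If $u$ executes $\ruleC$, then $\status_u = C$ in $\gamma'$; if $u$ executes $\ruleRB$ or $\ruleR$, then $\PReset(u)$ holds in $\gamma'$ by Requirement~\ref{RQ5}; if $u$ executes $\ruleRF$, then $\PF(u)$ held in $\gamma$, hence so did $\PReset(u)$, which persists since $\ruleRF$ leaves the variables of {\tt I} untouched. Finally, if $u$ executes a rule of {\tt I}, then $\PClean(u)$ must hold in $\gamma$ (otherwise Requirement~\ref{RQ3} forbids any rule of {\tt I} at $u$), so $\status_u = C$, which is preserved because {\tt I} does not write $\status$. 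In every case $\neg\PRb(u)$ holds in $\gamma'$.

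For $\neg\PRa$, I would argue by contradiction: suppose $\neg\PRa(u)$ holds in $\gamma$ but $\PRa(u)$ holds in $\gamma'$, so that $\status_u = C$ and $\neg\PReset(u)$ in $\gamma'$, and some $v_0 \in \N(u)$ has $\status_{v_0} = RF$ in $\gamma'$. If $\status_u \neq C$ in $\gamma$, then $u$ must have executed $\ruleC$, and $\PC(u)$ held in $\gamma$; instantiating its universally quantified clause at $v = u$ yields $\PReset(u)$ in $\gamma$, which persists to $\gamma'$ since $\ruleC$ does not touch any variable of {\tt I}, contradicting $\neg\PReset(u)$. Hence $\status_u = C$ in $\gamma$ as well, and then $u$ cannot have done an \A-move (from status $C$ such a move would set the status to $RB$), so $u$ either stays put or executes a rule of {\tt I}. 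If $u$ executes a rule of {\tt I}, then $\PClean(u)$ holds in $\gamma$, so every neighbor has status $C$ in $\gamma$; but a process of status $C$ cannot reach $RF$ in a single step, so $v_0$ cannot have status $RF$ in $\gamma'$, a contradiction. If $u$ stays put, then the value of $\PReset(u)$ is unchanged, so $\neg\PReset(u)$ already holds in $\gamma$; since $\status_u = C$ and $\neg\PRa(u)$ hold in $\gamma$, every neighbor $v$ of $u$ satisfies $\status_v \neq RF$ in $\gamma$, so $v_0$ must have executed $\ruleRF$, which requires $\PF(v_0)$ in $\gamma$ --- but $\PF(v_0)$ forces every neighbor of $v_0$, in particular $u$, to have status $RB$ or $RF$, contradicting $\status_u = C$.

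The step I expect to be the main obstacle is precisely this last interaction: excluding that a neighbor $v_0$ slips into status $RF$ during the step while $u$ remains a $C$-process with $\neg\PReset(u)$. The key observation that unlocks it is that the guard $\PF$ of $\ruleRF$ cannot hold at a process having a neighbor of status $C$; since $u$ is such a neighbor, no neighbor of $u$ can enter $RF$ while $\status_u = C$. Combined with the fact that from status $C$ a process can move only to $RB$ (never directly to $RF$), this is what makes the two delicate cases --- $u$ performing a move of {\tt I} and $u$ staying put --- go through.
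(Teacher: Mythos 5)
Your proof is correct, and it rests on the same key facts as the paper's proof: Requirement~\ref{RQ5} for moves by $\ruleRB$ or $\ruleR$; the occurrence of $\PReset(u)$ in the guards $\PF(u)$ and $\PC(u)$, together with the fact that $\ruleRF$ and $\ruleC$ leave the variables of {\tt I} untouched; and, for the $\neg\PRa(u)$ part, the observation that a neighbor can acquire status $RF$ only via $\ruleRF$, whose guard excludes having any neighbor of status $C$. The difference is organizational. The paper first reduces the claim to closure under \A alone, by combining Remark~\ref{rem:mutex} (mutual exclusivity of \A and {\tt I}) with Lemma~\ref{lem:closedI} (closure under {\tt I}), and then performs a direct case analysis on which rule of \A the process $u$ executes; you instead analyze an arbitrary step of the composition in one pass, which obliges you to treat explicitly the mixed situation where $u$ executes a rule of {\tt I} while a neighbor simultaneously executes a rule of \A --- your argument that Requirement~\ref{RQ3} forces $\PClean(u)$, hence all neighbors have status $C$ in $\gamma$ and none can reach $RF$ in a single step. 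This is a genuine merit of your version: strictly speaking, closure under pure {\tt I}-steps plus closure under pure \A-steps does not by itself cover mixed steps of {\tt I} $\circ$ \A, so the paper's reduction is slightly loose, and your $\PClean$-based case supplies exactly the missing glue. What the paper's route buys is brevity and reuse (Lemma~\ref{lem:closedI} serves several other closure proofs); what yours buys is a self-contained and fully rigorous treatment of the composed step, at the price of re-deriving in place some per-process facts that Lemma~\ref{lem:closedI} already provides. Your choice of a proof by contradiction for $\neg\PRa(u)$, versus the paper's direct case analysis, is immaterial.
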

\begin{proof}
  By Remark~\ref{rem:mutex} and Lemma~\ref{lem:closedI}, to prove this
  lemma it is sufficient to show that $\neg\PRa(u)$ and $\neg\PRb(u)$
  are closed by \A, for every process $u$.

  Predicate $\neg\PRb(u)$ only depends on variables of $u$ by
  Requirement \ref{RQ4}. So, if $u$ does not move, $\neg\PRb(u)$ still
  holds. Assume $\neg\PRb(u)$ holds in $\gamma$ and $u$ executes a
  rule of \A in $\gamma \mapsto \gamma'$. If $u$ executes $\ruleRB(u)$
  or $\ruleR(u)$, then $u$ modifies its variables in {\tt I} by executing
  $\reset(u)$. Hence, in both cases, $\PReset(u)$ holds in $\gamma'$
  by Requirement \ref{RQ5} and as $\PReset(u)$ implies $\neg\PRb(u)$,
  we are done.  Otherwise, $u$ executes $\ruleRF(u)$ or
  $\ruleC(u)$. In both cases, $\PReset(u)$ holds in $\gamma$ and so in
  $\gamma'$ by Requirement \ref{RQ4}, and we are done.

  Assume now that the predicate $\neg\PRa(u)$ holds in $\gamma$ and
  consider any step $\gamma \mapsto \gamma'$. Assume first that $u$
  moves in $\gamma \mapsto \gamma'$. If $u$ executes $\ruleRB(u)$,
  $\ruleRF(u)$, or $\ruleR(u)$, then $\status_u \neq C$ in $\gamma'$,
  hence $\neg\PRa(u)$ holds in $\gamma'$. If $u$ executes $\ruleC(u)$
  in $\gamma \mapsto \gamma'$, $u$ satisfies $\PReset(u)$ in $\gamma$,
  and so in $\gamma'$ by Requirement \ref{RQ4}. Since $\PReset(u)$
  implies $\neg\PRa(u)$, we are done. Assume now that $u$ does not
  move in $\gamma \mapsto \gamma'$. In this case, $\PRa(u)$ may become
  true only if at least a neighbor $v$ of $u$ switches to status $RF$,
  by executing $\ruleRF(v)$. Now, in this case, $\status_u \neq C$ in
  $\gamma$, and so in $\gamma'$. Consequently, $\neg\PRa(u)$ still
  holds in $\gamma'$.
\end{proof}

\begin{theorem}
\label{theo:correct}
For every process $u$, $\correct(u)$ $\Or$ $\PB(u)$ is closed by {\tt
  I} $\circ$ \A.
\end{theorem}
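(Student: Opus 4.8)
The plan is to reason by contradiction on a single step $\gamma \mapsto \gamma'$ of {\tt I} $\circ$ \A. Suppose $\correct(u) \Or \PB(u)$ holds in $\gamma$ but fails in $\gamma'$. Unfolding the two definitions, failure in $\gamma'$ means exactly that $\status_u = C$, $\neg\PICorrect(u)$, and $\status_v \neq RB$ for every $v \in \N(u)$, all in $\gamma'$. I would first pin down the status of $u$ in $\gamma$. Since $\status_u = C$ in $\gamma'$, and the only rule of \A that assigns status $C$ is $\ruleC$ (whose guard requires $\status_u = RF$), while {\tt I} never writes $\status_u$ (Requirement~\ref{RQ1}), there are exactly two possibilities: either $u$ executes $\ruleC(u)$ in the step, so $\status_u = RF$ in $\gamma$; or $u$ makes no move of \A, so $\status_u = C$ already in $\gamma$ (it is idle or moves with {\tt I}). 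The whole argument splits along this dichotomy.

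In the first case, $\ruleC(u)$ is enabled in $\gamma$, so $\PC(u)$ holds, which in particular gives $\PReset(v)$ for every $v \in \N[u]$ in $\gamma$. I would show each such $\PReset(v)$ survives into $\gamma'$. Since $\status_u = RF \neq C$ in $\gamma$, we have $\neg\PClean(v)$ for every $v \in \N[u]$, so by Requirement~\ref{RQ3} no rule of {\tt I} is enabled at any $v \in \N[u]$ in $\gamma$; hence $v$'s {\tt I}-variables can change only through $\reset(v)$ when $v$ executes $\ruleRB(v)$ or $\ruleR(v)$, in which case $\PReset(v)$ is re-established in $\gamma'$ by Requirement~\ref{RQ5}, and are otherwise left untouched, so that $\PReset(v)$, being local to $v$ (Requirement~\ref{RQ4}), is preserved. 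Thus $\PReset(v)$ holds for all $v \in \N[u]$ in $\gamma'$, and Requirement~\ref{RQ6} yields $\PICorrect(u)$ in $\gamma'$, i.e. $\correct(u)$ in $\gamma'$ --- a contradiction.

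In the second case $\status_u = C$ in $\gamma$, so the hypothesis splits further. If $\PB(u)$ holds in $\gamma$, fix a neighbor $w$ with $\status_w = RB$ in $\gamma$; I would check that $w$ cannot leave status $RB$. Indeed $u \in \N(w)$ has $\status_u = C$, which falsifies the universally quantified conjunct of $\PF(w)$, so $\ruleRF(w)$ is disabled; the only other \A rule enabled at a status-$RB$ process is $\ruleR$, which leaves the status at $RB$ (via $\beRoot$), and $\neg\PClean(w)$ forbids any {\tt I} move. Hence $\status_w = RB$ in $\gamma'$, so $\PB(u)$ still holds in $\gamma'$ --- contradiction. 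If instead $\PICorrect(u)$ holds in $\gamma$, I aim to show it is preserved. The structural key is that no {\tt I}-variable relevant to $\PICorrect(u)$ can change through \A: $u$ cannot make an \A move without leaving status $C$, and no neighbor can execute $\ruleRB$ or $\ruleR$, since both end in status $RB$, contradicting $\status_v \neq RB$ for $v \in \N(u)$ in $\gamma'$. Consequently every change to the {\tt I}-variables of $\N[u]$ is produced by {\tt I} moves, and since $\PICorrect(u)$ depends only on these variables (Requirement~\ref{RQ2}), the step restricts, on the relevant variables, to a pure {\tt I}-step; closure of $\PICorrect(u)$ under {\tt I} (Requirement~\ref{RQ2}) then gives $\PICorrect(u)$ in $\gamma'$, again a contradiction.

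I expect this last subcase to be the main obstacle, precisely because {\tt I} $\circ$ \A admits mixed steps in which some processes move with {\tt I} and others with \A simultaneously, so closure of $\PICorrect(u)$ under pure {\tt I}-steps (Remark~\ref{rem:cor:close:I}) does not transfer for free. The decoupling must be justified carefully: one argues that the processes of $\N[u]$ executing {\tt I} in the mixed step form an {\tt I}-activatable set whose actions, read off the common configuration $\gamma$, produce exactly the same {\tt I}-variable values on $\N[u]$ as the corresponding pure {\tt I}-step, while the remaining members of $\N[u]$ (idle, or executing $\ruleRF$ or $\ruleC$) leave those variables untouched. Mutual exclusivity of \A and {\tt I} (Remark~\ref{rem:mutex}) underlies the exhaustiveness of the case split, and the observation --- immediate from inspecting the rules --- that every \A rule enabled at a status-$C$ process sets the status to $RB$ is what forces $u$ to remain idle or move with {\tt I} throughout the two status-$C$ subcases.
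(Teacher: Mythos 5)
Your proposal is correct, but it takes a genuinely different route from the paper's proof. The paper first reduces the claim to closure under steps of \A alone, by combining closure of $\correct(u)$ and $\PB(u)$ under {\tt I} (Remark~\ref{rem:cor:close:I}, Lemma~\ref{lem:closedI}) with mutual exclusivity (Remark~\ref{rem:mutex}); it then splits on which disjunct holds in $\gamma$: when $\correct(u)$ holds it sub-splits on $\PICorrect(u)$, using Requirement~\ref{RQ6} only to argue that $\ruleC(u)$ is \emph{disabled} in $\gamma$ when $\neg\PICorrect(u)$ holds, and when $\PB(u)$ holds it argues, exactly as you do, that $\ruleRF$ is disabled at any $RB$-neighbor because $\status_u = C$. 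You instead argue by contradiction on an arbitrary, possibly mixed, step of the composition and split on what $u$ itself does ($\ruleC(u)$ versus no \A-move). The ingredients are the same (the requirements of Subsection~\ref{sect:require} and mutual exclusivity), but your setup buys real rigor in the subcase where $\correct(u)$ holds through $\PICorrect(u)$: knowing $\neg\PB(u)$ in $\gamma'$ lets you exclude executions of $\ruleRB$ or $\ruleR$ at neighbors of $u$, and this exclusion is genuinely needed, because such a move rewrites the neighbor's {\tt I}-variables via $\reset$ and can falsify $\PICorrect(u)$ (think of a clock reset to $0$ next to a clock at $5$ in the unison instantiation); in that event the disjunction survives only because $\PB(u)$ becomes true, a situation that the paper's one-line appeal to Requirement~\ref{RQ2} (``$\PICorrect(u)$ still holds in $\gamma'$'') passes over. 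Your explicit decoupling of mixed steps --- replaying the {\tt I}-moves of the step as a pure {\tt I}-step that agrees with $\gamma'$ on the {\tt I}-variables of $\N[u]$ --- likewise makes precise what the paper's upfront reduction leaves implicit. What the paper's organization buys in return is brevity and reuse: the pattern ``closed by {\tt I} plus mutual exclusivity, hence it suffices to check \A'' is recycled verbatim for Lemma~\ref{lem:PRa} and Theorem~\ref{theo:pseudoRoots}. Finally, your Case 1 proves more than the paper needs: you show $\PICorrect(u)$ holds in $\gamma'$ after any execution of $\ruleC(u)$, by carrying $\PReset$ across the step with Requirements~\ref{RQ3}, \ref{RQ4} and~\ref{RQ5}, whereas the paper gets by with the weaker (and purely static) observation that $\ruleC(u)$ cannot fire when $\neg\PICorrect(u)$ holds in $\gamma$.
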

\begin{proof}
  By Remarks~\ref{rem:cor:close:I} and~\ref{rem:mutex}, and
  Lemma~\ref{lem:closedI}, to prove this lemma it is sufficient to
  show that $\correct(u)$ $\Or$ $\PB(u)$ is closed by \A, for every
  process $u$.

  Let $\gamma \mapsto \gamma'$ be any step of \A such that
  $\correct(u)$ $\Or$ $\PB(u)$ holds in $\gamma$.
  \begin{itemize}
  \item Assume $\correct(u)$ holds in $\gamma$. By
    Requirement~\ref{RQ2}, if $\PICorrect(u)$ holds in $\gamma$, then
    $\PICorrect(u)$ still holds in $\gamma'$, and as $\PICorrect(u)$
    implies $\correct(u)$, we are done.

    Assume now $\neg \PICorrect(u)$ holds in $\gamma$. Then,
    $\correct(u) \wedge \neg \PICorrect(u)$ implies $\status_u \neq C$
    in $\gamma$. Since $\PC(u)$ implies $\PICorrect(u)$ by
    Requirement~\ref{RQ6}, $\ruleC(u)$ is disabled in $\gamma$, and
    Consequently, $\status_u \neq C$ in $\gamma'$, which implies that
    $\correct(u)$ still holds in $\gamma'$.

  \item Assume $\PB(u)$ holds in $\gamma$. If $u$ moves in $\gamma
    \mapsto \gamma'$, then $u$ necessarily executes $\ruleRB(u)$; see
    Lemma~\ref{lem:mutex:2}. In this case, $\status_u = RB$ in
    $\gamma'$, which implies $\correct(u)$ in $\gamma'$.

    If $u$ does not move, then at least one neighbor of $u$ should
    switch its status from $RB$ to either $C$ or $RF$ so that $\neg
    \PB(u)$ holds in $\gamma'$. Any neighbor $v$ of $u$ satisfying
    $\status_v = RB$ may only change its status by executing
    $\ruleRF(v)$ in $\gamma \mapsto \gamma'$. Now, $\ruleRF(v)$ is
    necessarily disabled in $\gamma$ since $\status_u = C$. Hence,
    $\PB(u)$ still holds in $\gamma'$ in this case.
  \end{itemize}
\end{proof}

From Lemma \ref{lem:PRa} and Theorem \ref{theo:correct}, we can deduce
the following corollary.
\begin{corollary}
\label{cor:correct}
For every process $u$, $\neg\PRUp(u)$ is closed  by {\tt I} $\circ$ \A.
\end{corollary}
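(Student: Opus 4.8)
The plan is to unfold the definition of $\PRUp$ and reduce the statement to the two closure results already in hand. Writing $\neg\PRUp(u) \equiv \PB(u) \Or (\neg\PRa(u) \And \neg\PRb(u) \And \correct(u))$, I first record two easy implications: both disjuncts force $\neg\PRb(u)$ (indeed $\PB(u)$ gives $\status_u = C$, and $\status_u = C$ implies $\neg\PRb(u)$), so $\neg\PRUp(u) \Rightarrow \neg\PRb(u)$; and likewise $\neg\PRUp(u) \Rightarrow \correct(u) \Or \PB(u)$, the first disjunct giving $\PB(u)$ and the second giving $\correct(u)$. I would then fix a step $\gamma \mapsto \gamma'$ of {\tt I} $\circ$ \A with $\neg\PRUp(u)$ holding in $\gamma$ and aim to establish $\neg\PRUp(u)$ in $\gamma'$.

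From the two implications and the closure results, $\neg\PRb(u)$ holds in $\gamma'$ by Lemma~\ref{lem:PRa}, and $\correct(u) \Or \PB(u)$ holds in $\gamma'$ by Theorem~\ref{theo:correct}. If $\PB(u)$ holds in $\gamma'$, then $\neg\PRUp(u)$ holds immediately, since the guard $\PRUp$ begins with $\neg\PB$. Otherwise $\neg\PB(u) \And \correct(u)$ holds in $\gamma'$, and since I already have $\neg\PRb(u)$ there, the whole statement reduces to the single fact $\neg\PRa(u)$ in $\gamma'$.

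This last fact is the crux. If $\neg\PRa(u)$ already held in $\gamma$, it propagates to $\gamma'$ directly by Lemma~\ref{lem:PRa}. The problematic case is when $\PRa(u)$ holds in $\gamma$: closure of $\neg\PRa(u)$ is then unavailable and must be argued by hand. Observe that $\neg\PRUp(u) \And \PRa(u)$ in $\gamma$ forces $\PB(u)$ in $\gamma$, for otherwise the disjunct $\PRa(u)$ would make $\PRUp(u)$ true; hence $\status_u = C$ in $\gamma$ with some neighbor in status $RB$. I would then reuse the mechanism from the proof of Theorem~\ref{theo:correct}: because $\status_u = C$, every $RB$-neighbor $v$ has $\ruleRF(v)$ disabled and cannot otherwise leave status $RB$, while by Remark~\ref{rem:mutex}, Requirement~\ref{RQ3}, and Lemma~\ref{lem:mutex:2} the only rule $u$ itself can execute from this configuration is $\ruleRB(u)$. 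Consequently, the only way to reach $\neg\PB(u)$ in $\gamma'$ is for $u$ to execute $\ruleRB(u)$, which sets $\status_u = RB \neq C$; but then $\PRa(u)$, which requires $\status_u = C$, is false in $\gamma'$. Thus $\neg\PRa(u)$ holds in $\gamma'$ in every case, completing the argument.

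I expect the only real difficulty to be this $\PB(u) \And \PRa(u)$ overlap: the corollary is \emph{not} a pure black-box combination of Lemma~\ref{lem:PRa} and Theorem~\ref{theo:correct}, because $\neg\PRUp(u)$ fails to imply $\neg\PRa(u)$ on the $\PB(u)$ branch, and that single missing implication is exactly what the broadcast-persistence argument above must supply.
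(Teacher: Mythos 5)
Your proof is correct, and it is actually more complete than the paper's own argument: the paper derives this corollary from Lemma~\ref{lem:PRa} and Theorem~\ref{theo:correct} with no further justification, treating it as a purely formal consequence. Your diagnosis that a black-box combination does not suffice is accurate: closure of $\neg\PRa(u)$, $\neg\PRb(u)$, and of $\correct(u) \Or \PB(u)$ does not, as a matter of propositional reasoning, exclude a step from a configuration satisfying $\PB(u) \And \PRa(u)$ to one satisfying $\neg\PB(u) \And \PRa(u)$, which would falsify closure of $\neg\PRUp(u)$; only the semantics of \A rules this transition out. Your repair --- while $\status_u = C$, every neighbor of status $RB$ can only change status via $\ruleRF$, which is disabled precisely because $u$ has status $C$, so reaching $\neg\PB(u)$ forces $u$ to have executed $\ruleRB(u)$, whence $\status_u = RB$ and $\neg\PRa(u)$ in $\gamma'$ --- is exactly the persistence mechanism used in the second bullet of the paper's proof of Theorem~\ref{theo:correct}. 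Put differently, what the paper's one-line deduction implicitly relies on is not the \emph{statement} of Theorem~\ref{theo:correct} but a stronger fact established inside its proof (if $\PB(u)$ holds in $\gamma$, then either $\PB(u)$ still holds in $\gamma'$, or $u$ executed $\ruleRB(u)$ and so $\status_u = RB$ in $\gamma'$); stating that stronger property explicitly would be the other natural way to close the gap. Your version, which re-derives that property and checks the remaining cases via the two cited closure results, is sound and can stand as a full proof of the corollary.
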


\subsubsection{Roots.}

If the configuration is illegitimate {\em w.r.t.} the initial
algorithm, then some processes locally detect the inconsistency by
checking their state and that of their neighbors (using Predicate
$\PICorrect$).  Such processes, called here {\em roots}, should
initiates a reset. Then, each root $u$ satisfies $\status_u \neq C$
all along the reset processing. According to its status, a root is
either {\em alive} or {\em dead}, as defined below.

\begin{definition} Let $\PRoot(u)$ $ \equiv $ $\status_u = RB$ $\And$ $(\forall v \in \N(u)$,  $\status_v = RB \Rightarrow \dist_v \geq \dist_u)$.
\begin{itemize}
\item A process $u$ is said to be an {\em alive root} if $\PRUp(u) \Or \PRoot(u)$.
\item A process $u$ is said to be an {\em dead root} if 
$\status_u = RF \And (\forall v \in \N(u), status_v \neq C \Rightarrow \dist_v \geq \dist_u)$.
\end{itemize}
\end{definition}

By definition, follows.

\begin{remark}\label{rem:dead}
For every process $u$, if  $\PC(u)$ holds, then $u$ is a dead root.
\end{remark}

The next theorem states that no alive root is created during an
execution.

\begin{theorem}
\label{theo:pseudoRoots}
For every process $u$, $\neg\PRoot(u) \And \neg\PRUp(u)$ is closed  by {\tt I} $\circ$ \A.
\end{theorem}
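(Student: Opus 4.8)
The plan is to decouple the two conjuncts. Since Corollary~\ref{cor:correct} already tells us that $\neg\PRUp(u)$ is closed by ${\tt I} \circ \A$, from the hypothesis that $\neg\PRoot(u) \And \neg\PRUp(u)$ holds in $\gamma$ we immediately get $\neg\PRUp(u)$ in $\gamma'$ for any step $\gamma \mapsto \gamma'$. Thus it only remains to show that $\neg\PRoot(u)$ is preserved. I would first record the basic observation that, by Requirement~\ref{RQ1}, Algorithm ${\tt I}$ writes neither $\status_u$ nor $\dist_u$; hence $(\status_u,\dist_u)$ can change in the step only if $u$ executes a rule of \A. The proof then proceeds by a case analysis on which \A-rule, if any, $u$ executes.

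The easy cases dispose of most possibilities. If $u$ executes $\ruleR(u)$, then $\PRUp(u)$ holds in $\gamma$, contradicting the hypothesis, so this cannot occur; this is exactly where $\neg\PRUp(u)$ is needed, as $\ruleR(u)$ would otherwise create a distance-$0$ root satisfying $\PRoot(u)$. If $u$ executes $\ruleRF(u)$ or $\ruleC(u)$, then $\status_u \in \{RF,C\}$ in $\gamma'$, so $\PRoot(u)$ (which demands $\status_u = RB$) is false in $\gamma'$. In the remaining subcase where $(\status_u,\dist_u)$ is unchanged (i.e., $u$ is idle or executes a rule of ${\tt I}$) and $\status_u \neq RB$, $\PRoot(u)$ is again trivially false. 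The one subcase requiring care is $\status_u = RB$ unchanged: since $\neg\PRoot(u)$ holds in $\gamma$, there is a neighbor $w$ with $\status_w = RB$ and $\dist_w < \dist_u$ in $\gamma$, and I would show such a witness survives into $\gamma'$. If $w$ keeps its \A-state, it still works; otherwise, as $\status_w = RB$, Lemma~\ref{lem:mutex:2} leaves only $\ruleRF(w)$ or $\ruleR(w)$. Here $\ruleRF(w)$ is disabled in $\gamma$ because $u \in \N(w)$ has $\status_u = RB$ with $\dist_u > \dist_w$, which falsifies the guard $\PF(w)$; and if $w$ executes $\ruleR(w)$ it keeps $\status_w = RB$ with $\dist_w = 0 < \dist_u$, so $w$ remains a valid witness.

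The last and most delicate case, which I expect to be the main obstacle because moves in a step are concurrent, is $u$ executing $\ruleRB(u)$. Then $\status_u = C$ in $\gamma$, $\PB(u)$ holds, and $\compute(u)$ sets $\dist_u := \dist_w + 1$ in $\gamma'$, where $w$ is a neighbor realizing $\dist_w = \min_{v \in \N(u),\, \status_v = RB}(\dist_v)$ in $\gamma$ (the minimum is over a nonempty set by $\PB(u)$). I would prove that this same $w$ witnesses $\neg\PRoot(u)$ in $\gamma'$. The crucial point is that $w$ cannot switch to status $RF$ during this step: $\ruleRF(w)$ would require $\PF(w)$ in $\gamma$, but $u \in \N(w)$ has $\status_u = C$, which falsifies the guard. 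Consequently $w$ either keeps $\status_w = RB$ with $\dist_w$ unchanged, or executes $\ruleR(w)$ and obtains $\dist_w = 0$; in both cases $\status_w = RB$ and $\dist_w < \dist_u$ in $\gamma'$, so $\PRoot(u)$ fails and $\neg\PRoot(u)$ is preserved, completing the argument.
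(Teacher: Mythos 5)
Your proof is correct and follows essentially the same route as the paper's: after disposing of $\neg\PRUp(u)$ via Corollary~\ref{cor:correct}, both arguments reduce to the same two non-trivial cases ($u$ remaining in status $RB$ with unchanged state, and $u$ executing $\ruleRB(u)$) and preserve $\neg\PRoot(u)$ by exhibiting a witness neighbor whose $\ruleRF$ is disabled---because $u$ itself falsifies the guard $\PF$ of that neighbor---so the witness keeps status $RB$ with a smaller distance (possibly $0$ after $\ruleR$). The only difference is organizational: you split cases by the rule $u$ executes, while the paper splits by $u$'s status in $\gamma$, and you treat mixed steps of {\tt I} $\circ$ \A directly via Requirement~\ref{RQ1} instead of first reducing to pure \A-steps through Remark~\ref{rem:mutex}.
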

\begin{proof}
By Requirement~\ref{RQ1} and~Corollary~\ref{theo:correct},
$\neg\PRoot(u) \And \neg\PRUp(u)$ is closed by {\tt I}. Hence, by
Remark~\ref{rem:mutex}, it is sufficient to show that $\neg\PRoot(u)
\And \neg\PRUp(u)$ is closed by \A.

 Let $\gamma \mapsto \gamma'$ be any step of \A such that
 $\neg\PRoot(u) \And \neg\PRUp(u)$  holds in $\gamma$.
 By Corollary~\ref{theo:correct}, $\neg\PRUp(u)$ holds in $\gamma'$.
 To show that  $\neg\PRoot(u)$ holds in $\gamma'$, we now consider the following cases:
 \begin{description}
 \item[$\status_u = RF$ in $\gamma$:] In this case, $\ruleRB(u)$ and
   $\ruleR(u)$ are respectively disabled in $\gamma$ since $\status_u
   \neq C$ and $\neg\PRUp(u)$ hold in $\gamma$. So, $\status_u \neq
   RB$ in $\gamma'$, which implies that $\neg\PRoot(u)$ still holds in
   $\gamma'$.

 \item[$\status_u = RB$ in $\gamma$:] Then, $\neg\PRoot(u)$ in
   $\gamma$ implies that there is a neighbor $v$ of $u$ such that
   $\status_v = RB \And \dist_v < \dist_u$ in $\gamma$. Let $\alpha$
   be the value of $\dist_v$ in $\gamma$.  Since $\neg\PRUp(u)$ holds
   in $\gamma$, $u$ may only execute $\ruleRF(u)$ in $\gamma \mapsto
   \gamma'$ and, consequently, $\dist_u > \alpha$ in $\gamma'$. Due to
   the value of $\dist_u$, $v$ may only execute $\ruleR(v)$ in $\gamma
   \mapsto \gamma'$. Whether or not $v$ moves, $\status_v = RB \wedge
   \dist_v \leq \alpha$ in $\gamma'$. Hence, $\status_v = RB \wedge
   \dist_v < \dist_u$ in $\gamma'$, which implies that $\neg\PRoot(u)$
   still holds in $\gamma'$.

 \item[$\status_u = C$ in $\gamma$:] If $u$ does not moves in $\gamma
   \mapsto \gamma'$, then $\neg\PRoot(u)$ still holds in
   $\gamma'$. Otherwise, since $\neg\PRUp(u)$ holds in $\gamma$, $u$
   can only execute $\ruleRB(u)$ in $\gamma \mapsto \gamma'$. In this
   case, $\PB(u)$ implies that there is a neighbor $v$ of $u$ such
   that $\status_v = RB$ in $\gamma$. Without the loss of generality,
   assume $v$ is the neighbor of $u$ such that $\status_v = RB$ with
   the minimum distance value in $\gamma$. Let $\alpha$ be the value
   of $\dist_v$ in $\gamma$. Then, $\status_u = RB$ and $\dist_u =
   \alpha+1$ in $\gamma'$. Moreover, since $\status_u = C$ and
   $\status_v = RB$ in $\gamma$, $v$ may only execute $\ruleR(v)$ in
   $\gamma \mapsto \gamma'$. Whether or not $v$ moves, $\status_v = RB
   \wedge \dist_v \leq \alpha$ in $\gamma'$.  Hence, $\status_v = RB
   \wedge \dist_v < \dist_u$ in $\gamma'$, which implies that
   $\neg\PRoot(u)$ still holds in $\gamma'$.
 \end{description}
\end{proof}

\subsubsection{Move Complexity.}

\begin{definition}[$AR$]
  Let $\gamma$ be a configuration of {\tt I} $\circ$ \A. We denote by
  $AR(\gamma)$ the set of alive roots in $\gamma$.
\end{definition}

By Theorem~\ref{theo:pseudoRoots}, follows.

\begin{remark}\label{rem:inclusion}
Let $\gamma_0\cdots\gamma_i\cdots$ be any execution of  {\tt I} $\circ$ \A.
For every $i > 0$, $AR(\gamma_i) \subseteq AR(\gamma_{i-1})$.
\end{remark}

Based on the aforementioned property, we define below the notion of
{\em segment}.

\begin{definition}[Segment] Let $e = \gamma_0\cdots\gamma_i\cdots$ be any execution of  {\tt I} $\circ$ \A.
\begin{itemize}
\item If for every $i > 0$, $|AR(\gamma_{i-1})| = |AR(\gamma_i)|$,
  then the first {\em segment} of $e$ is $e$ itself, and there is no
  other segment.
\item Otherwise, let $\gamma_{i-1} \mapsto \gamma_i$ be the first step
  of $e$ such that $|AR(\gamma_{i-1})| > |AR(\gamma_i)|$. The {\em
    first segment} of $e$ is the prefix $\gamma_0\cdots\gamma_i$ and
  the {\em second segment} of $e$ is the first segment of the suffix
of $e$ starting in $\gamma_i$, and so forth.
\end{itemize}
\end{definition}

By Remark~\ref{rem:inclusion}, follows.
\begin{remark}\label{rem:nbseg}
  Every execution of {\tt I} $\circ$ \A contains at most $n+1$
  segments where $n$ is the number of processes.
\end{remark}

We now study how a reset propagates into the network. To that goal, we first define the notion of {\em reset parent}. Roughly speaking, the parents of $u$ in a reset are its neighbors (if any) that have caused its reset.

\begin{definition}[Reset Parent and Children]
  $RParent(v,u)$ holds for any two processes $u$ and $v$ if $v \in
  \N(u)$, $\status_u \neq C$, $\PReset(u)$, $\dist_u > \dist_v$, and
  $(\status_u = \status_v \vee \status_v = RB)$. 

  Whenever $RParent(v,u)$ holds, $v$
  (resp., $u$) is said to be a {\em reset parent} of $u$ in (resp., a
  {\em reset child} of $v$).
\end{definition}

Remark that in a given configuration, a process may have several reset
parents.  Below, we define the {\em reset branches}, which are the trails of
a reset in the network.

\begin{definition}[Reset Branch]
  A \emph{reset branch} is a 
   sequence of processes~$u_1, \ldots ,u_k$ for some integer~$k \geq
  1$, such that~$u_1$ is~an alive or dead root and, for every~$1
  < i \leq k$, we have~$RParent(u_{i-1},u_i)$.
  The process~$u_i$ is said to be at \emph{depth}~$i-1$ and $u_i, \cdots
  ,u_k$ is called a {\em reset sub-branch}. The process~$u_1$ is the {\em initial extremity} of the reset branch~$u_1, \ldots ,u_k$.
\end{definition}

\begin{lemma}\label{lem:trace}
Let $u_1, \ldots ,u_k$ be any reset branch. 
\begin{enumerate} 
\item $k \leq n$, \label{lem:trace:1}
\item If $\status_{u_1} = C$, then $k = 1$. Otherwise, $\status_{u_1}\cdots\status_{u_k} \in RB^*RF^*$.\label{lem:trace:2}
\item $\forall i \in \{2, \ldots, k\}$, $u_i$ is neither an alive, not
  a dead root.\label{lem:trace:3}
\end{enumerate} 
\end{lemma}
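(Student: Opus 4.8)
The plan is to extract two elementary consequences of the definition of $RParent$ that hold all along any reset branch, and then derive the three items by short case analyses. First, since $RParent(u_{i-1},u_i)$ requires $\dist_{u_i} > \dist_{u_{i-1}}$, the distances satisfy $\dist_{u_1} < \dist_{u_2} < \cdots < \dist_{u_k}$; in particular the $u_i$ are pairwise distinct, so item~\ref{lem:trace:1} is immediate: $k$ cannot exceed the total number $n$ of processes. Second, $RParent(u_{i-1},u_i)$ also forces $\status_{u_i} \neq C$ together with the disjunction $\status_{u_i} = \status_{u_{i-1}} \Or \status_{u_{i-1}} = RB$, which tightly constrains how the status may change as one walks down the branch. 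These two facts drive everything that follows.

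For item~\ref{lem:trace:2}, I would first treat the case $\status_{u_1}=C$. If $k \geq 2$, then $RParent(u_1,u_2)$ holds and its status condition $\status_{u_2}=\status_{u_1} \Or \status_{u_1}=RB$ reduces, since $\status_{u_1}=C$, to $\status_{u_2}=C$, contradicting $\status_{u_2}\neq C$; hence $k=1$. Otherwise $\status_{u_1}\neq C$, and as every $u_i$ with $i\geq 2$ also satisfies $\status_{u_i}\neq C$, all statuses lie in $\{RB,RF\}$. It then suffices to forbid an $RF$ immediately followed by an $RB$: whenever $\status_{u_{i-1}}=RF$, the disjunction $\status_{u_i}=\status_{u_{i-1}} \Or \status_{u_{i-1}}=RB$ forces $\status_{u_i}=RF$ (as $RF\neq RB$). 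Consequently no $RF$ is ever followed by an $RB$, so the word $\status_{u_1}\cdots\status_{u_k}$ belongs to $RB^*RF^*$.

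For item~\ref{lem:trace:3}, fix $i\geq 2$ and exploit the parent $u_{i-1}\in\N(u_i)$, which satisfies $\dist_{u_{i-1}}<\dist_{u_i}$. Since $\status_{u_i}\neq C$ and $\PReset(u_i)$ hold, the predicates $\PRa(u_i)$, $\PRb(u_i)$ and $\neg\correct(u_i)$ are all false, so $\PRUp(u_i)$ is false. If moreover $\status_{u_i}=RB$, the status disjunction forces $\status_{u_{i-1}}=RB$, so $u_{i-1}$ is a neighbour with status $RB$ and $\dist_{u_{i-1}}<\dist_{u_i}$, whence $\PRoot(u_i)$ fails; thus $u_i$ is not an alive root. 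Finally, if $\status_{u_i}=RF$, the disjunction forces $\status_{u_{i-1}}\neq C$, and together with $\dist_{u_{i-1}}<\dist_{u_i}$ this violates the dead-root condition, so $u_i$ is not a dead root either. I expect this last item to be the only delicate point, and within it the $\PRoot$ sub-case, where one must trace the child's status $RB$ back to force the parent's status to $RB$ and so obtain the distance contradiction.
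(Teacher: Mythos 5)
Your proof is correct and follows essentially the same route as the paper's: both extract from the definition of $RParent$ the strict increase of distances along a branch (giving item~\ref{lem:trace:1}) and the status implications $\status_{u_i}=RB \Rightarrow \status_{u_{i-1}}=RB$ and $\status_{u_i}=RF \Rightarrow \status_{u_{i-1}}\in\{RB,RF\}$ (giving item~\ref{lem:trace:2}), and then both refute $\PRoot(u_i)$, $\PRUp(u_i)$ and the dead-root condition for $i\geq 2$ using the parent $u_{i-1}$ as the witnessing neighbor, with $\PReset(u_i)$ and $\status_{u_i}\neq C$ killing $\PRa$, $\PRb$ and $\neg\correct$. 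No gaps; the case analysis you flag as delicate is exactly the one the paper handles the same way.
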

\begin{proof} Let $i$ and $j$ such that $1 \leq i < j \leq
  k$.  By definition, $\dist_{u_i} < \dist_{u_j}$ and so $u_i \neq
  u_j$. Hence, in a reset branch, each node appears at most once, and
 Lemma~\ref{lem:trace}.\ref{lem:trace:1} holds.

 Let $i \in \{2, \ldots, k\}$.  Lemma~\ref{lem:trace}.\ref{lem:trace:2} immediately
 follows from the following three facts, which directly derive from
 the definition of reset parent.
\begin{itemize}
\item $\status_{u_i} \neq C$.
\item $\status_{u_i} = RB \Rightarrow
  \status_{u_{i-1}} = RB$.
\item $\status_{u_i} = RF \Rightarrow
  \status_{u_{i-1}} \in \{RB,RF\}$.
\end{itemize}

Lemma~\ref{lem:trace}.\ref{lem:trace:3} immediately follows from those
two facts.
\begin{itemize}
\item {\em $u_i$ is not a dead root}, since $u_{i-1} \in \N(u_i) \wedge \status_{u_{i-1}} \neq C \wedge \dist_{u_{i-1}} < \dist_{u_i}$.
\item {\em $u_i$ is not an alive root}, indeed
\begin{itemize}
\item {\em $\neg \PRoot(u)$ holds}, since $u_{i-1} \in \N(u_i) \wedge (\status_{u_i} = RB \Rightarrow
  \status_{u_{i-1}} = RB) \wedge \dist_{u_{i-1}} < \dist_{u_i}$.
\item {\em $\neg \PRUp(u_i)$ holds} since $\neg \PRa(u) \And \neg
\PRb(u)$ holds because $\PReset(u)$ holds, and $\correct(u)$ holds
because $\status_u \neq C$.
\end{itemize}
\end{itemize}

\end{proof}

\begin{remark}\label{rem:unless}
In a configuration, a process $u$ may belong to several
branches. Precisely, $u$ belongs to at least one reset branch, unless
$\status_u = C \wedge \PICorrect(u)$ holds.
\end{remark}

\begin{lemma}\label{still:alive}
Let $\gamma_x \mapsto \gamma_{x+1}$ be a step of {\tt I} $\circ$ \A. Let $u_1, \ldots, u_k$ be a  reset branch in $\gamma_x$. If $u_1$ is an alive root in $\gamma_{x+1}$, then  $u_1, \ldots, u_k$ is a  reset branch in $\gamma_{x+1}$.
\end{lemma}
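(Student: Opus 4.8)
The plan is to verify directly that the three defining clauses of a reset branch hold in $\gamma_{x+1}$. Since $u_1$ is alive in $\gamma_{x+1}$ by hypothesis, it is a root there, so it suffices to show that $RParent(u_{i-1},u_i)$ holds in $\gamma_{x+1}$ for every $i\in\{2,\dots,k\}$. As the graph is static, the clause $u_{i-1}\in\N(u_i)$ is preserved for free, and the real work is to control the statuses, the distances, and the predicate $\PReset$ along the branch. If $\status_{u_1}=C$ in $\gamma_x$, then $k=1$ by Lemma~\ref{lem:trace}.\ref{lem:trace:2} and the branch $u_1$ is trivially a branch in $\gamma_{x+1}$; so I assume $\status_{u_1}\neq C$, in which case, by Lemma~\ref{lem:trace}.\ref{lem:trace:2}, the branch reads $RB^*RF^*$.

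First I would freeze the interior of the branch. For $i\geq 2$, $u_i$ is neither an alive nor a dead root (Lemma~\ref{lem:trace}.\ref{lem:trace:3}), so $\neg\PRUp(u_i)$ holds and neither $\ruleR(u_i)$ nor $\ruleRB(u_i)$ (which needs $\status_{u_i}=C$) is enabled; hence $\dist_{u_i}$ is unchanged, and since $\status_{u_i}\neq C$ forbids any rule of {\tt I} (Requirement~\ref{RQ3}) while $\ruleRF$ and $\ruleC$ do not touch the variables of {\tt I}, $\PReset(u_i)$ is preserved (Requirement~\ref{RQ4}). Two structural facts then lock the status pattern. An interior node with $\status_{u_i}=RF$ cannot execute $\ruleC$: $\PC(u_i)$ would force its parent $u_{i-1}$, which has $\dist_{u_{i-1}}<\dist_{u_i}$, to be in status $C$, contradicting $\status_{u_{i-1}}\neq C$. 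Dually, a node with $\status_{u_i}=RB$ cannot turn to $RF$ unless its child $u_{i+1}$ (if any) is already in status $RF$, since $u_{i+1}\in\N(u_i)$ with $\dist_{u_{i+1}}>\dist_{u_i}$ is tested by $\PF(u_i)$ and can only meet the $RF$ disjunct. Together these show that no interior $RF$ reverts to $C$ and an $RB$ may only become $RF$ once its child is already $RF$, so in $\gamma_{x+1}$ the statuses still form $RB^*RF^*$; this yields both $\status_{u_i}\neq C$ and the compatibility clause $(\status_{u_i}=\status_{u_{i-1}}\Or\status_{u_{i-1}}=RB)$ for every $i\geq 2$.

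The delicate part, and the main obstacle, is the behaviour of the root $u_1$, where I would exploit that $u_1$ is alive in $\gamma_{x+1}$. If $\status_{u_1}=RB$, then $u_1$ cannot execute $\ruleRF$: since $\ruleRF$ needs $\PReset(u_1)$ and does not touch {\tt I}, turning to $RF$ would leave $\PReset(u_1)$ true, whence $\neg\PRUp(u_1)\And\neg\PRoot(u_1)$, i.e.\ $u_1$ not alive, a contradiction; so $u_1$ stays in $RB$ (possibly resetting its distance to $0$ via $\ruleR$). If $\status_{u_1}=RF$, the whole branch is in $RF$ and I must exclude $\ruleC(u_1)$: I would argue that $\PC(u_1)$ makes $\PReset(v)$ hold for every $v\in\N[u_1]$ in $\gamma_x$, and that this survives the step, since no such $v$ can run {\tt I} (each has the non-$C$ neighbour $u_1$, so $\neg\PClean(v)$, Requirement~\ref{RQ3}), any $v$ executing $\reset$ re-establishes $\PReset(v)$ (Requirement~\ref{RQ5}), and $\PReset$ is otherwise a local predicate on $v$ (Requirement~\ref{RQ4}); thus all of $\N[u_1]$ satisfies $\PReset$ in $\gamma_{x+1}$, so Requirement~\ref{RQ6} gives $\PICorrect(u_1)$, hence $\correct(u_1)$ and $\neg\PRUp(u_1)$ with $\status_{u_1}=C$, contradicting aliveness. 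Hence $u_1$ stays in $RF$ or resets to $RB$ with distance $0$. In every case $\status_{u_1}\in\{RB,RF\}$ in $\gamma_{x+1}$, so the compatibility clause for $i=2$ holds; and since $\dist_{u_2}$ is frozen with $\dist_{u_2}>\dist_{u_1}\geq 0$ in $\gamma_x$, we have $\dist_{u_2}\geq 1>\dist_{u_1}$ even when $u_1$ reset to $0$, while the inequalities for $i\geq 3$ are untouched. Collecting these facts establishes every clause of $RParent(u_{i-1},u_i)$ in $\gamma_{x+1}$, completing the argument.
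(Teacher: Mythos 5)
Your proof is correct, and it reaches the lemma by a route that genuinely differs from the paper's in two respects. First, the paper proceeds by induction on the branch length $k$, with a case analysis on $\status_{u_k}$; you instead verify all the clauses of $RParent(u_{i-1},u_i)$ at once, through preservation of the $RB^*RF^*$ pattern together with the frozen distances and $\PReset$ values (your interior-freezing step is essentially the paper's Claim~2). Second, and more substantially, the paper constrains the root via the closure machinery: since $u_1$ is alive in $\gamma_{x+1}$, Theorem~\ref{theo:pseudoRoots} makes it alive already in $\gamma_x$; mutual exclusion (Lemma~\ref{lem:mutex:2}) then shows that any move of $u_1$ other than $\ruleR$ happens under $\neg\PRUp(u_1)$, hence under $\PRoot(u_1)$, i.e., from status $RB$ and necessarily via $\ruleRF$, which Corollary~\ref{cor:correct} rules out. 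You never invoke Theorem~\ref{theo:pseudoRoots} or Corollary~\ref{cor:correct}: you exclude each dangerous root move by computing directly that it would destroy aliveness in $\gamma_{x+1}$. The cost of skipping the backward transfer of aliveness shows up in your $\ruleC(u_1)$ case, which the paper never has to consider (for the paper, an alive root moving without $\ruleR$ must have status $RB$); you must exclude it by the extra argument that $\PReset$ survives on all of $\N[u_1]$ (Requirements~\ref{RQ3}, \ref{RQ4}, \ref{RQ5}), so that Requirement~\ref{RQ6} yields $\PICorrect(u_1)$ and hence $\neg\PRUp(u_1)$; this argument is correct and is the only genuinely new ingredient. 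What you gain is a self-contained, non-inductive proof; what the paper gains is brevity at the root, since closure does that work in two lines. One wording nit: near the end you justify the compatibility clause for $i=2$ by ``$\status_{u_1}\in\{RB,RF\}$'' alone, which does not suffice when $\status_{u_1}=RF$; the missing fact --- that then $\status_{u_2}=RF$ as well --- is indeed established earlier in your text (an $RF$ root forces the whole branch to be $RF$ in $\gamma_x$, and interior $RF$ nodes are frozen), so only the phrasing, not the proof, needs tightening.
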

\begin{proof}
We first show the following two claims:
\begin{description}
\item[Claim 1:] {\em If $u_1$ moves in $\gamma_x \mapsto \gamma_{x+1}$, then $u_1$ necessarily executes $\ruleR$ in $\gamma_x \mapsto \gamma_{x+1}$.}

  {\em Proof of the claim:} Since $u_1$ is an alive root in
  $\gamma_{x+1}$, $u_1$ is an alive root in $\gamma_x$, by
  Theorem~\ref{theo:pseudoRoots}. Assume now, by the contradiction,
  that $u_1$ moves, but does not execute $\ruleR$ in $\gamma_x \mapsto
  \gamma_{x+1}$. Then, $\neg \PRUp(u_k)$ holds in $\gamma_x$, by
  Remark~\ref{rem:mutex} and Lemma~\ref{lem:mutex:2}. So, by
  definition of alive root, $\status_u = RB$ in $\gamma_x$ and, from
  the code of \A and Requirement~\ref{RQ3}, $u_1$ executes $\ruleRF$
  in $\gamma_x \mapsto \gamma_{x+1}$. Consequently, $\status_u = RF$. Now, $\neg \PRUp(u_k)$ still holds in $\gamma_{x+1}$, by
  Corollary~\ref{cor:correct}. Hence, $u_1$ is not an alive root in
  $\gamma_{x+1}$, a contradiction.

\item[Claim 2:] {\em For every $i \in \{2, \ldots, k\}$, if $u_i$ moves
  $\gamma_x \mapsto \gamma_{x+1}$, then $u_i$ executes $\ruleRF$ in
  $\gamma_x \mapsto \gamma_{x+1}$ and in $\gamma_x$ we have
  $\status_{u_i} = RB$ and $i < k \Rightarrow \status_{u_{i+1}} = RF$.}

  {\em Proof of the claim:} We first show that only $\ruleRF$ may be enabled at 
$u_i$ in $\gamma_x$.
\begin{itemize}
\item By definition, $\status_{u_i} \neq C$ and so $\neg
  \PClean(u_i)$ holds in $\gamma_x$. Thus, by Requirement~\ref{RQ3}, all
  rules of {\tt I} that are disabled at $u_i$ in $\gamma_x$.
\item $\ruleRB(u_i)$ is disabled in $\gamma_x$ since
  $\status_{u_i} \neq C$ (by definition).
\item The fact that $u_i$ is not a dead root in $\gamma_x$
  (Lemma~\ref{lem:trace}) implies that $\ruleC(u_i)$ is disabled 
  in $\gamma_x$ (Remark~\ref{rem:dead}).
\item $\ruleR(u_i)$ is disabled in $\gamma_x$ since $u_i$ is not an alive root
  (Lemma~\ref{lem:trace}).
\end{itemize}
Hence, $u_i$ can only executes $\ruleRF$ in $\gamma_x \mapsto
\gamma_{x+1}$. In this case, $\status_{u_i} = RB$ in $\gamma_x$; see
the guard of $\ruleRF(u_i)$. Moreover, if $i < k$, then
$\status_{u_i} = RB \wedge RParent(u_i,u_{i+1})$ implies that $u_{i+1} \in \N(u_i)$,
$\status_{u_{i+1}} \in \{RB,RF\}$, and $\dist_{u_i} <
\dist_{u_{i+1}}$. Now, if $u_{i+1} \in \N(u_i)$, $\status_{u_{i+1}} =
RB$, and $\dist_{u_i} < \dist_{u_{i+1}}$, then $\ruleRF(u_i)$ is
disabled. Hence, if $u_i$ moves $\gamma_x \mapsto \gamma_{x+1}$ and $i
< k$, then $u_i$ executes $\ruleRF$ and so $\status_{u_{i+1}} = RF$ in
$\gamma_x$.
\end{description}
Then, we proceed by induction on $k$. The base case ($k = 1$) is
trivial. Assume now that $k > 1$. Then, by induction hypothesis, $u_1,
\ldots, u_{k-1}$ is a reset branch in $\gamma_{x+1}$. Hence, to show
that $u_1, \ldots, u_k$ is a reset branch in $\gamma_{x+1}$, it is
sufficient to show that $RParent(u_{k-1},u_k)$ holds in
$\gamma_{x+1}$. Since, $RParent(u_{k-1},u_k)$ holds in
$\gamma_x$, we have $\status_{u_k} \neq C$ in $\gamma_x$. So, we now study the following two cases:

\begin{itemize}
\item $\status_{u_k} = RB$ in $\gamma_x$. By Claim 2, $u_k$ may only
  execute $\ruleRF$ in $\gamma_x \mapsto \gamma_{x+1}$. Consequently,
  $\status_{u_k} \in \{RB,RF\} \wedge \PReset(u_k) \wedge \dist_{u_k}
  = d$ holds in $\gamma_{x+1}$, where $d > 0$ is the value of
  $\dist_{u_k}$ in $\gamma_x$.
  Consider now process $u_{k-1}$. Since $\status_{u_k} = RB$ in
  $\gamma_x$, $\status_{u_{k-1}} = RB$ too in $\gamma_x$
  (Lemma~\ref{lem:trace}).

  If $u_{k-1}$ does not move in $\gamma_x \mapsto \gamma_{x+1}$, then
  $RParent(u_{k-1},u_k)$ still holds in $\gamma_{x+1}$, and we are
  done.

  Assume now that $u_{k-1}$ moves in $\gamma_x \mapsto \gamma_{x+1}$.
  Then, we necessarily have $k = 2$ since otherwise, Claim 2 applies
  for $i = k-1$: $u_{k-1}$ moves in $\gamma_x \mapsto \gamma_{x+1}$
  only if $\status_{u_k} = RF$ in $\gamma_x$.  Now, $k = 2$ implies
  that $u_{k-1}$ executes $\ruleR$ in $\gamma_x \mapsto \gamma_{x+1}$
  (by Claim 1), so $\status_{u_{k-1}} = RB$ and $\dist_{u_{k-1}} = 0 <
  d$ in $\gamma_{x+1}$. Consequently, $RParent(u_{k-1},u_k)$ still
  holds in $\gamma_{x+1}$, and we are done.

\item $\status_{u_k} = RF$ in $\gamma_x$. 

By Claim 2, $u_k$ does not move in $\gamma_x \mapsto \gamma_{x+1}$. 
So, $\status_{u_k} = RF \wedge \PReset(u_k) \wedge \dist_{u_k} = d$ holds in $\gamma_{x+1}$, where $d > 0$ is the value of $\dist_{u_k}$
  in $\gamma_x$.

  If $u_{k-1}$ does not move in $\gamma_x \mapsto \gamma_{x+1}$, we
  are done. Assume, otherwise, that $u_{k-1}$ moves in $\gamma_x
  \mapsto \gamma_{x+1}$. 

  Then, if $k = 2$, then $u_{k-1}$ executes
  $\ruleR$ in $\gamma_x \mapsto \gamma_{x+1}$ (by Claim 1), so
  $\status_{u_{k-1}} = RB$ and $\dist_{u_{k-1}} = 0 < d$ in
  $\gamma_{x+1}$, and so $RParent(u_{k-1},u_k)$ still holds in
  $\gamma_{x+1}$. 

  Otherwise ($k > 2$),
  $u_{k-1}$ necessarily executes $\ruleRF$ in $\gamma_x \mapsto
  \gamma_{x+1}$ (by Claim 2): $\status_{u_{k-1}} = RF$ and $\dist_{u_{k-1}} <
  \dist_{u_k}$ in $\gamma_{x+1}$ ({\em n.b.}, neither
$\dist_{u_{k-1}}$ nor $\dist_{u_k}$ is modified in $\gamma_x \mapsto
\gamma_{x+1}$). Hence, $RParent(u_{k-1},u_k)$ still holds in
$\gamma_{x+1}$, and we are done.

\end{itemize}
\end{proof}

\begin{lemma}\label{lem:rfSeg}
  Let $u$ be any process. During a segment $S =
  \gamma_i\cdots\gamma_j$ of execution of {\tt I} $\circ$ \A, if $u$
  executes the rule~$\ruleRF$, then $u$ does not execute any other
  rule of \A in the remaining of $S$.
\end{lemma}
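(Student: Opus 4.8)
The plan is to reduce the statement to a single forbidden rule and then exploit the constancy of the alive-root set inside a segment. Suppose $u$ executes $\ruleRF$ during the step $\gamma_x \mapsto \gamma_{x+1}$ of $S = \gamma_i\cdots\gamma_j$. Then $\PF(u)$ holds in $\gamma_x$, so $\status_u = RB$ and $\PReset(u)$ hold there, and $\status_u = RF$ with $\PReset(u)$ still holding in $\gamma_{x+1}$ (the rule only writes $\status_u$). As long as $u$ does not move again, these two facts are preserved: Algorithm {\tt I} cannot touch $u$'s variables since $\neg\PClean(u)$ holds (Requirement~\ref{RQ3}), and no neighbor writes $\status_u$. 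While $\status_u = RF$ and $\PReset(u)$, the guards of $\ruleRB(u)$ and $\ruleRF(u)$ are false (they need $\status_u = C$, resp. $\status_u = RB$) and $\PRUp(u)$ is false (it forces $\status_u = C$ or $\neg\PReset(u)$), so $\ruleR(u)$ is disabled. Hence the first rule $u$ could possibly execute after $\gamma_{x+1}$ is $\ruleC$, and it suffices to prove that $\ruleC(u)$ is never executed in the suffix $\gamma_{x+1}\cdots\gamma_j$. I would then split on whether $\PRoot(u)$ holds in $\gamma_x$; by Lemma~\ref{lem:mutex:2}, $\PF(u)$ implies $\neg\PRUp(u)$, so $u$ is an alive root in $\gamma_x$ if and only if $\PRoot(u)$ holds.

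If $\PRoot(u)$ holds in $\gamma_x$, then $u \in AR(\gamma_x)$, whereas in $\gamma_{x+1}$ we have $\status_u = RF$ (so $\neg\PRoot(u)$) and, as computed above, $\neg\PRUp(u)$, hence $u \notin AR(\gamma_{x+1})$. Since $AR$ is non-increasing (Remark~\ref{rem:inclusion}), $|AR(\gamma_{x+1})| < |AR(\gamma_x)|$. The existence of a cardinality-decreasing step inside $S$ places us in the second case of the definition of a segment, where such a decrease occurs only at the last step $\gamma_{j-1}\mapsto\gamma_j$; therefore $x = j-1$, the suffix is empty, and the claim holds vacuously.

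Otherwise $\neg\PRoot(u)$ holds in $\gamma_x$. Then $u$ is neither an alive root (neither $\PRoot(u)$ nor $\PRUp(u)$) nor a dead root (its status is $RB$), so by Remark~\ref{rem:unless} it is an interior node of some reset branch $u_1,\ldots,u_\ell = u$ with $\ell \geq 2$. Since $\status_{u_\ell} = RB$, Lemma~\ref{lem:trace}.\ref{lem:trace:2} forces $\status_{u_1} = \cdots = \status_{u_\ell} = RB$, so the initial extremity $u_1$ is an alive root in $\gamma_x$, i.e. $u_1 \in AR(\gamma_x)$. By construction of a segment, $AR$ takes the same value in every configuration of $S$ but possibly the last; together with Remark~\ref{rem:inclusion} this makes $u_1$ an alive root in each $\gamma_t$ with $x \leq t \leq j-1$. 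Applying Lemma~\ref{still:alive} step by step from $\gamma_x$, the whole branch $u_1,\ldots,u_\ell$ then persists in every such $\gamma_t$.

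Finally I would argue by contradiction: assume $u$ executes a rule of \A in the suffix and let $\gamma_y \mapsto \gamma_{y+1}$, with $x+1 \leq y \leq j-1$, be the first such step. By the reduction of the first paragraph, $u$ still has $\status_u = RF$ and $\PReset(u)$ in $\gamma_y$, so the executed rule is $\ruleC(u)$ and $\PC(u)$ holds there. But the branch persists in $\gamma_y$, so $RParent(u_{\ell-1},u_\ell)$ holds, exhibiting a neighbor $u_{\ell-1}$ of $u$ with $\status_{u_{\ell-1}} \neq C$ and $\dist_{u_{\ell-1}} < \dist_u$; this contradicts $\PC(u)$, which requires every neighbor to have status $C$ or status $RF$ with distance at least $\dist_u$. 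Hence $u$ executes no further rule of \A in $S$. I expect the main obstacle to be precisely this third case: one must combine the fact that $AR$ is constant inside a segment with the persistence lemma (Lemma~\ref{still:alive}) to guarantee that $u$'s branch, and in particular its strictly-lower-distance parent, survives long enough to keep the cleaning rule $\ruleC(u)$ disabled throughout the remainder of $S$.
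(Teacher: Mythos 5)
Your proposal is correct and takes essentially the same approach as the paper's proof: both reduce the statement to showing that $\ruleC(u)$, the only rule of \A that $u$ could execute after $\ruleRF(u)$, cannot fire in the remainder of $S$, and both obtain this from the persistence of a reset branch rooted at an alive root (Lemma~\ref{still:alive}) combined with the monotonicity of the alive-root set (Theorem~\ref{theo:pseudoRoots}, Remark~\ref{rem:inclusion}) and the definition of segments. The paper merely organizes the argument in the other direction---if $\ruleC(u)$ did fire at some later step, the alive root of $u$'s branch would have to die in between, placing the two steps in distinct segments---whereas you argue that inside $S$ the alive-root set stays constant, so the branch and in particular $u$'s smaller-distance parent survive and keep $\PC(u)$ false; the substance, case structure, and key lemmas coincide.
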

\begin{proof}
  Let~$\gamma_x \mapsto \gamma_{x+1}$ be a step of $S$ in which $u$
  executes~$\ruleRF$. Let~$\gamma_y \mapsto \gamma_{y+1}$ (with $y >
  x$) be the next step in which $u$ executes its next rule of \A.  (If
  $\gamma_x \mapsto \gamma_{x+1}$ or $\gamma_y \mapsto \gamma_{y+1}$
  does not exist, then the lemma trivially holds.)
  Then, since $\ruleRF(u)$ is enabled in $\gamma_x$, $\neg \PRUp(u)$
  holds in $\gamma_x$, by Lemma~\ref{lem:mutex:2}. Consequently, $\neg
  \PRUp(u)$ holds forever from $\gamma_x$, by
  Corollary~\ref{cor:correct}.  Hence, from the code of \A and
  Requirement~\ref{RQ3}, $u$ necessarily executes~$\ruleC$ in
  $\gamma_y \mapsto \gamma_{y+1}$ since $\status_u = RF \wedge \neg
  \PRUp(u)$ holds in $\gamma_y$.
  In $\gamma_x$, since $\status_u = RB$, $u$ belongs to some reset branches (Remark~\ref{rem:unless}) and 
all reset branches containing~$u$ have an alive root (maybe
  $u$) of status $RB$ (Lemma~\ref{lem:trace}). 
Let $v$ be any  alive
  root belonging to a reset branch containing~$u$ in $\gamma_x$.  
In
  $\gamma_y$, $u$ is the dead root, since $\PC(u)$ holds (Remark~\ref{rem:dead}). 
By Lemma~\ref{lem:trace}, either $u = v$ or $u$ no more belong to a reset branch whose initial extremity is $v$. 
By Lemma~\ref{still:alive} and  Theorem~\ref{theo:pseudoRoots}, $v$ is no more an alive root in  $\gamma_y$.
  Still by Theorem~\ref{theo:pseudoRoots}, the number of alive roots
  necessarily decreased between $\gamma_x$ and $\gamma_y$: $\gamma_x
  \mapsto \gamma_{x+1}$ and $\gamma_y \mapsto \gamma_{y+1}$ belong to
  two distinct segments of the execution.
\end{proof}

\begin{theorem}\label{theo:seg}
  The sequence of rules of \A executed by a process $u$ in a
  segment of execution of {\tt I} $\circ$ \A belongs to the following language:
$$(\ruleC + \varepsilon)\ (\ruleRB + \ruleR + \varepsilon)\ (\ruleRF + \varepsilon)$$
\end{theorem}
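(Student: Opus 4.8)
The plan is to anchor the whole argument on Lemma~\ref{lem:rfSeg}, which tells us that within a single segment $\ruleRF$ is \emph{terminal}: once $u$ plays $\ruleRF$, it plays no further rule of \A in that segment. This immediately accounts for the rightmost factor of the language, since $\ruleRF$ can occur at most once and, if it occurs, only as the last \A move of $u$ in the segment. What remains is to understand the moves preceding this (optional) terminal $\ruleRF$, which are necessarily drawn from $\{\ruleC, \ruleRB, \ruleR\}$, and to show they reduce to the prefix $(\ruleC + \varepsilon)(\ruleRB + \ruleR + \varepsilon)$.

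The crux, and what I expect to be the main obstacle, is the following claim that I would establish first: once $u$ executes $\ruleRB$ or $\ruleR$, the only rule of \A it can subsequently execute in the segment is $\ruleRF$. Both $\ruleRB$ and $\ruleR$ run $\reset(u)$, so by Requirement~\ref{RQ5} the predicate $\PReset(u)$ holds in the resulting configuration, where moreover $\status_u = RB$. The delicate point is to argue that $\PReset(u)$ then \emph{persists} as long as $\status_u = RB$: since $\neg\PClean(u)$ holds, Requirement~\ref{RQ3} disables every rule of {\tt I} at $u$, and {\tt I} never writes \A's variables (Requirement~\ref{RQ1}); by Requirement~\ref{RQ4} the only variables able to falsify $\PReset(u)$ are $u$'s own {\tt I}-variables, which can be modified only by $\reset(u)$, i.e.\ only by $\ruleRB$ or $\ruleR$. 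But with $\status_u = RB$ and $\PReset(u)$, rule $\ruleRB$ is disabled ($\PB$ needs $\status_u = C$), rule $\ruleC$ is disabled ($\PC$ needs $\status_u = RF$), and $\ruleR$ is disabled as well, because for $\status_u \neq C$ the predicate $\PRUp(u)$ reduces to $\PRb(u)$, hence to $\neg\PReset(u)$, which is false. Thus no rule rewriting $u$'s {\tt I}-variables fires, $\PReset(u)$ keeps holding, and $\ruleRF$ is the only rule that can remain enabled. In particular $\ruleRB$/$\ruleR$ is played at most once before the terminal $\ruleRF$, and $\ruleC$ can never follow it.

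It then remains to place $\ruleC$ correctly, which I would do in two steps. First, $\ruleC$ occurs at most once: immediately after a $\ruleC$ we have $\status_u = C$, and reaching $\status_u = RF$ again—required for $\ruleC$, since $\PC$ forces $\status_u = RF$—would need an intervening $\ruleRF$, contradicting its terminality. Second, $\ruleC$ must be the very first \A move of $u$ in the segment: any earlier move would be $\ruleRF$ (excluded, being terminal) or $\ruleRB$/$\ruleR$ (excluded by the claim above, after which only $\ruleRF$ may follow). Assembling these facts—$\ruleC$ first and at most once, then $\ruleRB$/$\ruleR$ at most once with $\ruleC$ never following, then the terminal $\ruleRF$ at most once—shows that the sequence of \A rules executed by $u$ in the segment lies in $(\ruleC + \varepsilon)(\ruleRB + \ruleR + \varepsilon)(\ruleRF + \varepsilon)$, as required.
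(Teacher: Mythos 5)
Your proof is correct and takes essentially the same route as the paper's: both rest on Lemma~\ref{lem:rfSeg} to make $\ruleRF$ terminal within a segment, and on the persistence of $\PReset(u)$ (via Requirements~\ref{RQ1}, \ref{RQ3}, \ref{RQ4}, and~\ref{RQ5}) to show that after $\ruleRB$ or $\ruleR$ the only possible next rule of \A is $\ruleRF$. The only difference is organizational --- the paper argues forward (what rule can follow each rule, closing the cycle $\ruleC \to \ruleRB/\ruleR \to \ruleRF \to \ruleC$ and cutting it at segment boundaries), whereas you fix $\ruleRF$ as terminal and then place $\ruleC$ by excluding possible predecessors --- but the ingredients and logic are identical.
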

\begin{proof}
  From the code of \A and Requirement~\ref{RQ3}, we know that after
  any execution of $\ruleC(u)$, the next rule of \A $u$ will execute
  (if any), is either $\ruleRB$ or $\ruleR$. Similarly, immediately
  after an execution of $\ruleRB(u)$ (resp., $\ruleR(u)$), $\status_u
  = RB \wedge \PReset(u)$ holds (see Requirement~\ref{RQ5}) and
  $\PReset(u)$ holds while $u$ does not switch to status $C$
  (Requirements~\ref{RQ4} and \ref{RQ3}). So the next rule of \A $u$
  will execute (if any) is $\ruleRF$. Finally, immediately after any
  execution of $\ruleRF(u)$, $\status_u = RF \wedge \PReset(u)$ holds
  until (at least) the next execution of a rule of \A since
  $\PReset(u)$ holds while $u$ does not switch to status $C$
  (Requirements~\ref{RQ4} and \ref{RQ3}). Then, the next rule of \A
  $u$ will execute (if any) is $\ruleC$. However, if this latter case
  happens, $\ruleRF(u)$ and $\ruleC(u)$ are executed in different
  segments, by Lemma~\ref{lem:rfSeg}.
\end{proof}

Since a process can execute rules of {\tt I} only if its status is
$C$, we have the following corollary.

\begin{corollary}\label{cor:seg}
 The sequence of rules executed by a process $u$ in a segment of execution of {\tt I} $\circ$ \A  belongs to the following language:
 $$(\ruleC + \varepsilon)\ \wordsI\ (\ruleRB + \ruleR + \varepsilon)\ (\ruleRF + \varepsilon)$$
 where $\wordsI$ is any sequence of rules of {\tt I}.
\end{corollary}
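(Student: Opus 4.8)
The plan is to derive Corollary~\ref{cor:seg} directly from Theorem~\ref{theo:seg} by inserting the rules of {\tt I} into the word structure established there, using the fact that {\tt I} and \A are mutually exclusive (Remark~\ref{rem:mutex}) together with the constraint from Requirement~\ref{RQ3} that {\tt I} is enabled at $u$ only when $\status_u = C$. Theorem~\ref{theo:seg} tells us that the subword of \A-rules executed by $u$ in a single segment lies in $(\ruleC + \varepsilon)(\ruleRB + \ruleR + \varepsilon)(\ruleRF + \varepsilon)$; the goal is to show that any \A-rule executed by {\tt I} can only appear in the gap between the initial $\ruleC$ and the subsequent $\ruleRB/\ruleR$.

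\medskip

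\noindent\textbf{Key steps.} First I would recall that a rule of {\tt I} can only be executed at $u$ while $\status_u = C$. This follows from Requirement~\ref{RQ3}: if $\neg\PClean(u)$ holds (in particular if $\status_u \neq C$), then no rule of {\tt I} is enabled at $u$. Hence, in any step where $u$ executes a rule of {\tt I}, we must have $\status_u = C$ in that configuration. Second, I would track the value of $\status_u$ along the segment using the same case analysis that underlies Theorem~\ref{theo:seg}: at the start of the segment $u$ may have any status, but once $u$ executes $\ruleRB$ or $\ruleR$, Requirement~\ref{RQ5} gives $\status_u = RB \And \PReset(u)$, and $\PReset(u)$ persists (Requirements~\ref{RQ4} and~\ref{RQ3}) so $\status_u$ remains different from $C$ until the next \A-rule, which by Theorem~\ref{theo:seg} is $\ruleRF$ (if any), after which $\status_u = RF$. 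Thus from the moment $u$ first executes $\ruleRB$ or $\ruleR$ in the segment, $\status_u \neq C$ holds for the rest of the segment, so by the first step no rule of {\tt I} can be executed after that point. Symmetrically, before any \A-rule fires (or between an initial $\ruleC$ and the next \A-rule), $u$ may have status $C$ and so may freely execute rules of {\tt I}: precisely the window $\wordsI$ in the claimed language.

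\medskip

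\noindent I would then combine these observations: the only place in the word of Theorem~\ref{theo:seg} where $\status_u = C$ can hold (and therefore where rules of {\tt I} can be interleaved) is between the optional initial $\ruleC$ and the optional $\ruleRB/\ruleR$. Inserting an arbitrary sequence $\wordsI$ of {\tt I}-rules there yields exactly $(\ruleC + \varepsilon)\ \wordsI\ (\ruleRB + \ruleR + \varepsilon)\ (\ruleRF + \varepsilon)$, which is the claimed language. The mutual exclusivity of \A and {\tt I} (Remark~\ref{rem:mutex}) guarantees that in no single step does $u$ execute both an \A-rule and an {\tt I}-rule, so interleaving at the level of the rule-sequence is well defined and the two subwords do not interfere.

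\medskip

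\noindent\textbf{Main obstacle.} The delicate point is ruling out {\tt I}-rules \emph{after} the $\ruleC$ at the very start of the segment but in a way that does not conflict with the claim that {\tt I}-rules appear only in the single window $\wordsI$: I must be sure that once $u$ leaves status $C$ (by executing $\ruleRB$ or $\ruleR$) it cannot return to status $C$ within the same segment and thereby open a second window for {\tt I}-rules. This is exactly what Theorem~\ref{theo:seg} forbids at the level of \A-rules (no $\ruleC$ can follow a $\ruleRB/\ruleR/\ruleRF$ within one segment, the return to $C$ via $\ruleC$ being pushed to a later segment by Lemma~\ref{lem:rfSeg}), so the status of $u$ never returns to $C$ after the first $\ruleRB$ or $\ruleR$. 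The argument therefore reduces entirely to bookkeeping on $\status_u$, with the real content already carried by Theorem~\ref{theo:seg} and Requirement~\ref{RQ3}.
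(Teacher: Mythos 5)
Your proposal follows the paper's own route: the paper derives this corollary from Theorem~\ref{theo:seg} in a single sentence --- \emph{a process can execute rules of {\tt I} only if its status is $C$} --- and your proof is an expanded version of exactly that bookkeeping on $\status_u$, using Requirement~\ref{RQ3} and the persistence of a non-$C$ status after $\ruleRB$ or $\ruleR$ within a segment.

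One spot needs tightening, because as literally written it would permit a word outside the claimed language. You write that ``before any \A-rule fires \dots, $u$ may have status $C$ and so may freely execute rules of {\tt I}''; but this is precisely the case that must be excluded when the segment's first \A-rule is $\ruleC$: if rules of {\tt I} could precede that $\ruleC$, the executed word would be of the form $\wordsI\ \ruleC \cdots$, which does not belong to $(\ruleC + \varepsilon)\ \wordsI\ (\ruleRB + \ruleR + \varepsilon)\ (\ruleRF + \varepsilon)$. The missing (one-line) observation is that the guard $\PC(u)$ of $\ruleC$ requires $\status_u = RF$, and since only rules of \A write $\status_u$ (Requirement~\ref{RQ1}), $\status_u = RF \neq C$ holds from the start of the segment up to that first execution of $\ruleC$; hence $\neg\PClean(u)$ holds throughout that prefix and, by Requirement~\ref{RQ3}, no rule of {\tt I} can fire there. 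With that line added, your case analysis covers both ends of the $\wordsI$ window and the corollary follows as you describe.
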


From Remark~\ref{rem:nbseg} and Theorem~\ref{theo:seg}, follows.

\begin{corollary}
Any process $u$ executes at most $3n+3$ rules of \A in any execution  of {\tt I} $\circ$ \A. 
\end{corollary}

Let $S = \gamma_0 \cdots \gamma_j \cdots$ be a segment of execution of
{\tt I} $\circ$ \A.
Let $c_{\tt I}^S$ be the configuration of {\tt I} in which every
process $u$ has the local state $c_{\tt I}^S(u)$ defined below.
  \begin{enumerate}
 \item  $c_{\tt I}^S(u) = \gamma_{0|{\tt I}}(u)$, if $u$ never satisfies $\st_u =
   C$ in $S$,
   \item $c_{\tt I}^S(u) = \gamma_{i|{\tt I}}(u)$ where $\gamma_i$ is
  the first configuration such that $\st_u = C$ in $S$, otherwise.
\end{enumerate}
The following lemma is a useful tool to show the convergence of {\tt
  I} $\circ$ \A.

\begin{lemma}\label{lem:transfert}
  Let $S = \gamma_0 \cdots \gamma_j \cdots$ be a segment of execution
  of {\tt I} $\circ$ \A. For every process $u$, let $\wordsI(u)$ be
  the (maybe empty) sequence of rules of {\tt I} executed by $u$ in
  $S$.  There is a prefix of execution of {\tt I} starting in $c_{\tt
    I}^S$ that consists of the executions of $\wordsI(u)$, for every
  process $u$.
  \end{lemma}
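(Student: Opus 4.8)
The plan is to obtain the claimed execution by \emph{projecting} the segment $S$ onto its {\tt I}-moves. For each step $\gamma_t \mapsto \gamma_{t+1}$ of $S$, let $X_t$ be the set of processes that execute a rule of {\tt I} in that step; by Remark~\ref{rem:mutex}, any such process executes no rule of \A in the same step, so $X_t$ is exactly the set of {\tt I}-movers. I define a candidate sequence $\delta_0\delta_1\cdots$ with $\delta_0 = c_{\tt I}^S$ by replaying, in the order given by $S$ and discarding the steps with $X_t = \emptyset$, each remaining step as the simultaneous activation of the processes of $X_t$, each performing the very {\tt I}-rule (and nondeterministic choice of action) it used in $S$. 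What must then be checked is that this is a genuine (prefix of) execution of {\tt I}: at every replayed step each activated process is {\tt I}-enabled with the prescribed action.

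For a step index $t$ and a process $w$, let $m_t(w)$ be the number of {\tt I}-moves $w$ has performed in $\gamma_0\cdots\gamma_t$; since a process modifies its own {\tt I}-variables only through its own moves, these moves are exactly the first $m_t(w)$ symbols of $\wordsI(w)$. Write $\sigma_t(w)$ for the {\tt I}-state obtained from $c_{\tt I}^S(w)$ by applying those first $m_t(w)$ moves. The key structural fact, read off from Corollary~\ref{cor:seg}, is that \emph{whenever $\status_w = C$ in $\gamma_t$, one has $\gamma_{t|{\tt I}}(w) = \sigma_t(w)$}: indeed, Corollary~\ref{cor:seg} forces $w$'s rule sequence in the segment to lie in $(\ruleC + \varepsilon)\,\wordsI\,(\ruleRB + \ruleR + \varepsilon)\,(\ruleRF + \varepsilon)$, so a process at status $C$ in $\gamma_t$ has so far executed only an optional $\ruleC$ (which leaves its {\tt I}-state untouched) followed by a prefix of its {\tt I}-moves---the later rules $\ruleRB$, $\ruleR$, $\ruleRF$ would already have set $\status_w \neq C$---and $c_{\tt I}^S(w)$ is by definition the {\tt I}-state recorded at the first configuration of $S$ where $\status_w = C$.

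I would then prove, by induction on $t$, the invariant $\delta_t(w) = \sigma_t(w)$ for every process $w$. The base case $t=0$ is the definition of $c_{\tt I}^S$. For the inductive step, consider a replayed step at index $t$ and any $u \in X_t$. Executing an {\tt I}-rule forces $\PClean(u)$ by Requirement~\ref{RQ3}, so every $w \in \N[u]$ has status $C$ in $\gamma_t$; combining the structural fact with the induction hypothesis gives $\gamma_{t|{\tt I}}(w) = \sigma_t(w) = \delta_t(w)$ for all $w \in \N[u]$. As the guard and action of an {\tt I}-rule of $u$ involve only the {\tt I}-variables of $\N[u]$, the guard that held in $\gamma_t$ still holds in $\delta_t$ and the action has the same effect; this applies simultaneously to all members of $X_t$, so the replayed step is a legal step of {\tt I} and it updates exactly the $\sigma$-values, preserving the invariant. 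Since, by construction, each process $u$ performs in the replay precisely the moves of $\wordsI(u)$ in their original order, the resulting sequence is the desired prefix of execution of {\tt I} starting in $c_{\tt I}^S$.

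The delicate point---and the one I would treat most carefully---is the bookkeeping around $c_{\tt I}^S$ for processes that \emph{never} reach status $C$ in $S$. Such a process performs no {\tt I}-move, so $\wordsI(w)$ is empty and the invariant reduces to $\delta_t(w) = c_{\tt I}^S(w) = \gamma_{0|{\tt I}}(w)$, which indeed holds throughout the replay; crucially, its {\tt I}-state is never consulted by a replayed step, because a neighbor's {\tt I}-move would (via $\PClean$) force $w$ to status $C$, contradicting the hypothesis. Hence the fact that such a $w$ may have its {\tt I}-state altered in $S$ by a $\reset$ inside $\ruleRB$ or $\ruleR$ is harmless: it never invalidates the structural fact, which is asserted only for status-$C$ processes.
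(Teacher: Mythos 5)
Your proof is correct and follows essentially the same route as the paper's: an induction over the steps of the segment that replays the {\tt I}-moves, using Requirement~\ref{RQ3} (via $\PClean$) to localize each replayed step to a closed neighborhood of status-$C$ processes, and Corollary~\ref{cor:seg} to identify their actual {\tt I}-states with the replayed ones. Your explicit justification of the ``structural fact'' and of the never-status-$C$ edge case merely spells out details the paper's proof leaves implicit.
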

\begin{proof}
  For every process $u$, for every $i \geq 0$, let
  $\prewordsI(\gamma_i,u)$ be the prefix of $\wordsI(u)$ executed by
  process $u$ in the prefix $\gamma_0 \cdots \gamma_i$ of $S$.  For
  every $i \geq 0$, let $\confI(\gamma_i)$ be the configuration of
  {\tt I} in which every process $u$ has the local state
  $\confI(\gamma_i)(u)$ defined below.
\begin{itemize}
\item $\confI(\gamma_i)(u)= c_{\tt I}^S(u)$, if $\prewordsI(\gamma_i,u)
  = \emptyset$,
\item $\confI(\gamma_i)(u) = s$ where $s$ is the state assigned to $u$
  by the execution of its last rule of $\prewordsI(\gamma_i,u)$ in the
  prefix $\gamma_0 \cdots \gamma_i$ of $S$, otherwise.
\end{itemize}
The lemma is immediate from the following induction: for every $i\geq
0$, there is a possible prefix of execution of {\tt I} that starts
from $c_{\tt I}^S$, ends in $\confI(\gamma_i)$, and consists of the
executions of $\prewordsI(\gamma_i,u)$, for every process $u$.
The base case ($i = 0$) is trivial. 
Assume the induction holds for
some $i\geq 0$ and consider the case $i+1$. 
For every process $u$, either $u$ does not execute any rule of {\tt I}
in $\gamma_{i} \mapsto \gamma_{i+1}$ and $\prewordsI(\gamma_{i+1},u) =
\prewordsI(\gamma_{i},u)$, or $u$ executes some rule $R_u$ of {\tt I}
in $\gamma_{i} \mapsto \gamma_{i+1}$ and $\prewordsI(\gamma_{i+1},u) =
\prewordsI(\gamma_{i},u)\cdot R_u$.  
If all processes satisfy the
former case, then, by induction hypothesis, we are done. 
Otherwise, by induction hypothesis, it is sufficient to show the
transition from $\confI(\gamma_i)(u)$ to $\confI(\gamma_{i+1})(u)$
consisting of the execution of $R_u$ by every process $u$ such that
$\prewordsI(\gamma_{i+1},u) = \prewordsI(\gamma_{i},u)\cdot R_u$ is a
possible step of {\tt I}.
To see this,
consider any process $u$ such that $\prewordsI(\gamma_{i+1},u) =
\prewordsI(\gamma_{i},u)\cdot R_u$. Since $u$ executes $R_u$ in
$\gamma_{i} \mapsto \gamma_{i+1}$, $\PClean(u)$ holds in $\gamma_{i}$
meaning that every member $v$ of $N[u]$ (in particular, $u$) satisfies
$\st_v = C$ in $\gamma_i$. Then, by Corollary \ref{cor:seg}, every $v$
(in particular, $u$) is in the state $\confI(\gamma_i)(u)$. Hence, as
$R_u$ is enabled in $\gamma_i$, $R_u$ is enabled in $\confI(\gamma_i)$
too, and we are done.
\end{proof}

\subsubsection{Round Complexity.}

Below, we use the notion of attractor, defined at the beginning of the section.

\begin{definition}[Attractors]$~$
\begin{itemize}
\item Let $\mathcal{P}_1$ a predicate over configurations of {\tt I}
  $\circ$ \A which is true if and only if $\neg\PRUp(u)$ holds, for
  every process $u$.

\item Let $\mathcal{P}_2$ a predicate over configurations of {\tt I}
  $\circ$ \A which is true if and only if (1) $\mathcal{P}_1$ holds
  and (2) $\neg\PB(u)$ holds, for every process $u$.

\item Let $\mathcal{P}_3$ a predicate over configurations of {\tt I}
  $\circ$ \A which is true if and only if (1) $\mathcal{P}_2$ holds
  and (2) $\status_u \neq RB$, for every process $u$.

\item $\mathcal{P}_4$ a predicate over configurations of {\tt I}
  $\circ$ \A which is true if and only if (1) $\mathcal{P}_3$ holds
  and (2) $\status_u \neq RF$, for every process $u$.

  In the following, we call {\em normal configuration} any
  configuration satisfying $\mathcal{P}_4$.
\end{itemize}
\end{definition}

\begin{lemma}
  \label{lem:att1} $\mathcal{P}_1$ is an attractor for {\tt I} $\circ$
  \A. Moreover, {\tt I} $\circ$ \A converges from $true$ to
  $\mathcal{P}_1$ within at most one round.
\end{lemma}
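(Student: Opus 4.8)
The plan is to verify the two defining conditions of an attractor separately, leaning on the closure result already in hand.

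Closure is essentially free. By definition $\mathcal{P}_1$ is the conjunction over all processes of $\neg\PRUp(u)$, and Corollary~\ref{cor:correct} states that each $\neg\PRUp(u)$ is closed by {\tt I} $\circ$ \A. A conjunction of closed predicates is closed, so $\mathcal{P}_1$ is closed.

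For the convergence bound, I would fix an arbitrary execution $e = \gamma_0\gamma_1\cdots$, let $\gamma_0\cdots\gamma_j$ be its first round, and show that $\neg\PRUp(u)$ holds in $\gamma_j$ for every process $u$. Processes already satisfying $\neg\PRUp$ in $\gamma_0$ are handled immediately by closure, so the real work concerns a process $u$ with $\PRUp(u)$ in $\gamma_0$. Such a $u$ is enabled in $\gamma_0$ (since $\PRUp(u)$ is exactly the guard of $\ruleR(u)$), so by the round definition it either executes a rule or is neutralized at some step of the round. If $\neg\PRUp(u)$ happens to hold at any configuration of the round, closure finishes the argument; hence I may assume $\PRUp(u)$ holds at every configuration up to and including $\gamma_i$, where $\gamma_i\mapsto\gamma_{i+1}$ is the first step at which $u$ moves or is neutralized. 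If $u$ is neutralized at this step, it is disabled in $\gamma_{i+1}$, so $\ruleR(u)$ is disabled there and $\neg\PRUp(u)$ holds. If instead $u$ moves, then since $\PRUp(u)$ holds in $\gamma_i$, mutual exclusivity (Lemma~\ref{lem:mutex:2} together with Remark~\ref{rem:mutex}) forces the executed rule to be $\ruleR$. The crux is the resulting-state computation: after $\ruleR(u)$, macro $\beRoot(u)$ gives $\status_u = RB$ and $\reset(u)$ makes $\PReset(u)$ hold in $\gamma_{i+1}$ (Requirement~\ref{RQ5}); consequently $\PRa(u)$ is false (it requires $\status_u = C$), $\PRb(u)$ is false (it requires $\neg\PReset(u)$), and $\correct(u)$ holds vacuously (since $\status_u \neq C$), so $\neg\PRUp(u)$ holds in $\gamma_{i+1}$. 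In all cases $\neg\PRUp(u)$ holds by $\gamma_{i+1}$ with $i+1 \leq j$, and closure propagates it to $\gamma_j$.

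I expect the main obstacle to be the bookkeeping around the round semantics rather than any deep computation: a process satisfying $\PRUp(u)$ need not keep satisfying it throughout the round (a neighbour switching to status $RB$ could instead enable $\ruleRB(u)$, and a neighbour's move could neutralize $u$), so the argument must not assume $u$ stays $\ruleR$-enabled until it fires. Splitting off the ``$\neg\PRUp(u)$ already holds somewhere'' case up front, and otherwise examining only the first step at which $u$ moves or is neutralized, isolates exactly the situation in which $u$ must execute $\ruleR$ and thereby establishes $\neg\PRUp(u)$; from there closure does the rest.
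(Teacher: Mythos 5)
Your proof is correct and follows essentially the same route as the paper's: closure of $\mathcal{P}_1$ via Corollary~\ref{cor:correct}, plus the two key facts that $\PRUp(u)$ makes $\ruleR(u)$ enabled and that any move under $\PRUp(u)$ must be $\ruleR(u)$ (by Remark~\ref{rem:mutex} and Lemma~\ref{lem:mutex:2}), after which $\status_u = RB \wedge \PReset(u)$ falsifies $\PRUp(u)$ by Requirement~\ref{RQ5}. You merely spell out the round-semantics bookkeeping and the predicate evaluation that the paper leaves implicit.
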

\begin{proof}
  First, for every process $u$, $\neg\PRUp(u)$ is closed by {\tt I}
  $\circ$ \A (Corollary \ref{cor:correct}). Consequently,
  $\mathcal{P}_1$ is closed by {\tt I} $\circ$ \A. Moreover, to show
  that {\tt I} $\circ$ \A converges from $true$ to $\mathcal{P}_1$ within at most
  one round, it is sufficient to show that any process $p$ satisfies
  $\neg\PRUp(u)$ during the first round of any execution of {\tt I}
  $\circ$ \A. This property is immediate from the following two claims.
\begin{description}
\item[Claim 1:] {\em If $\PRUp(u)$, then $u$ is enabled.}

{\em Proof of the claim:}  By definition of $\ruleR(u)$.
\item[Claim 2:] {\em If $\PRUp(u)$ holds in $\gamma$ and $u$ moves in the next step $\gamma \mapsto \gamma'$, then $\neg\PRUp(u)$ holds in $\gamma'$.}

  {\em Proof of the claim:} First, by Remark~\ref{rem:mutex},
  Lemma~\ref{lem:mutex:2}, and the guard of $\ruleR(u)$, $\ruleR(u)$
  is executed in $\gamma \mapsto \gamma'$. Then, immediately after
  $\ruleR(u)$, $\status_u = RB$ and $\PReset(u)$ holds (see
  Requirement~\ref{RQ5}), now $\status_u = RB$ and $\PReset(u)$
  implies $\neg\PRUp(u)$.
\end{description}

\end{proof}

\begin{lemma}
\label{lem:att2-Closure}
$\mathcal{P}_2$ is closed by {\tt I} $\circ$ \A.
\end{lemma}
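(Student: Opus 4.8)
The plan is to reduce the closure of $\mathcal{P}_2$ to the closure of its second conjunct, since the first is already settled. Indeed, $\mathcal{P}_2$ implies $\mathcal{P}_1$, and $\mathcal{P}_1$ is closed by {\tt I} $\circ$ \A by Lemma~\ref{lem:att1}. Hence, assuming $\mathcal{P}_2$ holds in $\gamma$ and $\gamma \mapsto \gamma'$ is any step of {\tt I} $\circ$ \A, it would remain only to show that $\neg\PB(u)$ holds in $\gamma'$, for every process $u$.

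The key observation I would use is that no process can \emph{acquire} status $RB$ during this step. Indeed, since $\mathcal{P}_2$ holds in $\gamma$, every process $w$ satisfies $\neg\PB(w) \And \neg\PRUp(w)$, so that rules $\ruleRB(w)$ and $\ruleR(w)$ are both disabled in $\gamma$. As these are the only two rules of \A whose action assigns status $RB$ (through the macros $\compute$ and $\beRoot$), and as rules of {\tt I} do not write $\status_w$ (Requirement~\ref{RQ1}), any process having status $RB$ in $\gamma'$ must already have status $RB$ in $\gamma$.

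I would then argue by contradiction. Suppose $\PB(u)$ holds in $\gamma'$ for some process $u$: then $\status_u = C$ in $\gamma'$ and some neighbor $v \in \N(u)$ satisfies $\status_v = RB$ in $\gamma'$; by the observation above, $\status_v = RB$ already in $\gamma$. It then suffices to examine how $u$ can have status $C$ in $\gamma'$. If $\status_u = C$ already in $\gamma$, then $\neg\PB(u)$ in $\gamma$ forbids any neighbor of $u$ from having status $RB$ in $\gamma$, contradicting $\status_v = RB$. Otherwise $\status_u \neq C$ in $\gamma$ while $\status_u = C$ in $\gamma'$, which, from the code of \A (the only rule assigning status $C$ being $\ruleC$) and Lemma~\ref{lem:mutex:2}, forces $u$ to execute $\ruleC(u)$; but then $\PC(u)$ holds in $\gamma$, so every neighbor of $u$ has status $RF$ or $C$ in $\gamma$, again contradicting $\status_v = RB$. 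In both cases one reaches a contradiction, hence $\neg\PB(u)$ holds in $\gamma'$.

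The only delicate point is the observation of the second paragraph: one cannot simply claim that no process has status $RB$ in $\gamma'$, since a process already broadcasting in $\gamma$ may well keep status $RB$ (or switch to $RF$ via $\ruleRF$). The correct and sufficient statement is that no status $RB$ is \emph{created}, which is exactly what the disabling of $\ruleRB$ and $\ruleR$ under $\mathcal{P}_2$ guarantees; everything else is a routine case analysis on the status of $u$ in $\gamma$.
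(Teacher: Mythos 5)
Your proof is correct and follows essentially the same route as the paper's: both reduce the claim to the closure of $\neg\PB(u)$ (using the closure of $\mathcal{P}_1$, mutual exclusivity, and Requirement~\ref{RQ1}), both rest on the key fact that $\ruleRB$ and $\ruleR$ are disabled at every process under $\mathcal{P}_2$ so no status $RB$ is created, and both conclude by a case analysis on $\status_u$ in which the guard of $\ruleC(u)$ rules out $RB$ neighbors. Your contradiction framing versus the paper's direct case analysis is only a cosmetic difference.
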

\begin{proof} By Requirement~\ref{RQ1}, $\neg\PB(u)$ is closed by {\tt
    I}, for every process $u$. So, by Lemma~\ref{lem:att1} and
  Remark~\ref{rem:mutex}, it is sufficient to show that for every step
  $\gamma \mapsto \gamma'$ of  {\tt I} $\circ$ \A such that $\mathcal{P}_2$ holds in $\gamma$, for every process $u$, if $u$ executes a rule of \A in $\gamma \mapsto \gamma'$, then $\neg\PB(u)$ still holds in $\gamma'$.

So, assume any such step $\gamma \mapsto \gamma'$ and any process $u$. 
\begin{itemize}
\item If $\status_u = C$ in $\gamma$, then $\forall v \in \N(u)$, $\status_v \neq RB$ in $\gamma$, since $\gamma$ satisfies $\mathcal{P}_2$. Now, no rule $\ruleRB$ or $\ruleR$ can be executed in $\gamma \mapsto \gamma'$ since $\gamma$ satisfies $\mathcal{P}_2$. So, $\forall v \in \N(u)$, $\status_v \neq RB$ in $\gamma'$, and consequently $\neg\PB(u)$ still holds in $\gamma'$.
\item If $\status_u \neq C$ in $\gamma'$, then  $\neg\PB(u)$ holds in $\gamma'$.
\item Assume now that $\status_u \neq C$ in $\gamma$ and  $\status_u = C$ in $\gamma'$. Then, $u$ necessarily executes $\ruleC$ in $\gamma \mapsto \gamma'$. In this case,  $\forall v \in \N(u)$, $\status_v \neq RB$ in $\gamma$. Now, no rule $\ruleRB$  or $\ruleR$ can be executed in $\gamma \mapsto \gamma'$ since $\gamma$ satisfies $\mathcal{P}_2$. So, $\forall v \in \N(u)$, $\status_v \neq RB$ in $\gamma'$, and consequently $\neg\PB(u)$ still holds in $\gamma'$.
\end{itemize}
Hence, in all cases,  $\neg\PB(u)$ still holds in $\gamma'$, and we are done.
\end{proof}

\begin{lemma}
  \label{lem:att2} {\tt I} $\circ$ \A converges from $\mathcal{P}_1$
  to $\mathcal{P}_2$ within at most $n-1$ rounds.
\end{lemma}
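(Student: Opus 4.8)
The plan is to exploit that $\mathcal{P}_1$ is closed by {\tt I} $\circ$ \A (a consequence of Corollary~\ref{cor:correct}), so that $\neg\PRUp(u)$ holds at every process throughout the execution; in particular $\ruleR$ is never enabled, hence the only rule that can set a status to $RB$ is $\ruleRB$, which requires a neighbor already in status $RB$. Since $\mathcal{P}_2$ is already known to be closed (Lemma~\ref{lem:att2-Closure}), once reached it persists, so it suffices to establish convergence, i.e.\ that within $n-1$ rounds we reach a configuration where $\neg\PB(u)$ holds for every $u$. First I would dispose of the degenerate case: if the starting configuration (which satisfies $\mathcal{P}_1$) contains no process of status $RB$, then no process can ever acquire status $RB$, so $\neg\PB$ holds vacuously and $\mathcal{P}_2$ is reached in $0$ rounds.

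For the main case I would measure progress by the graph distance to the broadcast. Let $A$ be the (nonempty) set of processes of status $RB$ in the initial configuration, and for every process $u$ let $\rho(u)$ be the length of a shortest path from $u$ to $A$ in $G$; since $G$ is connected, $\rho(u) \le n-1$. The aim is to prove, by induction on $k$, the layered invariant: \emph{at the end of round $k$, every process $u$ with $\rho(u) \le k$ satisfies $\neg\PB(u)$}. Taking $k = n-1$ then yields $\neg\PB(u)$ everywhere, that is $\mathcal{P}_2$, within $n-1$ rounds. The engine of the inductive step is that a \emph{front} process, i.e.\ a process $u$ of status $C$ with an $RB$ neighbor, is exactly a process for which $\PB(u)$ holds and hence (as $\ruleR$ is disabled, and by Remark~\ref{rem:mutex} and Lemma~\ref{lem:mutex:2}) is enabled only for $\ruleRB$. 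Thus during one round such a $u$ either executes $\ruleRB$, joining the broadcast and making $\PB(u)$ false, or is neutralized because all its $RB$ neighbors leave status $RB$; either way the front advances by one $\rho$-layer per round, while an $RB$ node may quit the broadcast (via $\ruleRF$) only once none of its neighbors still has status $C$, so the front never outruns the induction.

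The delicate point, and the one I expect to be the main obstacle, is the \emph{maintenance} part of the invariant: I must rule out that an interior process $u$ with $\rho(u) \le k$, already satisfying $\neg\PB(u)$, regains an $RB$ neighbor in a later round and thus re-enters the front. This amounts to showing that, under $\mathcal{P}_1$, the broadcast sweeps strictly outward and never regenerates toward the interior; equivalently, that no process executes $\ruleRB$ twice during the convergence. I would argue this from three ingredients: (i) the closure of $\mathcal{P}_1$ forbids any (re)initiation of a reset, so no new alive root appears (Theorem~\ref{theo:pseudoRoots}) and every status-$RB$ process lies on a reset branch issued from a fixed root; (ii) along any reset branch the distance value is strictly increasing and the $RB$ prefix precedes the $RF$ suffix (Lemma~\ref{lem:trace}), while such branches persist step by step as long as their root stays alive (Lemma~\ref{still:alive}), so a node joining via $\ruleRB$ does so with a strictly larger distance than the neighbor that caused it; and (iii) the guards of $\ruleRF$ and $\ruleC$ guarantee that a process leaves status $RB$ only when none of its neighbors still has status $C$, and returns to status $C$ only when none of its neighbors has status $RB$, so a freshly-$C$ process has no $RB$ neighbor and, by the monotone outward propagation of (i)--(ii), can never acquire one. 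Combining these, a settled interior process stays settled, the front is monotone, and the layered induction closes; the bound $n-1$ then follows from the connectivity of $G$ (and is consistent with reset branches having length at most $n$, Lemma~\ref{lem:trace}).
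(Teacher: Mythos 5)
Your overall plan (a layered, round-by-round induction, plus closure of $\mathcal{P}_1$ so that no new root can appear) matches the spirit of the paper's proof, but the invariant you induct on is false, and the failure is exactly at the point you flagged as delicate. You measure progress by the \emph{static} graph distance $\rho(u)$ to the set $A$ of initially-$RB$ processes, and claim that by the end of round $k$ every process at distance at most $k$ from $A$ satisfies $\neg\PB$. This presupposes that the broadcast sweeps outward monotonically with respect to $\rho$, which is not the case: a process whose \emph{initial} status is $RF$ blocks the broadcast (it can never take status $RB$, and it cannot take status $C$ while it has an $RB$ neighbor), so the wave can be forced to reach a process at small $\rho$ only by a long detour, i.e., arbitrarily many rounds after round $\rho(u)$. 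Concretely, take {\tt I} to be the trivial input algorithm ($\PICorrect \equiv \PReset \equiv true$, no rules, which meets all the requirements) and the cycle $a, q, u, v, p_m, p_{m-1}, \ldots, p_1, a$; initially $\st_a = RB$ with $\dist_a = 0$, $\st_q = RF$, and all other processes have status $C$. This configuration satisfies $\mathcal{P}_1$, and $A = \{a\}$, so $\rho(u) = 2$ and $\rho(v) = 3$ (through $q$). The only enabled process is $p_1$, and the execution is essentially forced: $p_i$ executes $\ruleRB$ in round $i$, so at the end of round $m$ the process $v$ satisfies $\PB(v)$ (its neighbor $p_m$ has just turned $RB$) although $\rho(v) = 3 \le m$, and at the end of round $m+1$ the same happens to $u$ with $\rho(u) = 2$. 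Your invariant is thus violated at every layer $k \geq 3$, so the induction cannot close; note also that this violation involves no ``re-entry'' at all --- no process executes $\ruleRB$ twice here --- so ruling out double $\ruleRB$ executions (your ingredients (i)--(iii)) would not repair the argument.

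The fix is what the paper does: index progress not by graph distance to $A$ but by the \emph{dynamic} depth $md(u)$ of $u$ in the reset branches (the DAG carried by the $\dist$ values). The paper's Claim 1 shows that $md(u)$ cannot decrease while $\st_u = RB$, and its Claim 2 shows, by induction on rounds, that a process executing $\ruleRB$ in round $j$ ends up at depth at least $j$ in some branch (in the example above, $v$ joins at depth $m+1$, not $3$). Since branches have length at most $n$ (Lemma~\ref{lem:trace}), no $\ruleRB$ can occur from round $n$ onward; combined with the fact that $\PB(u)$, once true, persists and keeps $u$ enabled until $u$ itself moves --- your ``neutralized'' alternative is actually impossible, precisely because an $RB$ neighbor of a status-$C$ process cannot execute $\ruleRF$ --- this yields $\neg\PB$ everywhere by the end of round $n-1$.
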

\begin{proof}
  Let $u$ be any process of status $RB$. Then, $u$ belongs to at least
  one reset branch (Remark~\ref{rem:unless}). 
  Let $md(u)$ be the maximum depth of $u$ in a reset branch it belongs
  to. Then,
%
$md(u) < n$, by Lemma~\ref{lem:trace}.

Consider now any execution $e = \gamma_0 \cdots \gamma_i \cdots$ of {\tt I} $\circ$ \A such that $\gamma_0$ satisfies
  $\mathcal{P}_1$. Remark first that from $\gamma_0$,
  $\ruleR(v)$ is disabled forever, for every process $v$, since
  $\mathcal{P}_1$ is closed by {\tt I} $\circ$ \A
  (Lemma~\ref{lem:att1}).

\begin{description}
\item[Claim 1:] {\em If some process $u$ satisfies $\status_{u} = RB$ in
  some configuration $\gamma_i$ ($i \geq 0$), then from $\gamma_i$,
  while $\status_{u} = RB$, $md(u)$ cannot decrease.}

  {\em Proof of the claim:} Consider any $\gamma_i \mapsto
  \gamma_{i+1}$ where $\status_{u} = RB$ both in $\gamma_i$ and
  $\gamma_{i+1}$. This in particular means that $u$ does not move in
  $\gamma_i \mapsto \gamma_{i+1}$. Let $u_1, \ldots, u_k=u$ be a
  reset branch in $\gamma_i$ such that $k = md(u)$. $\forall x \in
  \{1,\ldots,k-1\}$, $u_x$ has a neighbor ($u_{x+1}$) such that
  $\status_{x+1} = RB \wedge \dist_{x+1} > \dist_u$ in $\gamma_i$, by
  Lemma~\ref{lem:trace} and the definition of a reset branch. Hence,
  every $u_x$ is disabled in $\gamma_i$. Consequently, $u$ is still at
  depth at least $k$ in a reset branch defined in $\gamma_{i+1}$, and
  we are done.

\item[Claim 2:] {\em For every process $u$ that executes $\ruleRB(u)$ in
  some step $\gamma_i \mapsto \gamma_{i+1}$ of the $j$th round of $e$, we have $\status_{u} = RB \wedge
  md(u) \geq j$ in $\gamma_{i+1}$.}

  {\em Proof of the claim:}  We proceed by induction. Assume a process $u$ executes
  $\ruleRB(u)$ in some step $\gamma_i \mapsto \gamma_{i+1}$ of the
  first round of $e$. In $\gamma_i$, there is some neighbor
  $v$ of $u$ such that $\status_v = RB$. Since $\status_u = C$ in
  $\gamma_i$, $v$ is disabled in $\gamma_i$. Consequently,
  $RParent(v,u)$ holds in $\gamma_{i+1}$, and so $\status_{u} = RB
  \wedge md(u) \geq 1$ holds in $\gamma_{i+1}$. 
%
Hence, the claim holds for $j = 1$.
 
  Assume now that the claim holds in all of the $j$ first rounds of $e$, with $j \geq 1$.

  Assume, by the contradiction, that some process $u$ executes
  $\ruleRB$ in a step $\gamma_i \mapsto \gamma_{i+1}$ of the
  $(j+1)^{th}$ round of $e$, and does not satisfy $\status_{u} = RB
  \wedge md(u) \geq j+1$ in $\gamma_{i+1}$. Then, by definition of
  $\ruleRB(u)$, $\status_{u} = C$ in $\gamma_i$ and $\status_{u} = RB
  \wedge md(u) < j+1$ in $\gamma_{i+1}$.  Let $x$ be the value of
  $md(u)$ in $\gamma_{i+1}$. We have $x < j+1$.  Without the loss of
  generality, assume that no process satisfies this condition before
  $u$ in the $(j+1)^{th}$ round of $e$ and any process $v$ that
  fulfills this condition in the same step as $u$ satisfies
  $\status_{v} = RB \wedge x \leq md(v) < j+1$ in $\gamma_{i+1}$.
  Then, by definition $md(u)$ and Lemma~\ref{lem:trace}, there is a
  neighbor $v$ of $u$ such $\status_v = RB$ and $md(v) = x-1 < j$ in
  $\gamma_{i+1}$. Moreover, by definition of $u$ and Claim 1,
  $\status_v = RB$ and $md(v) \leq x-1 < j$ since (at least) the first
  configuration of the $(j+1)$th round of $e$, which is also the last
  configuration of the $j$th round of $e$. So, by induction
  hypothesis, $\status_v = RB$ and $md(v) \leq x-1 < j$ since (at
  least) the end of the $(x-1)$th round of $e$.
  If $\status_u \neq C$ in the last configuration of the $(x-1)$th
  round of $e$, then $\status_u \neq C$ continuously until $\gamma_i$
  (included) since meanwhile $\ruleC(u)$ is disabled because
  $\status_v = RB$. Hence, $u$ cannot execute $\ruleRB(u)$ in
  $\gamma_i \mapsto \gamma_{i+1}$, a contradiction. Assume otherwise
  that $\status_u = C$ in the last configuration of the $(x-1)$th
  round of $e$. Then, $u$ necessarily executes $\ruleRB$ during the
  $x$th round of $e$, but not in the $(j+1)^{th}$ round of $e$ since
  $\status_v = RB$ continuously until $\gamma_i$ (included), indeed,
  after the execution of $\ruleRB(u)$ in the $x$th round, the two next
  rules executed by $u$ (if any) are necessarily $\ruleRF$ followed by
  $\ruleC$, but $\ruleC(u)$ is disabled while $\status_v = RB$. Hence,
  $\ruleRB(u)$ is not executed in $\gamma_i \mapsto \gamma_{i+1}$, a
  contradiction.
\end{description}
By Claim 2, no process executes $\ruleRB$ during the $n$th round of
$e$. Now, along $e$, we have:
\begin{itemize}
\item If $\PB(u)$ holds, then $u$ is enabled (see $\ruleRB(u)$), and
\item If $\PB(u)$ holds in $\gamma_i$ (with $i \geq 0$) and $u$ moves
  in the next step $\gamma_i \mapsto \gamma_{i+1}$, then $\neg\PB(u)$
  holds in $\gamma_{i+1}$.

  Indeed, $u$ necessarily executes $\ruleRB(u)$ in $\gamma_i \mapsto
  \gamma_{i+1}$ (Remark~\ref{rem:mutex} and Lemma ~\ref{lem:mutex:2})
  and, consequently, $\status_u = RB$ in $\gamma_{i+1}$, which implies
  that $\neg\PB(u)$ holds in $\gamma_{i+1}$.
\end{itemize} 
Hence, we can conclude that the last configuration of the $(n-1)$th
round of $e$ satisfies $\mathcal{P}_2$, and we are done.
\end{proof}

\begin{lemma}
\label{lem:att3}
$\mathcal{P}_3$ is closed by {\tt I} $\circ$ \A.
Moreover, {\tt I} $\circ$ \A converges from $\mathcal{P}_2$
  to $\mathcal{P}_3$ within at most $n$ rounds.
\end{lemma}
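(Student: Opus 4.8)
The plan is to prove the two assertions separately, leaning on the fact (Lemma~\ref{lem:att2-Closure}) that $\mathcal{P}_2$ is already closed by {\tt I} $\circ$ \A.

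For the closure of $\mathcal{P}_3$, I would argue that, since $\mathcal{P}_2$ is closed, it remains only to show $\status_u \neq RB$ in $\gamma'$ for every $u$ whenever $\mathcal{P}_3(\gamma)$ holds and $\gamma \mapsto \gamma'$. A process acquires status $RB$ only through $\ruleRB$ or $\ruleR$, whose guards $\PB(u)$ and $\PRUp(u)$ are false everywhere in $\gamma$ because $\mathcal{P}_2(\gamma)$ holds. As no process has status $RB$ in $\gamma$ (since $\mathcal{P}_3(\gamma)$) and none switches to $RB$, we get $\status_u \neq RB$ in $\gamma'$; together with $\mathcal{P}_2(\gamma')$ this yields $\mathcal{P}_3(\gamma')$.

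For the convergence, fix an execution $e=\gamma_0\cdots$ with $\gamma_0$ satisfying $\mathcal{P}_2$. First I would record the structural facts that hold throughout $e$ by closure of $\mathcal{P}_2$: every process of status $\neq C$ satisfies $\PReset$ (from $\neg\PRb$, which follows from $\neg\PRUp \And \neg\PB$); no process of status $C$ is a neighbor of an $RB$ process (else $\PB$ would hold); the distances $\dist_u$ never change (the only rules writing them, $\ruleRB$ and $\ruleR$, stay disabled); and hence no new status $RB$ is ever created. Under these facts, for a process $u$ of status $RB$, $\PF(u)$ reduces to ``no neighbor $v$ has $\status_v = RB \And \dist_v > \dist_u$'', i.e.\ $u$ is enabled for $\ruleRF$ exactly when it has no $RB$ reset child. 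I would then introduce a potential: for $u$ of status $RB$, let $h(u)$ be the length of the longest chain $u=w_1,\ldots,w_t$ with $\status_{w_i}=RB$ and $RParent(w_i,w_{i+1})$ for all $i$. By Lemma~\ref{lem:trace} such a chain is a reset sub-branch, so $h(u)\le n$. The crux is a monotonicity claim: once $u$ has status $RB$ and $\PF(u)$ holds, it stays enabled for $\ruleRF$ (and for no other rule, by Remark~\ref{rem:mutex} and Lemma~\ref{lem:mutex:2}) until it fires, hence executes $\ruleRF$ within the current round. Indeed, while $u$ keeps status $RB$ and does not move, each neighbor witnessing $\PF(u)$ keeps witnessing it: an $RB$ neighbor with $\dist \le \dist_u$ can only turn to $RF$, an $RF$ neighbor cannot execute $\ruleC$ (its guard $\PC$ fails because of the $RB$ neighbor $u$), and no neighbor can become $C$ without contradicting the closure of $\mathcal{P}_2$; thus $\PF(u)$ is never falsified without $u$ moving, so $u$ cannot be neutralized.

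Finally I would run an induction on rounds, proving that after round $j$ every process $u$ still of status $RB$ satisfies $h(u)>j$ (value computed in $\gamma_0$). The base case is the monotonicity claim applied to the processes with $h=1$. For the inductive step, a process with $h(u)=j+1$ has all its $\gamma_0$-$RB$-children of $h$-value $\le j$; since distances are frozen and no $RB$ is created, its current $RB$ children are among those, all having left status $RB$ by the end of round $j$ by the induction hypothesis, so $\PF(u)$ holds and the monotonicity claim makes $u$ fire $\ruleRF$ during round $j+1$. As $h(u)\le n$, after at most $n$ rounds no process has status $RB$, which combined with the permanent $\mathcal{P}_2$ gives $\mathcal{P}_3$. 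I expect the monotonicity claim to be the main obstacle: the delicate point is the neutralization argument, and specifically ruling out that a neighbor reverts to status $C$, which is exactly where the closure of $\mathcal{P}_2$ is indispensable.
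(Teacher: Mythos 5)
Your proposal is correct, and its two halves line up with the paper's proof in different ways. The closure argument is the paper's almost verbatim: under $\mathcal{P}_2$ the guards $\PB$ and $\PRUp$ are false everywhere, so $\ruleRB$ and $\ruleR$ stay disabled and no process can acquire status $RB$ (the paper additionally cites Requirement~\ref{RQ1} to rule out rules of {\tt I} touching $\st$, which you use implicitly). The convergence argument shares the paper's key technical ingredient: under $\mathcal{P}_2$ every non-$C$ process satisfies $\PReset$ and no neighbor of an $RB$ process has status $C$, so an $RB$ process with no $RB$ neighbor of larger distance satisfies $\PF$, and this guard cannot be falsified by neighbors' moves ($RB$ neighbors can only switch to $RF$, and $RF$ neighbors are blocked from $\ruleC$ precisely by the $RB$ process itself); hence the process is never neutralized and fires $\ruleRF$ within the round. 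Where you genuinely diverge is the bookkeeping that converts this into the $n$-round bound. The paper does the minimum: in each configuration it takes the $RB$ process of globally maximum distance, concludes that at least one process leaves $RB$ per round, and combines this with the facts that $RB$ is never re-created and that there are at most $n$ processes. You instead freeze the reset DAG (distances are never rewritten once $\mathcal{P}_2$ holds), introduce the height potential $h(u)$ of the longest $RB$-chain below $u$, and induct on rounds to show every process of height at most $j$ has left $RB$ after round $j$. Your route needs extra machinery (stability of the child relation under frozen distances, the potential $h$) but establishes something stronger, namely that the feedback wave clears an entire DAG level per round in parallel, whereas the paper's cardinality argument is shorter for the same bound. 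One cosmetic slip: invoking Lemma~\ref{lem:trace} to get $h(u) \leq n$ is loose, since a chain starting at $u$ need not be a sub-branch of a branch whose initial extremity is an alive or dead root; but the bound is immediate anyway, because distances strictly increase along the chain, so its processes are pairwise distinct.
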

\begin{proof}
  By Requirement \ref{RQ1}, no rule of {\tt I} can set the status of a
  process to $RB$. Then, let $\gamma$ be a configuration of {\tt I}
  $\circ$ \A such that $\mathcal{P}_3(\gamma)$ holds. Since
  $\mathcal{P}_1(\gamma)$ and $\mathcal{P}_2(\gamma)$ also holds, no
  rule $\ruleR$ or $\ruleRB$ is enabled in $\gamma$. Hence, after any
  step from $\gamma$, there is still no process of status $RB$ and we
  can conclude that $\mathcal{P}_3$ is closed by {\tt I} $\circ$ \A
  since we already know that $\mathcal{P}_2$ is closed by {\tt I}
  $\circ$ \A (Lemma~\ref{lem:att2-Closure}).

  Let $\gamma$ be any configuration satisfying $\mathcal{P}_2$ but not
  $\mathcal{P}_3$. To show the convergence from from $\mathcal{P}_2$
  to $\mathcal{P}_3$ within at most $n$ rounds, it is sufficient to
  show that at least one process $u$ switches from $\status_u = RB$ to
  $\status_u \neq RB$ within the next round from $\gamma$, since we
  already know that once $\status_u \neq RB$ after $\gamma$,
  $\status_u \neq RB$ holds forever (recall that all configurations
  reached from $\gamma$ satisfies $\mathcal{P}_2$; 
  see Lemma~\ref{lem:att2-Closure}).

  Let $mu$ be a process of status $RB$ with a maximum distance value
  in $\gamma$. Since $\neg\PRUp(mu) \wedge \neg\PB(mu) \wedge
  \status_{mu} \neq C$ holds, $\PReset(mu)$ holds in $\gamma$.  Let
  $v$ be any neighbor of $mu$.  Since $\neg\PB(v) \wedge\status_{mu} =
  RB$ holds, $\status_v \neq C$ in $\gamma$. Again, since
  $\neg\PRUp(v) \wedge \neg\PB(v) \wedge \status_v \neq C$ holds,
  $\PReset(v)$ holds in $\gamma$.  According to the definition of
  $mu$, we have $(\status_v = RB \wedge \dist_v \leq \dist_{mu}) \Or
  \status_v = RF$.  We can conclude that along any execution from
  $\gamma$, $\ruleRF(mu)$ is enabled until $mu$ executes this rule.
  In this case, $\ruleRF(mu)$ will be executed in the next move of
  $mu$ by Remark~\ref{rem:mutex} and Lemma~\ref{lem:mutex:2}.  So, during
  the next round from $\gamma$, $\ruleRF(mu)$ is executed, {\em i.e.},
  $\status_{mu}$ is set to $RF$, and we are done.
\end{proof}

Let $\gamma$ by any configuration of {\tt I} $\circ$ \A. We denote by $\gamma_{|\A}$ the projection of $\gamma$ over variables of \A. By definition, $\gamma_{|\A}$ is a configuration of \A.

\begin{lemma}\label{lem:a4:term}
For every configuration $\gamma$ of 
{\tt I} $\circ$ \A, $\gamma \in \mathcal{P}_4$ ({\em i.e.}, $\gamma$ is a normal configuration) if and only if $\gamma_{|\A}$ is a terminal configuration of \A.
\end{lemma}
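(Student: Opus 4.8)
The plan is to prove the two implications separately by unfolding the definition of a normal configuration and that of a terminal configuration of \A, and then leaning on the partial-correctness results already established. First I would record what each side means. By the nested definitions of $\mathcal{P}_1,\dots,\mathcal{P}_4$, the predicate $\mathcal{P}_4(\gamma)$ is exactly the conjunction, over every process $u$, of $\neg\PRUp(u)$, $\neg\PB(u)$, $\status_u \neq RB$ and $\status_u \neq RF$; since $\status_u \in \{C,RB,RF\}$, the last two amount to $\status_u = C$. On the other side, $\gamma_{|\A}$ is a terminal configuration of \A iff none of the four rules $\ruleRB,\ruleRF,\ruleC,\ruleR$ is enabled at any process in $\gamma$, that is, iff $\neg\PB(u) \wedge \neg\PF(u) \wedge \neg\PC(u) \wedge \neg\PRUp(u)$ holds for every $u$ (the guards of \A being evaluated on $\gamma$, so that the input predicates $\PICorrect$ and $\PReset$ take the values they have in $\gamma$).

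For the direction $\mathcal{P}_4(\gamma) \Rightarrow \gamma_{|\A}$ terminal, I would argue that $\status_u = C$ already forces $\neg\PF(u)$ and $\neg\PC(u)$, because $\PF(u)$ implies $\status_u = RB$ and $\PC(u)$ implies $\status_u = RF$ (as noted in the proof of Lemma~\ref{lem:mutex:2}). Together with the two conditions $\neg\PB(u)$ and $\neg\PRUp(u)$ supplied directly by $\mathcal{P}_4$, all four guards are false, so $\gamma_{|\A}$ is terminal. For the converse, I would observe that a terminal $\gamma_{|\A}$ disables $\ruleRB$ and $\ruleR$, giving $\neg\PB(u)$ and $\neg\PRUp(u)$ for every $u$ (hence $\mathcal{P}_1$ and $\mathcal{P}_2$); the two remaining requirements $\status_u \neq RB$ and $\status_u \neq RF$ are precisely the conclusions of Lemmas~\ref{lem:nonRB} and~\ref{lem:nonRF}, applied to the terminal configuration $\gamma_{|\A}$, which upgrade $\mathcal{P}_2$ to $\mathcal{P}_3$ and then to $\mathcal{P}_4$.

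The real content of the statement is therefore carried entirely by Lemmas~\ref{lem:nonRB} and~\ref{lem:nonRF} (and the disabling of $\ruleRB,\ruleR$ encapsulated in Lemma~\ref{lem:nonPRabc}); the rest is bookkeeping of the definitions. The only genuine subtlety, which I expect to be the main obstacle and would flag explicitly, is that the guards of \A mention the input predicates $\PICorrect$ and $\PReset$, which depend on {\tt I}'s variables rather than on $\gamma_{|\A}$. One must read ``$\gamma_{|\A}$ is terminal'' as enabledness computed with the input-predicate values frozen at their value in $\gamma$, exactly as in the earlier termination results; once this identification is made, invoking Lemmas~\ref{lem:nonRB} and~\ref{lem:nonRF} on $\gamma_{|\A}$ is legitimate and the two implications close immediately.
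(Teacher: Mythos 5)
Your proof is correct and takes essentially the same route as the paper: the forward implication is immediate from unfolding $\mathcal{P}_4$ (the paper dismisses it ``by definition''), and the converse combines the disabledness of $\ruleRB$ and $\ruleR$ with Lemmas~\ref{lem:nonRB} and~\ref{lem:nonRF} (the paper routes the first part through Lemma~\ref{lem:nonPRabc}, whose own proof is exactly your observation). Your explicit remark that ``$\gamma_{|\A}$ is terminal'' must be read with the input predicates $\PICorrect$ and $\PReset$ evaluated at their values in $\gamma$ is a point the paper leaves implicit, and your resolution of it is the intended one.
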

\begin{proof}
  Let $\gamma$ be a configuration of {\tt I} $\circ$ \A.  By
  definition of $\mathcal{P}_4$, if $\gamma \in \mathcal{P}_4$, then
  $\gamma_{|\A}$ is a terminal configuration of \A. Then, if
  $\gamma_{|\A}$ is a terminal configuration of \A, then $\gamma \in
  \mathcal{P}_4$ by Lemmas~\ref{lem:nonPRabc}, \ref{lem:nonRB}, and
  \ref{lem:nonRF}.
\end{proof}

\begin{lemma}
\label{lem:att4}
$\mathcal{P}_4$ is closed by {\tt I} $\circ$ \A.
Moreover, {\tt I} $\circ$ \A converges from $\mathcal{P}_3$
  to $\mathcal{P}_4$ within at most $n$ rounds.
\end{lemma}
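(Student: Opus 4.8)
The plan is to treat the two assertions separately, closely mirroring the proof of Lemma~\ref{lem:att3}.

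For the closure of $\mathcal{P}_4$, I would first observe that $\mathcal{P}_4 \Rightarrow \mathcal{P}_3$, so that $\mathcal{P}_3$ is preserved along any step by Lemma~\ref{lem:att3}; it then only remains to show that the extra conjunct $\status_u \neq RF$ is preserved. The key remark is that in a normal configuration every process has status $C$: indeed $\mathcal{P}_3$ forbids status $RB$ and the added conjunct forbids $RF$. Consequently, $\ruleRB(u)$, $\ruleRF(u)$, and $\ruleC(u)$ are all disabled (their guards require, respectively, a neighbor of status $RB$, status $RB$, and status $RF$), and $\ruleR(u)$ is disabled because $\mathcal{P}_1$ gives $\neg\PRUp(u)$. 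Hence only rules of {\tt I} may be executed in a step from a normal configuration, and by Requirement~\ref{RQ1} those rules do not touch $\status_u$; thus all statuses stay $C$ and $\status_u \neq RF$ still holds afterwards. Combined with the closure of $\mathcal{P}_3$, this yields the closure of $\mathcal{P}_4$.

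For the convergence, I would first note that from a configuration satisfying $\mathcal{P}_3$ the only possible status change is $RF \to C$: no process ever regains status $RB$ (since $\ruleRB$ and $\ruleR$ stay disabled and status $RB$ is needed to fire $\ruleRF$), and {\tt I} does not write $\status_u$; in particular, once a process has status $C$ it keeps it. It is therefore enough to show that, from any $\mathcal{P}_3$-configuration that is not yet normal, at least one process switches from $RF$ to $C$ within the next round; since there are at most $n$ processes, $\mathcal{P}_4$ is then reached within at most $n$ rounds. To exhibit such a process I would pick $mu$, a process of status $RF$ with minimum distance value, and prove that $\PC(mu)$ holds. First, $\neg\PRUp(mu) \wedge \neg\PB(mu) \wedge \status_{mu} \neq C$ yields $\neg\PRb(mu)$, hence $\PReset(mu)$. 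Then for each neighbor $v$ (recall that no neighbor has status $RB$ under $\mathcal{P}_3$): if $\status_v = RF$, minimality of $mu$ gives $\dist_v \geq \dist_{mu}$ and $\neg\PRb(v)$ gives $\PReset(v)$; if $\status_v = C$, then $\neg\PRa(v)$ together with the fact that $v$ has the $RF$-neighbor $mu$ forces $\PReset(v)$. In every case the clause of $\PC(mu)$ relative to $v$ is satisfied, so $\ruleC(mu)$ is enabled in the current configuration.

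The main obstacle, as in Lemma~\ref{lem:att3}, is the round argument: I must show that $mu$ is not merely enabled but actually executes $\ruleC$ during the round, i.e. that it cannot be neutralized. For this I would argue that $\PC(mu)$ remains true until $mu$ moves. Since $\ruleC$ is the only enabled rule at $mu$ (the {\tt I} rules being disabled at $mu$ by Requirement~\ref{RQ3}, as $\neg\PClean(mu)$ holds), it suffices to check that no neighbor move falsifies $\PC(mu)$. A $C$-neighbor $v$ does not move at all: its $\ruleRB$ and $\ruleR$ stay disabled under $\mathcal{P}_3$, its other \A rules require a non-$C$ status, and it cannot execute any {\tt I} rule because $mu \in \N(v)$ has status $RF$, so $\neg\PClean(v)$ holds (Requirement~\ref{RQ3}); hence $\status_v = C$ and $\PReset(v)$ are frozen. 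An $RF$-neighbor $v$ with $\dist_v \geq \dist_{mu}$ can only fire $\ruleC$, turning to status $C$ while keeping $\PReset(v)$ (since $\ruleC$ does not modify {\tt I}'s variables and $\PReset(v)$ depends only on $v$'s own variables by Requirement~\ref{RQ4}). In either case the corresponding clause of $\PC(mu)$ is maintained, so $mu$ stays enabled and must execute $\ruleC$ within the round, completing the argument.
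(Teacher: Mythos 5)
Your proof is correct and takes essentially the same approach as the paper's: closure by observing that in a normal configuration only rules of {\tt I} can execute and these leave all statuses equal to $C$, and convergence by picking the minimum-distance process $mu$ of status $RF$, deriving $\PReset$ from $\neg\PRUp \wedge \neg\PB$, and showing $\PC(mu)$ holds and persists until $mu$ executes $\ruleC$ within the round. The only (presentational) differences are that the paper obtains closure by invoking Lemma~\ref{lem:a4:term} rather than re-examining the rules, and it asserts in one line the persistence of $\PC(mu)$ that you verify explicitly by a case analysis on neighbor moves.
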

\begin{proof}
  By Requirement \ref{RQ1}, no rule of {\tt I} can set the status of a
  process to $RF$. So, by Lemma~\ref{lem:a4:term}, we can conclude
  that $\mathcal{P}_4$ is closed by {\tt I} $\circ$ \A.

 Let $\gamma$ be any configuration satisfying $\mathcal{P}_3$ but not
  $\mathcal{P}_4$. To show the convergence from from $\mathcal{P}_3$
  to $\mathcal{P}_4$ within at most $n$ rounds, it is sufficient to
  show that at least one process $u$ switches from $\status_u = RF$ to
  $\status_u \neq RF$ within the next round from $\gamma$, since we
  already know that once $\status_u \neq RF$ after $\gamma$,
  $\status_u \neq RF$ holds forever (recall that all configurations
  reached from $\gamma$ satisfies $\mathcal{P}_3$ by Lemma~\ref{lem:att3}, and only process of status $RB$ may switch to status $RF$).

  Let $mu$ be a process of status $RF$ with a minimum distance value
  in $\gamma$. Since $\neg\PRUp(mu) \wedge \neg\PB(mu) \wedge
  \status_{mu} \neq C$ holds, $\PReset(mu)$ holds in $\gamma$.  Let
  $v$ be any neighbor of $mu$.  By definition of $\mathcal{P}_3$,
  $\status_v \neq RB$ in $\gamma$.  Moreover, since $\neg\PRUp(v)
  \wedge \neg\PB(v)$ and $v$ has a neighbor of status $RF$ ($mu$),
  $\PReset(v)$ holds in $\gamma$.  According to the definition of
  $mu$, we have $(\status_v = RF \wedge \dist_v \geq \dist_{mu}) \Or
  \status_v = C$.  We can conclude that along any execution from
  $\gamma$, $\ruleC(mu)$ is enabled until $mu$ executes this rule.  In
  this case, $\ruleC(mu)$ will be executed in the next move of $mu$ by
  Remark~\ref{rem:mutex} and Lemma~\ref{lem:mutex:2}.  So, during the
  next round from $\gamma$, $\ruleC(mu)$ is executed, {\em i.e.},
  $\status_{mu}$ is set to $C$, and we are done.
\end{proof}

By Lemmas~\ref{lem:att1}-\ref{lem:att4} and Theorem~\ref{theo:termA}, follows.

\begin{corollary}
\label{cor:round}
$\mathcal{A}_4$ is an attractor for {\tt I} $\circ$
  \A. Moreover, {\tt I} $\circ$ \A converges from $true$ to
  $\mathcal{P}_4$ within at most $3n$ rounds. 
For every configuration $\gamma$ of {\tt I} $\circ$ \A, $\gamma$ satisfies  $\mathcal{A}_4$ ({\em i.e.} $\gamma$ is a normal configuration) if and only if  $\PClean(u)
\And \PICorrect(u)$ holds in $\gamma$, for every process $u$.
\end{corollary}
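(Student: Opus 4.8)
The final statement is a Corollary assembling four previously-established lemmas into a single convergence result. The plan is to chain the attractor results of Lemmas~\ref{lem:att1}--\ref{lem:att4} together, using the transitivity of convergence and the fact that each $\mathcal{P}_i$ is closed, and then invoke Theorem~\ref{theo:termA} together with Lemma~\ref{lem:a4:term} for the characterization of normal configurations.

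First I would establish that $\mathcal{P}_4$ is an attractor. By Lemma~\ref{lem:att4}, $\mathcal{P}_4$ is closed by {\tt I} $\circ$ \A, so it remains to show convergence from $true$ to $\mathcal{P}_4$. For this I would compose the convergences: Lemma~\ref{lem:att1} gives convergence from $true$ to $\mathcal{P}_1$; Lemma~\ref{lem:att2} gives convergence from $\mathcal{P}_1$ to $\mathcal{P}_2$; Lemma~\ref{lem:att3} gives convergence from $\mathcal{P}_2$ to $\mathcal{P}_3$; and Lemma~\ref{lem:att4} gives convergence from $\mathcal{P}_3$ to $\mathcal{P}_4$. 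The crucial point that makes this chaining legitimate is that each intermediate predicate is \emph{closed} (Lemmas~\ref{lem:att2-Closure}, \ref{lem:att3}, \ref{lem:att4}, and the closure part of Lemma~\ref{lem:att1}): once an execution reaches a configuration satisfying $\mathcal{P}_i$, it stays in $\mathcal{P}_i$ forever, so the convergence guaranteed by the next lemma actually applies to the tail of the execution starting at that configuration. Formally, any execution from $true$ first reaches $\mathcal{P}_1$, from which point it remains in $\mathcal{P}_1$; the suffix starting there is an execution from $\mathcal{P}_1$, hence reaches $\mathcal{P}_2$; and so on up to $\mathcal{P}_4$.

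Next I would bound the round complexity by summing the per-phase bounds. Lemma~\ref{lem:att1} contributes at most one round, Lemma~\ref{lem:att2} at most $n-1$ rounds, Lemma~\ref{lem:att3} at most $n$ rounds, and Lemma~\ref{lem:att4} at most $n$ rounds. Since the notion of round is defined on the suffix of an execution and each phase begins where the previous one has already converged (using closure), these bounds add: $1 + (n-1) + n + n = 3n$, giving convergence from $true$ to $\mathcal{P}_4$ within at most $3n$ rounds.

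Finally, for the characterization, I would combine Lemma~\ref{lem:a4:term} with Theorem~\ref{theo:termA}: a configuration $\gamma$ satisfies $\mathcal{P}_4$ if and only if $\gamma_{|\A}$ is terminal in \A (Lemma~\ref{lem:a4:term}), and by Theorem~\ref{theo:termA} a configuration of \A is terminal if and only if $\PClean(u) \And \PICorrect(u)$ holds for every process $u$. Chaining these equivalences yields that $\gamma$ is a normal configuration if and only if $\PClean(u) \And \PICorrect(u)$ holds everywhere, as claimed. The only point requiring mild care is the slight notational mismatch in the statement (writing $\mathcal{A}_4$ where $\mathcal{P}_4$ is meant), which I would read as the normal-configuration predicate $\mathcal{P}_4$; no genuine mathematical obstacle arises, since every ingredient has already been proved, and the corollary is essentially a bookkeeping assembly of the attractor chain.
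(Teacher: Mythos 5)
Your proposal is correct and matches the paper's intended argument exactly: the paper proves this corollary simply by citing Lemmas~\ref{lem:att1}--\ref{lem:att4} and Theorem~\ref{theo:termA} (with Lemma~\ref{lem:a4:term} supplying the bridge between $\mathcal{P}_4$ and terminal configurations of \A), and your chaining of the attractor lemmas with the round bound $1+(n-1)+n+n = 3n$ is precisely that assembly made explicit. Your reading of $\mathcal{A}_4$ as a typo for $\mathcal{P}_4$ is also the right interpretation.
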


\section{Asynchronous Unison}\label{sect:unison}

\subsection{The Problem}

We now consider the problem of {\em asynchronous unison} (introduced
in ~\cite{CFG92c}), simply referred to as unison in the
following. This problem is a clock synchronization problem: each
process $u$ holds a variable (usually an integer variable) called 
\emph{clock}, here noted $\val_u$. Then, the problem is specified as
follows:
\begin{itemize}
\item Each process should increment its clock infinitely often.
  (liveness)
  \item The difference between clocks of every two neighbors should be
    at most one increment at each instant. (safety)
  \end{itemize}
Notice that we consider here periodic clocks, {\em i.e.}, the clock
incrementation is modulo a so-called \emph{period}, here noted
$\ParU$.

\subsection{Related Work}

The first self-stabilizing asynchronous unison for general connected
graphs has been proposed by Couvreur {\em et al.}~\cite{CFG92c}. It is
written in the locally shared memory model with composite atomicity
assuming a central unfair daemon and a period $\ParU > n^2$. No
complexity analysis was given.
Another solution which stabilizes in $O(n)$ rounds has been proposed
by Boulinier {\em et al.} in \cite{BPV04c}. This solution is also
written in the locally shared memory model with composite atomicity,
however it assumes a distributed unfair daemon.  In this solution, the
period $\ParU$ should satisfy $\ParU > C_G$ and another parameter
$\alpha$ should satisfy $\alpha\ge T_G-2$. $C_G$ is the {\em
  cyclomatic characteristic\/} of the network and $T_G$ is the length
of the longest chordless cycle.  Boulinier also proposed in his PhD
thesis~\cite{B07t} a parametric solution which generalizes both the
solutions of \cite{CFG92c} and \cite{BPV04c}. In particular, the study
of this parametric algorithm reveals that the solution of Couvreur
{\em et al.}~\cite{CFG92c} still works assuming a distributed unfair
daemon and has a stabilization time in $O(D.n)$ rounds, where $D$ is
the network diameter.

\subsection{Contribution}

We first propose a distributed algorithm, called \DU. Starting from a
pre-defined configuration, \DU implements the unison problem in
anonymous networks, providing that the period $\ParU$ satisfies $\ParU
> n$.  \DU is not self-stabilizing, however we show that the composite
algorithm \DU $\circ$ \A is actually an efficient self-stabilizing
unison algorithm.  Indeed, its stabilization times in round matches
the one of the best existing solution~\cite{BPV04c}. Moreover, it
achieves a better stabilization time in moves, since it stabilizes in
$O(D.n^2)$ moves, while the algorithm in~\cite{BPV04c} stabilizes in
$O(D.n^3+\alpha.n^2)$ moves; as shown in~\cite{DP12}.

\subsection{Algorithm \DU}

\paragraph{Overview.}
We consider here anonymous (bidirectional) networks of arbitrary connected
topology. Moreover, every process has the period $\ParU$ as
input. $\ParU$ is required to be (strictly) greater than $n$, the
number of processes. 
The formal code of Algorithm \DU is given in Algorithm~\ref{alg:DU}.
Informally, each process maintains a single variable, its clock
$\val_u$, using a single rule $\ruleUA(u)$.

\begin{algorithm}
\small
  $~$ \\[0.3cm]
\textbf{Inputs:} \\[0.1cm]
\begin{tabular}{llll}
  $\bullet$ & $\st_u \in \{ C, RB, RF \}$ & : & variable of \A\\
  $\bullet$ & $\PClean(u)$ & : & predicate of \A\\
  $\bullet$ & $\ParU$ & : & a constant from the system satisfying $\ParU > n$
\end{tabular}
\\[0.3cm]
\textbf{Variables:} \\[0.1cm]
\begin{tabular}{llll}
$\bullet$ & $\val_u \in \mathds{N}$ & : & the clock of $u$
\end{tabular}
\\[0.3cm]
\textbf{Predicates:} \\[0.1cm]
\begin{tabular}{llll}
  $\bullet$ & $\PAgree(u,v)$  & $\equiv$ & $\val_v \in \{(\val_u -1) \bmod \ParU, \val_u, (\val_u +1) \bmod \ParU\}$  \\
  $\bullet$ & $\PICorrect(u)$ & $\equiv$ & $(\forall v \in \N(u), \PAgree(u,v))$ \\
  $\bullet$ & $\PReset(u)$    & $\equiv$ & $\val_u=0$\\
  $\bullet$ & $\PUp(u)$ & $\equiv$ & $(\forall v \in \N(u), \val_v \in \{\val_u, (\val_u +1) \bmod \ParU\})$
\end{tabular}  
\\[0.3cm]
\textbf{Macros:}\\[0.1cm]
\begin{tabular}{llll}
$\bullet$ & $\reset(u)$ & : & $\val_u:= 0$;
\end{tabular}
\\[0.3cm]
{\textbf{Rules:}} \\[0.1cm]
\begin{tabular}{lllll}
$\ruleUA(u)$ & : &  $\PClean(u) \And \PUp(u)$ & 
$\to$ & 
$\val_u := (\val_u+1)\bmod \ParU$; 
 \end{tabular}  
\\[0.3cm]
\caption{Algorithm \DU, code for every process $u$}
\label{alg:DU}
\end{algorithm}

In the following, we assume that the system is initially in the
configuration $\gamma_{init}$ where every process $u$ satisfies
$\val_u = 0 \wedge  \st_u = C$. Basically, starting from $\gamma_{init}$, a process $u$
can increment its clock $\val_u$ modulo $\ParU$ (using rule
$\ruleUA(u)$) if it is on time or one increment late with each of its
neighbors; see predicate $\PUp(u)$.

\paragraph{Correctness.}

Below, we focus on configurations of \DU satisfying $\PICorrect(u)
\wedge \PClean(u)$, for every process $u$. Indeed, $\gamma_{init}$ belongs
to this class of configurations. Moreover, a configuration of \DU $\circ$ \A
is normal if and only if $\PClean(u) \And \PICorrect(u)$ holds for
every process $u$. Hence, the properties we exhibit now will be, in
particular, satisfied at the completion of \A.

Consider any two neighboring processes $u$ and $v$ such that
$\PAgree(u,v)$ in some configuration $\gamma$, {\em i.e.}, $\val_v \in
\{(\val_u -1) \bmod \ParU, \val_u, (\val_u +1) \bmod \ParU\}$. Without
the loss of generality, assume that $\val_v \in \{\val_u, (\val_u +1)
\bmod \ParU\}$ (otherwise switch the role of $u$ and $v$). Let $\gamma
\mapsto \gamma'$ be the next step.  If $\val_v = \val_u$ in $\gamma$,
then $\val_v \in \{(\val_u -1) \bmod \ParU, \val_u, (\val_u +1) \bmod
\ParU\}$ in $\gamma'$ since each clock increments at most once per
step, {\em i.e.}, $\PAgree(u,v)$ still holds in $\gamma'$. Otherwise,
$\val_v = (\val_u +1) \bmod \ParU\}$ in $\gamma$, and so $v$ is
disabled and only $u$ may move. If $u$ does not move, then $\val_v =
(\val_u +1) \bmod \ParU\}$ still holds in $\gamma'$, otherwise $\val_v
= \val_u$ in $\gamma'$. Hence, in both cases, $\PAgree(u,v)$ still
holds in $\gamma'$. Hence, follows.

\begin{lemma}\label{unison:agree}
$\PICorrect(u)$ is closed by \DU, for every process $u$.
\end{lemma}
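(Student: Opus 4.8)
The plan is to exploit the conjunctive structure of $\PICorrect$. By definition, $\PICorrect(u) \equiv \bigwedge_{v \in \N(u)} \PAgree(u,v)$, so $\PICorrect(u)$ is closed by \DU as soon as each pairwise predicate $\PAgree(u,v)$, for $v \in \N(u)$, is closed by \DU. I would therefore reduce the lemma to showing that along any step $\gamma \mapsto \gamma'$ of \DU, preservation of $\PAgree$ holds on every individual edge $\{u,v\}$: if $\PAgree(u,v)$ holds in $\gamma$, then it still holds in $\gamma'$.

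This edge-wise preservation is precisely what the paragraph preceding the statement establishes, and I would invoke it directly. For completeness, the argument I have in mind is the following case analysis on an edge $\{u,v\}$ with $\PAgree(u,v)$ in $\gamma$. Using the symmetry of $\PAgree$, assume without loss of generality $\val_v \in \{\val_u, (\val_u+1) \bmod \ParU\}$. If $\val_v = \val_u$ in $\gamma$, then since \DU increments any clock by at most one unit per step (rule $\ruleUA$ adds $1$ modulo $\ParU$ and each process fires at most once in a step), the gap between $\val_u$ and $\val_v$ stays within $\{-1,0,1\}$ modulo $\ParU$, so $\PAgree(u,v)$ is preserved. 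If instead $\val_v = (\val_u+1) \bmod \ParU$ in $\gamma$, then $u$ is the lagging endpoint and only $u$ may move: indeed $\PUp(v)$ fails because $u \in \N(v)$ with $\val_u = (\val_v-1) \bmod \ParU$, which does not lie in $\{\val_v,(\val_v+1)\bmod \ParU\}$, so $\ruleUA(v)$ is disabled. Whether or not $u$ increments, the clocks of $u$ and $v$ remain within one unit of each other in $\gamma'$, and again $\PAgree(u,v)$ holds.

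The delicate point of this case analysis is the one where the two clocks differ by exactly one: there I must certify that the leading endpoint cannot increment, which rests on the failure of its $\PUp$ predicate, and I must ensure the residues $(\val_u-1) \bmod \ParU$, $\val_u$, $(\val_u+1) \bmod \ParU$ stay pairwise distinct so that $\PAgree$ genuinely encodes a gap in $\{-1,0,1\}$; this is guaranteed by the hypothesis $\ParU > n$ (the case $n=1$ being vacuous, as there are then no edges). Once the edge-wise preservation is in hand, closure of $\PICorrect(u)$ follows immediately: if $\PICorrect(u)$ holds in $\gamma$ then every conjunct $\PAgree(u,v)$ holds in $\gamma$, hence each holds in $\gamma'$ by the above, hence $\PICorrect(u)$ holds in $\gamma'$.
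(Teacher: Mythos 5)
Your proposal is correct and follows essentially the same route as the paper: the paper's proof is exactly the edge-wise case analysis you describe (equal clocks preserved because each clock increments at most once per step; clocks differing by one preserved because the leading endpoint's $\PUp$ predicate fails, so it is disabled), followed by the observation that $\PICorrect(u)$ is the conjunction of the $\PAgree(u,v)$. Your added remarks about the distinctness of the three residues under $\ParU > n$ and the vacuous case $n=1$ are fine points the paper leaves implicit, but they do not change the argument.
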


Since Algorithm \DU does not modify any variable from Algorithm \A, we have

\begin{remark}\label{lem:PClean:clos2}
$\PClean(u)$ is closed by \DU, for every process $u$.
\end{remark}

\begin{corollary}\label{unison:safety} 
  $\PICorrect(u) \wedge \PClean(u)$ is closed by \DU, for every process $u$.
  \end{corollary}

\begin{corollary}\label{u:safe}
Any execution of \DU, that starts from a configuration where
$\PICorrect(u) \wedge \PClean(u)$ holds for every process $u$,
satisfies the safety of the unison problem.
  \end{corollary}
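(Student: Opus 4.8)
The plan is to observe that the safety specification of the unison problem, namely that the clocks of every two neighbors differ by at most one increment in \emph{every} configuration reached, is exactly the statement that $\PICorrect(u)$ holds for every process $u$ throughout the execution. Indeed, by definition $\PICorrect(u) \equiv (\forall v \in \N(u), \PAgree(u,v))$, and $\PAgree(u,v) \equiv \val_v \in \{(\val_u -1) \bmod \ParU, \val_u, (\val_u +1) \bmod \ParU\}$ says precisely that, modulo $\ParU$, the clock of $v$ is within one increment of the clock of $u$. So the whole corollary reduces to propagating $\PICorrect$ along the execution, and the hard work has already been done in Lemma~\ref{unison:agree}.

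First I would note that, by hypothesis, the initial configuration $\gamma_0$ of the considered execution satisfies $\PICorrect(u) \wedge \PClean(u)$ for every process $u$; in particular, $\PICorrect(u)$ holds for every $u$ in $\gamma_0$. Then, by Lemma~\ref{unison:agree} (equivalently, by Corollary~\ref{unison:safety}), $\PICorrect(u)$ is closed by \DU for every process $u$. Hence a straightforward induction on the length of the prefix of the execution shows that, for every configuration $\gamma_i$ of the execution, $\PICorrect(u)$ holds for every process $u$ in $\gamma_i$: the base case is $\gamma_0$, and the inductive step is exactly the closure property applied to the step $\gamma_{i} \mapsto \gamma_{i+1}$.

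Finally, I would translate this invariant back into the language of the specification. Since in each configuration $\gamma_i$ every process $u$ satisfies $\PICorrect(u)$, every pair of neighbors $u, v$ satisfies $\PAgree(u,v)$ in $\gamma_i$, that is, their clocks differ by at most one increment modulo $\ParU$ at that instant. As this holds for all $i$, the safety constraint of the unison problem is satisfied at every instant of the execution, which is exactly the conclusion. There is essentially no obstacle here: the only point requiring a little care is making the equivalence between the conjunction of the $\PICorrect$ predicates and the safety predicate of the specification fully explicit, so that the closure result of Lemma~\ref{unison:agree} can be invoked directly.
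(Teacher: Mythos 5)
Your proof is correct and follows essentially the same route as the paper: the paper derives this corollary immediately from the closure of $\PICorrect(u) \wedge \PClean(u)$ (Corollary~\ref{unison:safety}, itself resting on Lemma~\ref{unison:agree}), exactly the induction-on-closure argument you spell out. Your observation that $\PICorrect$ alone suffices (the $\PClean$ part being irrelevant to the safety predicate) is a minor, valid economy, not a different approach.
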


\begin{lemma}\label{unison:noDL}
Any configuration where
$\PICorrect(u) \wedge \PClean(u)$ holds for every process $u$ is not terminal.
\end{lemma}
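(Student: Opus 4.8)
The plan is to argue by contradiction, reducing non-termination to the existence of a single process satisfying $\PUp$. Since $\PClean(u)$ holds for every process $u$ by hypothesis, the unique rule $\ruleUA(u)$ of \DU is enabled at $u$ precisely when $\PUp(u)$ holds. Hence it suffices to exhibit one process $u$ with $\PUp(u)$. I would assume, for contradiction, that the configuration is terminal, so that $\neg\PUp(u)$ holds for \emph{every} process $u$.

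The first key step is to turn the combination of $\PICorrect(u)$ and $\neg\PUp(u)$ into a ``strictly behind'' neighbor. For each $u$, $\neg\PUp(u)$ yields a neighbor $v$ with $\val_v \notin \{\val_u, (\val_u+1)\bmod\ParU\}$, while $\PICorrect(u)$ guarantees $\PAgree(u,v)$, i.e.\ $\val_v \in \{(\val_u-1)\bmod\ParU, \val_u, (\val_u+1)\bmod\ParU\}$. Intersecting these two facts forces $\val_v = (\val_u-1)\bmod\ParU$. Thus, in a terminal configuration, every process has a neighbor whose clock is exactly one increment behind its own.

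Next I would build an arbitrarily long descending walk. Starting from any process $u_0$, I repeatedly use the previous step to pick $u_{i+1} \in \N(u_i)$ with $\val_{u_{i+1}} = (\val_{u_i}-1)\bmod\ParU$; this is always possible since $\neg\PUp$ holds everywhere. Considering the $n+1$ vertices $u_0, \ldots, u_n$ in a network of only $n$ processes, the pigeonhole principle gives indices $0 \le i < j \le n$ with $u_i = u_j$. Along the closed walk $u_i, u_{i+1}, \ldots, u_j$ the clock drops by one modulo $\ParU$ at each of its $j-i$ steps, so $\val_{u_i} = \val_{u_j} = (\val_{u_i} - (j-i))\bmod\ParU$, whence $j-i \equiv 0 \pmod{\ParU}$. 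Since $1 \le j-i \le n < \ParU$, this is impossible, contradicting the assumption that the configuration is terminal.

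The only place where real care is needed is the closing modular argument: the whole proof hinges on the hypothesis $\ParU > n$, which is exactly what rules out a clock decrement that wraps all the way around $\ParU$ over a short cycle. Everything else is a mechanical reading of the predicates $\PUp$, $\PICorrect$, and $\PAgree$, together with the observation that $\PClean$ holds everywhere so that only $\PUp$ governs enabledness.
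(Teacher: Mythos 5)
Your proof is correct and follows essentially the same route as the paper: both derive from $\neg\PUp(u)$ and $\PICorrect(u)$ that every process has a neighbor exactly one increment behind, then chase this relation until a vertex repeats, and close with the modular contradiction enabled by $\ParU > n$. Your explicit walk-plus-pigeonhole formulation is just a more detailed rendering of the paper's appeal to finiteness to obtain a cycle of length at most $n$.
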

\begin{proof}
Assume, by the contradiction, a terminal configuration $\gamma$  where
$\PICorrect(u) \wedge \PClean(u)$  for every process $u$. Then, every process $u$ has at least one neighbor $v$ such that $\val_v = (\val_u -1) \bmod \ParU$. Since the number of processes is finite, in $\gamma$ there exist elementary cycles $u_1, \ldots, u_x$ such that
\begin{enumerate}
\item for every $i \in \{1, \ldots, x-1\}$, $u_i$ and $u_{i+1}$ are neighbors, and $\val_{u_i} =
  (\val_{u_{i+1}} -1) \bmod \ParU$; and \label{cycle:case1}
  \item $u_1$ and $u_x$ are neighbors and $\val_{u_x} =
    (\val_{u_1} -1) \bmod \ParU$. \label{cycle:case2}
\end{enumerate}
By transitivity, Case \ref{cycle:case1} implies that $\val_{u_1} =
(\val_{u_x} -(x-1)) \bmod \ParU$. So, from Case 2, we obtain $\val_{u_x} =
    (\val_{u_x} -x) \bmod \ParU$. Now, by definition $x \leq n$ and $\ParU > n$ so $\val_{u_x} \neq 
    (\val_{u_x} -x) \bmod \ParU$, a contradiction. Hence, $\gamma$ is not terminal.
\end{proof}

\begin{lemma}\label{u:fair}
Any execution of \DU, that starts from a configuration where
$\PICorrect(u) \wedge \PClean(u)$ holds for every process $u$,
satisfies the liveness of the unison problem.
\end{lemma}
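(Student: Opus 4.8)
The plan is to argue by contradiction, using the safety property already established to propagate a ``blocking'' phenomenon along edges until it forces global termination, which the absence of deadlock forbids. The unfairness of the daemon means I cannot rely on the fact that a permanently enabled process will eventually be scheduled; instead the whole argument is driven by the clock invariant.

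First I would observe that, by Corollary~\ref{unison:safety}, the predicate $\PICorrect(u) \wedge \PClean(u)$ is closed by \DU, so it holds in every configuration of the considered execution; by Lemma~\ref{unison:noDL}, no such configuration is terminal. Consequently the execution is infinite, and since the (distributed) daemon activates at least one enabled process at each step, the execution contains infinitely many moves in total.

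Next I would suppose, for contradiction, that liveness fails, i.e., some process executes $\ruleUA$ only finitely often. Let $F$ be the set of all such processes; by assumption $F \neq \emptyset$, and there is a configuration $\gamma^\star$ of the execution after which no process of $F$ moves. The heart of the proof is the claim that \emph{if $u \in F$ then every neighbor of $u$ belongs to $F$ as well}. To establish it, I would fix $u \in F$ and a neighbor $v$, and work in the suffix starting at $\gamma^\star$, where $\val_u$ is frozen at some value $a$. Safety ($\PAgree(u,v)$, which holds in every configuration by Corollary~\ref{u:safe}) confines $\val_v$ to $\{(a-1)\bmod\ParU,\, a,\, (a+1)\bmod\ParU\}$. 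Since $\ParU > n$ (hence $\ParU \geq 3$ as soon as $u$ has a neighbor) these three values are distinct, so each execution of $\ruleUA(v)$ strictly advances $\val_v$ upward inside this window and cannot wrap past $(a+1)\bmod\ParU$ without violating safety. Once $\val_v = (a+1)\bmod\ParU$, the guard $\PUp(v)$ is falsified by $u$, whose clock $a$ is neither $\val_v$ nor $(\val_v+1)\bmod\ParU$, so $v$ can no longer move. Hence $v$ executes $\ruleUA$ at most twice after $\gamma^\star$, so $v \in F$, which proves the claim.

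Finally, since $G$ is connected and $F$ is a non-empty set closed under taking neighbors, I would conclude $F = V$. Then every process moves only finitely often, so the total number of moves is finite, contradicting the fact that the execution performs infinitely many moves. Therefore $F = \emptyset$: every process executes $\ruleUA$, and thus increments its clock, infinitely often, which is exactly the liveness of the unison problem. I expect the neighbor-propagation claim to be the main obstacle and the only genuinely technical step, precisely because it is where the unfairness of the daemon must be sidestepped: rather than waiting for a scheduler to be fair, one shows that a permanently frozen process mechanically blocks each of its neighbors through the safety invariant, so non-liveness cannot be confined to a proper subset of the (connected) network.
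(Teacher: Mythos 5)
Your proof is correct, and while it shares the paper's skeleton---argue by contradiction, consider the non-empty set $F$ of processes that eventually stop executing $\ruleUA$, invoke Lemma~\ref{unison:noDL} and connectivity---its key mechanism is genuinely different. The paper takes an edge across the cut, with $v$ frozen and $u$ moving infinitely often, and derives its contradiction from \emph{safety}: after at most three increments of $u$, $\PAgree(u,v)$ would fail, contradicting Lemma~\ref{unison:agree}. You instead derive it from the \emph{guard}: a frozen neighbor $u$ falsifies $\PUp(v)$ once $\val_v=(\val_u+1)\bmod\ParU$, so $v$ moves at most twice more and thus lies in $F$ itself; hence $F$ is closed under taking neighbors, $F=V$ by connectivity, and the execution has finitely many moves, contradicting its infinitude. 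Your route buys robustness at two points where the paper's argument is loose: (i) when $n=2$ and $\ParU=3$, the window $\{(\val_u-1)\bmod\ParU,\ \val_u,\ (\val_u+1)\bmod\ParU\}$ contains every clock value, so $\PAgree(u,v)$ can never be falsified and the paper's contradiction evaporates, whereas your guard argument needs only $\ParU\geq 3$, which always holds; (ii) the paper needs its process $u\in I$ to actually perform three increments, which its definition of $I$ (processes moving at least once after $\gamma$) does not by itself guarantee---one must choose $\gamma$ past the last move of every finitely-moving process---while your argument never requires any process to move often, only that frozen processes stay frozen. One small blemish: your remark that $\val_v$ ``cannot wrap past $(\val_u+1)\bmod\ParU$ without violating safety'' is itself false when $\ParU=3$ (wrapping stays inside the window); it is harmless only because it is redundant, since your guard argument alone caps $v$ at two further moves.
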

\begin{proof}
  Let $e$ be 
any execution of \DU that starts from a configuration where
$\PICorrect(u) \wedge \PClean(u)$ holds for every process $u$. Assume, by the contradiction, that $e$ does not satisfy the liveness of unison. Then, $e$ contains a configuration $\gamma$ from which some processes (at least one) never more executes $\ruleUA$. Let $F$ be the non-empty subset of processes that no more move from $\gamma$. Let $I = V \setminus F$. By Lemma~\ref{unison:noDL}, $I$ is not empty too. Now, since the network is connected, there are two processes $u$ and $v$ such that $u \in I$ and $v \in F$. Now, after at most 3 increments of $u$ from $\gamma$, $\PAgree(u,v)$ no more holds, contradicting Lemma~\ref{unison:agree}.
\end{proof}

Consider now any execution $e$ of \DU starting from
$\gamma_{init}$. In $\gamma_{init}$, we have $\PClean(u) \wedge
\PICorrect(u)$ for every process $u$. Hence, by Corollary~\ref{u:safe} and Lemma~\ref{u:fair}, follows.

\begin{theorem}\label{theo:u:correct}
  \DU is distributed (non self-stabilizing) unison.
\end{theorem}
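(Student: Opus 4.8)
The plan is to obtain the theorem as an immediate consequence of the safety and liveness results already established, instantiated at the canonical initial configuration $\gamma_{init}$. The first thing I would do is check that $\gamma_{init}$ lies in the class of configurations to which Corollary~\ref{u:safe} and Lemma~\ref{u:fair} apply, namely those where $\PICorrect(u) \wedge \PClean(u)$ holds for every process $u$. For $\PClean(u)$ this is immediate, since in $\gamma_{init}$ every member of $\N[u]$ has status $C$. For $\PICorrect(u)$, every neighbor $v$ of $u$ satisfies $\val_v = 0 = \val_u$ in $\gamma_{init}$, so $\PAgree(u,v)$ holds trivially, and hence so does $\PICorrect(u)$.

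Having confirmed the hypothesis, I would then invoke the two structural results directly. By Corollary~\ref{u:safe}, every execution of \DU starting in such a configuration satisfies the safety requirement of the unison problem, that is, the clocks of any two neighbors differ by at most one increment at each instant. By Lemma~\ref{u:fair}, every such execution satisfies the liveness requirement, namely that each process increments its clock infinitely often. Since the unison specification is exactly the conjunction of these two properties, and $\gamma_{init}$ meets the premise of both statements, any execution of \DU issued from $\gamma_{init}$ realizes the unison specification. This is precisely what it means for \DU to be a distributed (non self-stabilizing) unison.

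There is essentially no obstacle to surmount at this stage: the genuine work has already been carried out in Corollary~\ref{u:safe}, which rests on the closure of $\PICorrect(u) \wedge \PClean(u)$ under \DU together with the per-step preservation of $\PAgree$, and in Lemma~\ref{u:fair}, which rules out a frozen nonempty subset of processes via connectivity and the observation that a stalled neighbor eventually falls out of agreement. The only point deserving attention is the admissibility check for $\gamma_{init}$, which is the trivial verification above. I would accordingly keep the proof to a single short paragraph assembling these ingredients.
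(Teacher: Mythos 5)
Your proposal is correct and follows essentially the same route as the paper: the paper's proof likewise observes that $\gamma_{init}$ satisfies $\PClean(u) \wedge \PICorrect(u)$ for every process $u$ and then concludes directly from Corollary~\ref{u:safe} and Lemma~\ref{u:fair}. Your explicit verification that $\PAgree(u,v)$ holds in $\gamma_{init}$ (all clocks equal to $0$) is a minor elaboration of a step the paper leaves implicit, but the argument is the same.
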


\paragraph{Properties of \DU.}

Consider any execution $e$ of \DU starting from a configuration
$\gamma$ which does not satisfy $\PClean(u) \wedge \PICorrect(u)$ for
every process $u$.  Then, there exists at least one process $u$
satisfying $\neg \PClean(u) \vee \neg \PICorrect(u)$ in $\gamma$, and
$u$ is disabled forever in $e$. Indeed, if $\neg \PClean(u)$,
then $\neg \PClean(u)$ holds forever since \DU does not write into
\A's variables. If $ \neg \PICorrect(u)$ holds, then there is a
neighbor $v$ such that $\neg \PAgree(u,v)$ holds, both $u$ and $v$ are
disabled, hence so $\neg \PAgree(u,v)$ holds forever, which implies
that $\neg \PUp(u)$ forever.
Now, since $u$ is disabled forever, each neighbor of $u$ moves at most
three times in $e$. Inductively, every node at distance $d$ from $u$
moves at most $3d$ times. Overall, we obtain the following lemma.

\begin{lemma}\label{lem:move:unison}
In any execution of \DU starting from a configuration which does not
satisfy $\PClean(u) \wedge \PICorrect(u)$ for every process $u$, each
process moves at most $3D$ times, where $D$ is the network diameter.
\end{lemma}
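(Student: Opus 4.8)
The plan is to anchor the whole count on a single process whose clock is frozen throughout the execution, and then to propagate a move bound outward along shortest paths by means of a purely local pacing argument. First I would pin down the frozen process. By hypothesis the initial configuration violates $\PClean(u) \wedge \PICorrect(u)$ at some process, so, as established just above, there is a process $u_0$ with $\neg\PClean(u_0) \vee \neg\PICorrect(u_0)$ that is disabled throughout $e$: if $\neg\PClean(u_0)$ this persists because \DU never writes \A's variables, and if $\neg\PICorrect(u_0)$ then some incident edge $\{u_0,v\}$ violates $\PAgree$, which disables \emph{both} endpoints and hence persists forever. Consequently $\val_{u_0}$ is constant and $u_0$ makes $0$ moves, giving the base case ($3\cdot 0 = 0$) of an induction on the distance $d$ to $u_0$.

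The core of the argument is a per-hop (pacing) lemma: on any edge $\{w,w'\}$, the number of moves of $w$ in $e$ exceeds that of $w'$ by at most $3$. To prove it I would track the gap $g = (\val_{w'} - \val_w) \bmod \ParU$. The guard $\PUp(w)$ forces $g \in \{0,1\}$ whenever $w$ moves; a move of $w$ decrements $g$ (mod $\ParU$) while a move of $w'$ increments it. Thus $w$ can make at most two consecutive moves (from $g=1$, then from $g=0$) before reaching $g = \ParU-1$, at which point $w$ is blocked with respect to $w'$ until $w'$ itself moves to restore $g \in \{0,1\}$. Counting the moves of $w$ in maximal runs between successive moves of $w'$, each run after the first is ``paid for'' by at least as many moves of $w'$ as its own length, while the first run contributes at most $2$; this yields at most $(\text{moves of }w') + 2 \le (\text{moves of }w') + 3$. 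Since $\PUp(w)$ also constrains $w$ through its other neighbors, its true number of moves is no larger, so the bound stands. The main obstacle is precisely this step: one must rule out that $w$ exploits the wrap-around of the modular clock to gain extra moves, and the stalling-at-$g=\ParU-1$ observation, together with the $\PAgree$-closure of Lemma~\ref{unison:agree}, is what forbids it.

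Finally I would run the induction on distance. Any process $w$ at distance $d \ge 1$ from $u_0$ has a neighbor $w'$ at distance $d-1$; by the induction hypothesis $w'$ moves at most $3(d-1)$ times, so the pacing lemma gives that $w$ moves at most $3(d-1) + 3 = 3d$ times. Since every process lies within distance $D$ of $u_0$ (the eccentricity of $u_0$ is at most the diameter), every process moves at most $3D$ times in $e$, which is the claim.
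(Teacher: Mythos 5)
Your overall strategy is exactly the paper's: the paper's proof consists of pinning down a process $u_0$ that is disabled forever (with the same two cases you give: $\neg\PClean(u_0)$ persists because \DU never writes \A's variables, and a violated $\PAgree$ edge disables both endpoints forever), and then asserting that each neighbor of a frozen process moves at most $3$ times and inducting on distance. Your frozen-process step and your distance induction are correct, and the per-edge pacing bound you assert is in fact true (even with $+2$ in place of $+3$).

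The gap is in your proof of the pacing lemma: the amortization claim ``each run after the first is paid for by at least as many moves of $w'$ as its own length'' is false. Counter-example with $\ParU = 10$, $\val_w = 5$, $\val_{w'} = 6$ (so $g=1$): $w$ moves once ($g\colon 1 \to 0$), then $w'$ moves once ($g\colon 0 \to 1$), then $w$ moves twice ($g\colon 1 \to 0 \to 9$). The second run has length $2$ but is preceded by a single move of $w'$. The total, $3 = m(w')+2$, still satisfies your bound, but only because the first run consumed just one unit of its two-move allowance --- a carry-over of credit that your per-run accounting ignores, so as written the pacing lemma is not established. A clean repair: let $G$ be the integer (unwound) counter $g_0 + \#(\text{moves of } w') - \#(\text{moves of } w)$, so that $g = G \bmod \ParU$ and each move of $w$ requires $G \equiv 0$ or $1 \pmod{\ParU}$. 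Then $G$ can never reach $-2$: the step creating that value would be a move of $w$ at $G=-1$, which needs $\ParU - 1 \in \{0,1\}$, impossible since $\ParU > n \geq 2$. Moreover, if $g_0 \geq 2$, then $w$ cannot move before $G$ reaches $\ParU$, after which the same argument (shifted by $\ParU$) gives $G \geq \ParU - 1$ forever. In all cases $m(w) - m(w') = g_0 - G_{\mathrm{end}} \leq 2$, and with this patch your induction goes through verbatim. So the flaw is local and repairable, not a wrong approach --- but it sits precisely at the step you yourself identified as the main obstacle (ruling out gains from wrap-around), which is also the step the paper leaves implicit.
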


\subsection{Algorithm \DU $\circ$ \A}

\paragraph{Requirements.}

To show the self-stabilization of \DU $\circ$ \A, we should first
establish that \DU meets the requirements~\ref{RQ1} to~\ref{RQ6},
given in Subsection~\ref{sect:require}.

Requirement~\ref{RQ2} is satisfied since $\PICorrect(u)$ does not
involve any variable of \A and is closed by \DU
(Lemma~\ref{unison:agree}).  All other requirements directly follow
from the code of \DU.

\paragraph{Self-stabilization and Move Complexity.}

We define the legitimate configurations of \DU $\circ$ \A as the set
of configurations satisfying $\PClean(u) \wedge \PICorrect(u)$ for
every process $u$. This set actually corresponds to the set of normal
configurations (see Corollary~\ref{cor:round}, page
\pageref{cor:round}) and 
is closed by Algorithm \DU $\circ$
\A, by Remark~\ref{rem:mutex} (page \pageref{rem:mutex}),
Theorem~\ref{theo:termA} (page \pageref{theo:termA}), and
Corollary~\ref{unison:safety}. Then, from any normal configuration,
the specification of the unison holds, by Corollary~\ref{u:safe} and
Lemma~\ref{u:fair}.  So, it remains to show the convergence.

Let $u$ be a process. Let $e$ be an execution of \DU $\circ$ \A.
By Corollary \ref{cor:seg} (page \pageref{cor:seg}), the sequence of
rules executed by a process $u$ in a segment of $e$ belongs to the
following language:
$$(\ruleC + \varepsilon)\ \wordsI\ (\ruleRB + \ruleR +
\varepsilon)\ (\ruleRF + \varepsilon)$$ where $\wordsI$ be any
sequence of rules of \DU.

Let assume that $e$ contains $s$ segments. Recall that $s \leq n+1$;
see Remark~\ref{rem:nbseg}, page \pageref{rem:nbseg}.  Let call {\em
  regular} segment any segment that starts in a configuration
containing at least one abnormal alive root. A regular segment
contains no normal configuration, so, by Lemma~\ref{lem:transfert}
(page \pageref{lem:transfert}) and Lemma~\ref{lem:move:unison}, the
sequence $\wordsI$ of $u$ is bounded by $3D$ in $S$.  Thus, $u$
executes at most $3D+3$ moves in $S$ and, overall a regular segment
contains at most $(3D+3).n$ moves and necessarily ends by a step where
the number of abnormal alive root decreases. Hence, all $s-1$ first
segments are regular and the last one is not.
Overall, the last segment $S_{last}$ starts after at most
$(3D+3).n.(s-1)$ moves.  $S_{last}$ contains no abnormal alive root
and so, the sequence of rules executed by $u$ in $S_{last}$ belongs to
the following language: $(\ruleC +
\varepsilon)\ \wordsI$.\footnote{Otherwise, $\st_u = RB$ in some
  configuration of $S_{last}$, and that configuration contains an
  abnormal root, by Lemma~\ref{lem:trace} (page \pageref{lem:trace}), a
  contradiction.} If the initial configuration of $S_{last}$ contains
no process of status $RF$, then it is a normal configuration and so $s
= 1$, {\em i.e.}, $e$ is initially in a normal configuration.
Otherwise, let $v$ be a process such that $\st_v = RF$ in the initial
configuration of $S_{last}$ and no other process executes $\ruleC$
later than $v$.  Following the same reasoning as in
Lemma~\ref{lem:move:unison}, while $v$ does not execute $\ruleC$, each
process other than $v$ can execute at most $3D$ rules of \DU and one
$\ruleC$. Hence, there are at most $(3D+1).(n-1)+1$ moves in
$S_{last}$ before the system reaches a normal configuration.

 Since, in the worst case $s = n+1$, overall $e$ reaches a normal
 configuration in at most $(3D+3).n^2+(3D+1).(n-1)+1$ moves, and we
 have the following theorem.

\begin{theorem} \DU $\circ$ \A is
  self-stabilizing for the unison problem. Its stabilization time is
  in $O(D.n^2)$ moves.
\end{theorem}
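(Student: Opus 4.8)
The plan is to split the statement into its two halves---self-stabilization and the $O(D.n^2)$ move bound---and to exploit the fact that almost all the ingredients are already available, so that the real work is a careful move-counting argument organized by segments. For self-stabilization I would take the legitimate configurations to be exactly the \emph{normal} configurations, which by Corollary~\ref{cor:round} coincide with the configurations satisfying $\PClean(u) \wedge \PICorrect(u)$ for every process $u$. Closure of this set under \DU $\circ$ \A follows by combining three facts: in a normal configuration the projection onto \A's variables is terminal (Theorem~\ref{theo:termA}), the two algorithms are mutually exclusive (Remark~\ref{rem:mutex}) so only \DU can move, and \DU preserves $\PICorrect \wedge \PClean$ (Corollary~\ref{unison:safety}). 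That every execution starting in a normal configuration satisfies the unison specification is then precisely Corollary~\ref{u:safe} (safety) together with Lemma~\ref{u:fair} (liveness). Hence the only remaining obligation is \emph{convergence}, which will fall out of the finiteness implied by the move bound.

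For the move bound I would decompose any execution $e$ into its segments, of which there are at most $n+1$ (Remark~\ref{rem:nbseg}), the count of alive roots $|AR(\gamma)|$ being non-increasing (Remark~\ref{rem:inclusion}) and strictly decreasing across segment boundaries. Within one segment the rules a process $u$ performs form a word in $(\ruleC+\varepsilon)\,\wordsI\,(\ruleRB+\ruleR+\varepsilon)\,(\ruleRF+\varepsilon)$ by Corollary~\ref{cor:seg}, so $u$ contributes at most three \A-moves per segment; the whole game is to bound its \DU-moves, i.e.\ the length of its $\wordsI$ factor. Calling a segment \emph{regular} when it starts with an alive root, and noting that $AR$ is constant inside a segment, a regular segment contains no normal configuration; via Lemma~\ref{lem:transfert} its \DU-activity then lifts to a genuine \DU-execution issued from a configuration that is not a legitimate unison configuration, so Lemma~\ref{lem:move:unison} caps each process at $3D$ \DU-moves, giving at most $3D+3$ moves per process and $(3D+3)n$ per regular segment. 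Since every regular segment ends with a strict drop of $|AR|$, at most the first $s-1 \le n$ segments are regular.

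The last segment $S_{last}$ needs separate treatment, and this is where I expect the main difficulty: it contains no alive root, hence no process of status $RB$, so each word reduces to $(\ruleC+\varepsilon)\,\wordsI$ and there is no longer an alive root whose disappearance could force a further decrease. If $S_{last}$ also has no process of status $RF$, its first configuration is already normal and $e$ has converged; otherwise I would single out the process $v$ that executes $\ruleC$ last, observe that while $v$ has not yet switched to status $C$ it keeps some neighbor blocked, and re-run the distance argument of Lemma~\ref{lem:move:unison} to bound every other process by $3D$ \DU-moves plus a single $\ruleC$, giving at most $(3D+1)(n-1)+1$ moves in $S_{last}$ before a normal configuration is reached. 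Summing over the at most $s \le n+1$ segments yields at most $(3D+3)n^2+(3D+1)(n-1)+1 = O(D.n^2)$ moves, which simultaneously establishes convergence and the announced complexity. The two delicate points are the faithful transfer of a segment's \DU-steps to a standalone \DU-execution, so that the $3D$ cap of Lemma~\ref{lem:move:unison} legitimately applies, and the last-segment accounting, where the decrease of $|AR|$ is no longer available and one must instead localize the bound through the \emph{latest} execution of $\ruleC$.
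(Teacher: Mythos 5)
Your proposal is correct and takes essentially the same approach as the paper's own proof: the same choice of legitimate (normal) configurations with closure and specification discharged via Corollary~\ref{cor:round}, Remark~\ref{rem:mutex}, Theorem~\ref{theo:termA}, Corollary~\ref{unison:safety}, Corollary~\ref{u:safe}, and Lemma~\ref{u:fair}, and the same segment decomposition in which regular segments are bounded by $(3D+3)n$ moves through Lemma~\ref{lem:transfert} and Lemma~\ref{lem:move:unison}. Your handling of the last segment (reducing words to $(\ruleC+\varepsilon)\,\wordsI$, then localizing around the process executing $\ruleC$ last) and your final count $(3D+3)n^2+(3D+1)(n-1)+1$ match the paper exactly.
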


\paragraph{Round Complexity.}

By Corollary \ref{cor:round} (page~\pageref{cor:round}), follows.

\begin{theorem}
The
stabilization time of \DU $\circ$ \A is  at most $3n$ rounds.
\end{theorem}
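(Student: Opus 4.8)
The plan is to observe that this round bound is not a fresh computation but a direct instantiation of the generic analysis already carried out for the reset algorithm. All of the round-by-round work has been done in the sequence of attractor lemmas (Lemmas~\ref{lem:att1}--\ref{lem:att4}) and packaged into Corollary~\ref{cor:round}, which asserts that {\tt I} $\circ$ \A converges from $true$ to $\mathcal{P}_4$ within at most $3n$ rounds for \emph{any} input algorithm {\tt I} meeting the prerequisites. So the entire task reduces to checking that \DU is a legitimate instance of {\tt I} and that its legitimate configurations are exactly the normal configurations of the composition.

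Concretely, I would proceed in three short steps. First, I would invoke the fact (established in the \textbf{Requirements} paragraph) that \DU satisfies Requirements~\ref{RQ1}--\ref{RQ6}; this licenses applying every result about {\tt I} $\circ$ \A with {\tt I} set to \DU. Second, I would recall that the legitimate configurations of \DU $\circ$ \A were \emph{defined} as those in which $\PClean(u) \wedge \PICorrect(u)$ holds for every process $u$, and that by the characterization in Corollary~\ref{cor:round} these are precisely the configurations satisfying $\mathcal{P}_4$, i.e.\ the normal configurations. Third, I would read off the convergence clause of Corollary~\ref{cor:round}: \DU $\circ$ \A converges from $true$ to $\mathcal{P}_4$ within at most $3n$ rounds. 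Since reaching $\mathcal{P}_4$ is exactly reaching a legitimate configuration, and this holds for every execution regardless of the initial configuration, the stabilization time in rounds is at most $3n$.

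I do not expect any genuine obstacle here, since the difficult argument — bounding each of the three reset phases (broadcast, feedback, cleaning) by $n$ rounds under the unfair daemon — is entirely contained in the earlier attractor lemmas. The only point deserving explicit mention is the identification of the legitimate set with $\mathcal{P}_4$: this is what makes ``stabilizes in the sense of reaching a legitimate configuration'' interchangeable with ``converges to $\mathcal{P}_4$,'' and it is guaranteed by the equivalence $\gamma \in \mathcal{P}_4 \iff \PClean(u) \wedge \PICorrect(u)\ \forall u$ stated in Corollary~\ref{cor:round} together with the definition of legitimacy given in the self-stabilization paragraph. With that identification in hand, the theorem is immediate.
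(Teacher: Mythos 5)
Your proposal is correct and follows exactly the paper's route: the paper's proof of this theorem is literally a one-line appeal to Corollary~\ref{cor:round}, relying on the earlier verification that \DU meets Requirements~\ref{RQ1}--\ref{RQ6} and on the identification of the legitimate configurations of \DU $\circ$ \A with the normal configurations ($\mathcal{P}_4$), both of which you cite. You merely make explicit the bookkeeping the paper leaves implicit, so there is nothing to add or correct.
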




\section{$(f,g)$-alliance}\label{sect:alliance}

\subsection{The Problem}

The $(f,g)$-alliance problem has been defined by Dourado {\em et
  al.}~\cite{DouradoPRS11}.
Given a graph $G =(V,E)$, and two non-negative integer-valued
functions on nodes $f$ and $g$, a subset of nodes $A \subseteq V$ is
an {\em $(f,g)$-alliance} of $G$ if and only if every node $u\notin A$
has at least $f(u)$ neighbors in $A$, and every node $v\in A$ has at
least $g(v)$ neighbors in $A$.
The $(f,g)$-alliance problem is the problem of finding a subset of
processes forming an {\em $(f,g)$-alliance} of the network.
The {\em $(f,g)$-alliance} problem is a generalization of several
problems that are of interest in distributed computing. Indeed,
consider any subset $S$ of processes/nodes:
\begin{enumerate}
\item $S$ is a  domination set~\cite{berge2001theory} if and
  only if $S$ is a $(1,0)$-alliance;

\item more generally, $S$ is a $k$-domination
  set~\cite{berge2001theory} if and only if $S$ is a 
  $(k,0)$-alliance;

\item $S$ is a  $k$-tuple dominating set~\cite{LiaoC03} if
  and only if $S$ is a $(k,k-1)$-alliance;

\item $S$ is a  global offensive
  alliance~\cite{Sigarreta2009219} if and only if $S$ is a 
  $(f,0)$-alliance, where $f(u) = \lceil \frac{\delta_u+1}{2} \rceil$
  for all $u$;

\item $S$ is a  global defensive alliance~\cite{SigarretaR06}
  if and only if $S$ is a  $(1,g)$-alliance, where $g(u) =
  \lceil \frac{\delta_u+1}{2} \rceil$ for all $u$;

\item $S$ is a  global powerful alliance~\cite{YahiaouiBHK13}
  if and only if $S$ is a  $(f,g)$-alliance, such that
$f(u) = \lceil \frac{\delta_u+1}{2} \rceil$ and
  $g(u) = \lceil \frac{\delta_u}{2} \rceil$ for all $u$.
\end{enumerate}
We remark that $(f,g)$-alliances have applications in the fields of
population protocols~\cite{AngluinAER07} and server allocation in
computer networks~\cite{GuptaMOR05}.

Ideally, we would like to find a {\em minimum\/} $(f,g)$-alliance,
namely an $(f,g)$-alliance of the smallest possible cardinality.
However, this problem is $\mathcal{NP}$-hard, since the $(1,0)$-alliance ({\em i.e.}, the domination set problem) is known to be $\mathcal{NP}$-hard~\cite{GareyJ79}. 
We can instead consider the problem of finding a {\em minimal\/}
 $(f,g)$-alliance. 
An {\em $(f,g)$-alliance} is {\em minimal} if no proper
subset of $A$ is an {\em $(f,g)$-alliance}.
Another variant is the {\em $1$-minimal} {\em $(f,g)$-alliance}.
$A$ is a {\em $1$-minimal} {\em $(f,g)$-alliance} if
deletion of just one member of $A$ causes $A$ to be no more an {\em
  $(f,g)$-alliance}, {\em i.e.}, $A$ is an {\em $(f,g)$-alliance} but
$\forall u \in A$, $A \setminus \{u\}$ is not an {\em
  $(f,g)$-alliance}.  Surprisingly, a {\em $1$-minimal} {\em
  $(f,g)$-alliance} is not necessarily a {\em minimal} {\em
  $(f,g)$-alliance} \cite{DouradoPRS11}.
 However, we
have the following property:

\begin{property}[Dourado {\em et al.}~\cite{DouradoPRS11}]
  \label{prop:fga}
  Given two non-negative integer-valued functions $f$ and $g$ on nodes
  \begin{enumerate}
  \item Every minimal {\em $(f,g)$-alliance} is a $1$-minimal {\em
      $(f,g)$-alliance}, and
  \item if $f(u) \geq g(u)$ for every process $u$, then every $1$-minimal {\em $(f,g)$-alliance} is a
    minimal {\em $(f,g)$-alliance}.
  \end{enumerate}
\end{property}

\subsection{Contribution}

We first propose a distributed algorithm called \DA. Starting
from a pre-defined configuration, \DA computes a $1$-minimal
$(f,g)$-alliance in any identified network where $\delta_u \geq
max(f(u), g(u))$, for every process $u$.  Notice that this latter
assumption ensures the existence of a solution.  \DA is not
self-stabilizing, however we show that the composite algorithm \DA $\circ$ \A is actually an efficient self-stabilizing
$1$-minimal $(f,g)$-alliance algorithm.

\subsection{Related Work}

Recall that the {\em $(f,g)$-alliance} problem has been introduced by
Dourado {\em et al.}~\cite{DouradoPRS11}.  In that paper, the authors
give several distributed algorithms for that problem and its variants,
but none of them is self-stabilizing.

In~\cite{CDDLR15j}, Carrier {\em et al.} proposes a silent
self-stabilizing algorithm that computes a minimal {\em
  $(f,g)$-alliance} in an asynchronous network with unique node IDs,
assuming that every node $u$ has a degree at least $g(u)$ and
satisfies $f(u) \geq g(u)$.  Their algorithm is also {\em safely
  converging} in the sense that starting from any configuration, it
first converges to a (not necessarily minimal) {\em $(f,g)$-alliance}
in at most four rounds, and then continues to converge to a minimal
one in at most $5n + 4$ additional rounds, where $n$ is the size of
the network.  The algorithm is written in the locally shared memory
model with composite atomicity. It is proven assuming a distributed unfair
 daemon and takes $O(n\cdot\Delta^3)$ moves to stabilize,
where $\Delta$ is the degree of the network.

There are several other self-stabilizing solutions for particular
instances of $(f,g)$-alliances proposed in the locally
shared memory model with composite atomicity, {\em
  e.g.},~\cite{DingWS14,KakugawaM06,SrimaniX07,Turau07,PathanTXW12,YahiaouiBHK13}.

Algorithms given in ~\cite{SrimaniX07,PathanTXW12} work in anonymous
networks, however, they both
assume a central daemon. More precisely, Srimani and
Xu~\cite{SrimaniX07} give several algorithms which compute minimal
global offensive and $1$-minimal defensive
alliances in $O(n^3)$ moves.  Wang {\em et al.}~\cite{PathanTXW12} give
a self-stabilizing algorithm to compute a minimal $k$-dominating set
in $O(n^2)$ moves.

All other solutions~\cite{DingWS14,KakugawaM06,Turau07,YahiaouiBHK13}
consider arbitrary identified networks.  Turau~\cite{Turau07} gives a
self-stabilizing algorithm to compute a minimal dominating set in $9n$
moves, assuming a distributed unfair daemon.  Yahiaoui {\em et
  al.}~\cite{YahiaouiBHK13} give self-stabilizing algorithms to compute
a minimal global powerful alliance.  Their solution assumes a distributed unfair
 daemon and stabilizes in $O(n\cdot m)$ moves, where $m$ is
the number of edges in the network.

A safely converging self-stabilizing algorithm is given
in~\cite{KakugawaM06} for computing a minimal dominating set. The
algorithm first computes a (not necessarily minimal) dominating set in
$O(1)$ rounds and then safely stabilizes to a {\em minimal\/}
dominating set in $O(D)$ rounds, where $D$ is the diameter of
the network. However, a synchronous daemon is required.
A safely converging self-stabilizing algorithm
for computing minimal global offensive alliances
is given in~\cite{DingWS14}. This algorithm also assumes a synchronous daemon.
It first
computes a (not necessarily minimal) global offensive alliance within two
rounds, and then safely stabilizes to a {\em minimal\/} global
offensive alliance within $O(n)$ additional rounds.

To the best of our knowledge, until now there was no self-stabilizing
algorithm solving the 1-minimal $(f,g)$-alliance without any restriction
on $f$ and $g$.

\subsection{Algorithm \DA}

\paragraph{Overview.}

\begin{algorithm}
\small  
$~$ \\[0.3cm]
\textbf{Inputs:} \\[0.1cm]
\begin{tabular}{llll}
$\bullet$ & $\st_u \in \{ C, RB, RF \}$ & : & variable of \A\\  
$\bullet$ & $\PClean(u)$ & : & predicate of \A\\
$\bullet$ & $id_u$ & : & identifier of $u$, constant from the system
\end{tabular}
\\[0.3cm]
\textbf{Variables:} \\[0.1cm]
\begin{tabular}{llll}
$\bullet$ & $\VColor_u$ & : & Boolean\\
$\bullet$ & $\VSatisfaction_u \in \{-1,0,1\}$ & : & The score of $u$\\
$\bullet$ & $\VCanQuit_u$  & : & Boolean\\
$\bullet$ & $\VPointer_u \in \N[u] \cup \{ \bot \}$  & : & Closed Neighborhood Pointer
\end{tabular}
\\[0.3cm]
\textbf{Predicates:} \\[0.1cm]
\begin{tabular}{llll}
$\bullet$ & $\PICorrect(u)$ & $\equiv$ & $\satisfaction(u) \geq 0$ $\And$\\
          &                 &          & $[(\VSatisfaction_u = \satisfaction(u) = 1) \vee \VPointer_u=\bot \vee$\\
          &                 &          & $(\VPointer_u \neq \bot \wedge \VSatisfaction_u=1 \wedge \neg \VColor_{\VPointer_u})]$\\

$\bullet$ & $\PReset(u)$ &  $\equiv$ & $\VColor_u \And \VPointer_u = \bot \And \VCanQuit_u \And \VSatisfaction_u =1$\\

$\bullet$ & $\PQuitAlliance(u)$ &  $\equiv$ & $\VColor_u \wedge \numberYes(u) \geq f(u) \wedge (\forall v \in \N(u), \VSatisfaction_v =1)$\\
 $\bullet$ & $\PBecomeNormal(u)$ &  $\equiv$ & $\PQuitAlliance(u) \And (\forall v \in \N[u], \VPointer_v =u)$ \\

$\bullet$ & 
$\PRulePointer(u)$ &  $\equiv$ &  $\neg\PBecomeNormal(u) \And \VPointer_u \neq \BestPointer(u)$ 

\end{tabular}  
\\[0.3cm]
\textbf{Macros:}\\[0.1cm]
\begin{tabular}{llll}
$\bullet$ & $\reset(u)$ & : & $\VColor_u:= true$; $\VPointer_u := \bot$; $\VCanQuit_u := true$;
 $\VSatisfaction_u := 1$;\\
$\bullet$ & $\numberYes(u)$ & : & $|\{ w \in \N(u) ~|~ \VColor_w\}|$ 
\end{tabular} \\
$\bullet$ 
$
\left \{
   \begin{array}{l c l c r}
\satisfaction(u)=-1 & if  & \numberYes(u) < f(u) &  \wedge & \neg \VColor_u\\
\satisfaction(u)=-1 & if  & \numberYes(u) < g(u) & \wedge & \VColor_u\\
\satisfaction(u)=0 & if  & \numberYes(u) = f(u) & \wedge & \neg \VColor_u\\
\satisfaction(u)=0 & if  & \numberYes(u) = g(u) & \wedge & \VColor_u\\
\satisfaction(u)=1 & if & \numberYes(u) > f(u) & \wedge & \neg \VColor_u\\
\satisfaction(u)=1 & if & \numberYes(u) > g(u) & \wedge & \VColor_u\\    
   \end{array}
   \right .
$\\
\begin{tabular}{llll}
$\bullet$ & $\ComputeVariable(u)$ & : &  $\VSatisfaction_u := \satisfaction(u)$; $\VCanQuit_u := \PQuitAlliance(u)$;\\
$\bullet$ & $\BestPointer(u)$ & : & if $(\VSatisfaction_u \leq 0)$ return $\bot$;\\
          &                   &   & if $(\forall v \in \N[u], \neg \VCanQuit_u)$ then  return $\bot$;\\
          &                   &   & else  return $\argmin_{(v \in \N[u] | \VCanQuit_u)}(id_u)$;\\
$\bullet$ & $\update(u)$ & :  & $\ComputeVariable(u)$; $\VPointer_u := \BestPointer(u)$ ;
\end{tabular}
\\[0.3cm]
{\textbf{Rules:}} \\[0.1cm]
\begin{tabular}{lllll}
$\ruleColor(u)$ & : &  $\PClean(u) \And \PICorrect(u) \And$  & $\to$ & $\VColor_u := false$;\\
                &   &  $\PBecomeNormal(u)$                                                       &       & $\update(u)$;\\
\\
$\rulePointerA(u)$ & : & $\PClean(u) \And  \PICorrect(u) \And$ & $\to$ & $\VPointer_u := \bot$; \\
                   &  & $\PRulePointer(u) \And \VPointer_u \neq \bot$ &  &  $\ComputeVariable(u)$;\\
\\
$\rulePointerB(u)$ & : & $\PClean(u) \And  \PICorrect(u) \And$ & $\to$ &   $\update(u)$;\\
                   &    & $\PRulePointer(u) \And \VPointer_u = \bot$ &  & \\
\\
$\ruleCanQuitB(u)$ & : & $\PClean(u) \And \PICorrect(u) \And$& $\to$ &  $\ComputeVariable(u)$;\\
                   &    & $\neg\PBecomeNormal(u) \And \neg\PRulePointer(u) \And$ & & if $\satisfaction(u) \leq 0$ then\\
                   &    & $(\VSatisfaction_u \neq \satisfaction(u) \Or \VCanQuit_u \neq \PQuitAlliance(u))$ &  &  \qquad  $\VPointer_u = \bot$;
 \end{tabular}  
\\[0.3cm]
\caption{Algorithm \DA, code for every process $u$}
\label{alg:DAvar}
\end{algorithm}

Recall that we consider any network where $\delta_u \geq max(f(u),
g(u))$, for every process $u$.  Moreover, we assume that the
network is identified, meaning that each process $u$ can be
distinguished using a unique constant identifier, here noted $id_u$. 
The formal code of \DA is given in Algorithm \ref{alg:DAvar}. 
Informally, each process $u$ maintains the following four variables. 
\begin{description}
\item[$\VColor_u$:] a Boolean variable, the output of \DA. Process $u$  belongs to the $(f,g)$-alliance if and only if $\VColor_u$. 
\item[$\VSatisfaction_u$:] a variable, whose domain is $\{-1,0,1\}$. $\VSatisfaction_u \leq 0$
  if and only if no $u$'s neighbor can quit the alliance.
\item[$\VCanQuit_u$:] a Boolean variable. $\neg \VCanQuit_u$ if $u$
  cannot quit the alliance (in particular, if $u$ is out of
  the alliance).
\item[$\VPointer_u$:] a pointer variable, whose domain is $\N[u] \cup
  \{ \bot \}$. Either $\VPointer_u = \bot$ or $\VPointer_u$ designates
  the member of its closed neighborhood of smallest identifier such
  that $\VCanQuit_u$.
\end{description}

In the following, we assume that the system is initially in the configuration $\gamma_{init}$ where every process $u$ has the following local state:
\begin{center}
$\VColor_u = true$,  $\VSatisfaction_u =1$, $\VCanQuit_u = true$, 
$\VPointer_u = \bot$, $\st_u$ = $C$.
\end{center}
In particular, this means that all processes are initially in the
alliance. Then, the idea of the algorithm is reduced the alliance
until obtaining a 1-minimal $(f,g)$-alliance.  A process $u$ leaves
the alliance by executing $\ruleColor(u)$. To leave the alliance, $u$
should have enough neighbors in the alliance ($\numberYes(u) \geq
f(u)$), {\em approve} itself, and have
a {\em full} approval from all neighbors. Process $v$ approves $u$ if
$\VPointer_v = u$. Moreover, the approval of $v$ is {\em full} if
$\VSatisfaction_v = 1$.  Notice that, the $\VPointer$ pointers ensure
that removals from the alliance are locally central: in the closed
neighborhood of any process, at most one process leaves the alliance
at each step.

To ensure the liveness of the algorithm, a process $u$ gives its
approval (by executing $\rulePointerB(u)$, maybe preceded by $\rulePointerA(u)$)
to the member of its closed neighborhood 
having the smallest identifier among the ones requiring an approval
({\em i.e.}, the processes satisfying $\VCanQuit$).

To ensure that $\satisfaction(v) \geq 0$ is a closed predicate, 
a
process $v$ gives its approval to another process $u$ only if
$\satisfaction(v) = 1$ and none of its neighbor can leave the alliance
({\em i.e.}, $\VPointer_u \notin \N(v)$). This latter condition ensures
that no neighbor of $v$ leaves the alliance simultaneously to a new
approval of $v$. It is mandatory since otherwise the cause for which
$v$ gives its approval may be immediately outdated.  Hence, any
approval switching is done either in one step when the process leaves
the alliance, or in two atomic steps where $\VPointer_v$ first takes the value $\bot$ (rule $\rulePointerA(u)$) and then points to the suitable process, 
(rule $\rulePointerB(u)$).

Finally, the rule $\ruleCanQuitB(u)$ refreshes the values of
$\VSatisfaction_u$, $\VPointer_u$, and $\VCanQuit_u$ after a neighbor
left the alliance or updated its $\VSatisfaction$ variable.

\paragraph{Properties of \DA.}

Below, we show some properties of Algorithm \DA that will be
used for showing both its correctness and the self-stabilization of
its composition with Algorithm \A.

First, by checking the rules of \DA, we can remark that each
time a process $u$ sets $\VSatisfaction_u$ to a value other than 1, it
also sets $\VPointer_u$ to $\bot$, in the same step. Hence, by
construction we have the following lemma.

\begin{lemma}
\label{lem:VSatisfaction}
$\VSatisfaction_u=1$ $\Or$ $\VPointer_u=\bot$ is closed by \DA,
for every process $u$.
\end{lemma}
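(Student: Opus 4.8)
The plan is to exploit the fact that both $\VSatisfaction_u$ and $\VPointer_u$ are variables local to $u$, so the predicate $\VSatisfaction_u = 1 \Or \VPointer_u = \bot$ can only be falsified by a move of $u$ itself. Thus it suffices to fix a step $\gamma \mapsto \gamma'$ of \DA and a process $u$ that moves in that step, and to check that $\VSatisfaction_u = 1 \Or \VPointer_u = \bot$ holds in $\gamma'$ after each of the four rules $u$ may execute; if $u$ does not move, the predicate is trivially preserved since none of its free variables changes.

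The core of the argument is to trace, inside each rule's action, the order in which $\VSatisfaction_u$ and $\VPointer_u$ are (re)assigned, keeping in mind that $\satisfaction(u) \in \{-1,0,1\}$ and that $\BestPointer(u)$ returns $\bot$ whenever $\VSatisfaction_u \leq 0$. First I would handle $\rulePointerA(u)$: its action sets $\VPointer_u := \bot$ and then calls $\ComputeVariable(u)$, which touches only $\VSatisfaction_u$ and $\VCanQuit_u$; hence $\VPointer_u = \bot$ in $\gamma'$ and we are done. For $\ruleColor(u)$ and $\rulePointerB(u)$, both end by executing $\update(u)$, which first runs $\ComputeVariable(u)$, assigning $\VSatisfaction_u := \satisfaction(u)$, and only then assigns $\VPointer_u := \BestPointer(u)$. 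Since $\BestPointer(u)$ reads this freshly-written value, if $\satisfaction(u) \leq 0$ it returns $\bot$ (so $\VPointer_u = \bot$), and otherwise $\satisfaction(u) = 1$, so $\VSatisfaction_u = 1$; either way the disjunction holds in $\gamma'$.

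For $\ruleCanQuitB(u)$, the action runs $\ComputeVariable(u)$, setting $\VSatisfaction_u := \satisfaction(u)$, and then explicitly sets $\VPointer_u := \bot$ precisely when $\satisfaction(u) \leq 0$. So if $\satisfaction(u) \leq 0$ we obtain $\VPointer_u = \bot$, and if $\satisfaction(u) = 1$ we obtain $\VSatisfaction_u = 1$; the predicate holds in $\gamma'$ in both cases.

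The verification carries no real difficulty beyond bookkeeping: the only subtlety, and the point I would state most carefully, is the intra-step ordering within $\update(u)$, namely that $\VSatisfaction_u$ is overwritten by $\ComputeVariable(u)$ before $\BestPointer(u)$ consults it, so that the test $\VSatisfaction_u \leq 0$ inside $\BestPointer(u)$ uses the new score $\satisfaction(u)$ rather than the stale value. Granting this, the claim is exactly the informal observation preceding the lemma: whenever a rule leaves $\VSatisfaction_u$ at a value other than $1$, the same rule simultaneously forces $\VPointer_u := \bot$.
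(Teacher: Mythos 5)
Your proof is correct and follows exactly the paper's argument, which is a one-line observation that by construction every rule of \DA that writes a value other than $1$ into $\VSatisfaction_u$ also sets $\VPointer_u$ to $\bot$ in the same step; your rule-by-rule verification (including the remark that only $u$'s own move can falsify a predicate over $u$'s own variables, and the intra-action ordering inside $\update(u)$) just makes that bookkeeping explicit.
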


Since Algorithm \DA does not modify any variable from Algorithm \A, we have

\begin{remark}\label{lem:PClean:clos}
$\PClean(u)$ is closed by \DA, for every process $u$.
\end{remark}

\begin{lemma}
\label{lem:satisfaction}
Let $u$ be any process. Let $\gamma$ be a configuration where
$\PClean(u) \wedge \PICorrect(u)$ holds.  Let $\gamma'$ be any configuration such that $\gamma \mapsto \gamma'$. 
In $\gamma'$, $\satisfaction(u) \geq 0$.
\end{lemma}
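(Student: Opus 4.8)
The plan is to control how the quantity $\satisfaction(u)$ can change across the step $\gamma \mapsto \gamma'$. The value $\satisfaction(u)$ is determined solely by $\VColor_u$ and by $\numberYes(u)$, the number of colored neighbors of $u$, and $\PICorrect(u)$ already guarantees $\satisfaction(u) \geq 0$ in $\gamma$; hence in $\gamma$ we have $\numberYes(u) \geq f(u)$ when $\neg\VColor_u$ and $\numberYes(u) \geq g(u)$ when $\VColor_u$. A useful preliminary is that under $\PClean(u) \wedge \PICorrect(u)$ the process $u$ is disabled for every rule of \A: the guards of $\ruleRB$, $\ruleRF$, and $\ruleC$ fail because $\status_u = C$ while no member of $\N[u]$ has status $RB$ or $RF$, and $\ruleR(u)$ is disabled because $\PRa(u)$, $\PRb(u)$, and $\neg\correct(u)$ all fail (the last since $\PICorrect(u)$ holds). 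Consequently $u$ may move only by a rule of \DA, and among those only $\ruleColor(u)$ alters $\VColor_u$.

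Next I would isolate the only mechanism that can lower $\numberYes(u)$: a neighbor $v$ leaving the alliance via $\ruleColor(v)$. A neighbor that instead fires an \A rule can only execute $\reset$, which sets its color to \emph{true}, so it never decreases $\numberYes(u)$ (recall \A and \DA are mutually exclusive). The key structural fact is that the pointer discipline makes removals locally central: if $v \in \N(u)$ executes $\ruleColor(v)$ then $\PBecomeNormal(v)$ holds, forcing $\VPointer_w = v$ for every $w \in \N[v]$, in particular $\VPointer_u = v$. Two distinct neighbors leaving at once would therefore require $\VPointer_u$ to equal both, which is impossible; so \emph{at most one} neighbor of $u$ leaves in the step, and $\numberYes(u)$ drops by at most one.

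I would then split on whether $u$ recolors. If $u$ executes $\ruleColor(u)$, then $\PBecomeNormal(u)$ yields $\VPointer_u = u$, so no neighbor $v$ can have $\VPointer_u = v$ and no neighbor leaves; moreover $\PQuitAlliance(u)$ gives $\numberYes(u) \geq f(u)$, which is preserved, so with $\neg\VColor_u$ in $\gamma'$ we get $\satisfaction(u) \geq 0$. Otherwise $\VColor_u$ is unchanged. If no neighbor leaves, $\numberYes(u)$ does not decrease and $\satisfaction(u) \geq 0$ persists. If exactly one neighbor $v$ leaves, then $\VPointer_u = v \neq \bot$ together with $\PICorrect(u)$ forces the first disjunct $\VSatisfaction_u = \satisfaction(u) = 1$: the disjunct $\VPointer_u = \bot$ is excluded, and the disjunct $\neg\VColor_{\VPointer_u}$ is impossible since $v$ leaving requires $\PQuitAlliance(v)$, hence $\VColor_v$. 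Thus $\satisfaction(u) = 1$ in $\gamma$, i.e. $\numberYes(u)$ exceeds the relevant threshold by at least one, exactly absorbing the unit drop and yielding $\satisfaction(u) \geq 0$ in $\gamma'$.

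The main obstacle is this last case: correctly reading the three-way disjunction in $\PICorrect(u)$ to conclude that whenever $u$'s pointer designates a neighbor about to leave, $u$ must have strict surplus ($\satisfaction(u) = 1$). This hinges on ruling out the $\neg\VColor_{\VPointer_u}$ alternative via the guard of $\ruleColor(v)$, combined with the locally-central argument bounding the drop by one unit; the remaining bookkeeping (integrality, to pass from a strict inequality to $\numberYes(u) \geq f(u)$ or $g(u)$ after the step) is routine.
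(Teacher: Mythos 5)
Your proof is correct and follows essentially the same route as the paper's: both hinge on the observation that $\PBecomeNormal$ makes removals locally central (at most one process of $\N[u]$ executes $\ruleColor$ per step), then split on whether $u$, a neighbor, or nobody recolors, and in the neighbor case use the three-way disjunction of $\PICorrect(u)$ together with $\VColor_{\VPointer_u}$ to force $\satisfaction(u)=1$ and absorb the unit drop. The only difference is presentational: you make explicit some facts the paper leaves implicit (that $u$'s rules of \A are disabled under $\PClean(u)\wedge\PICorrect(u)$, and that resets can only set colors to \emph{true}, so they never decrease $\numberYes(u)$), which is a harmless strengthening of the same argument.
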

\begin{proof}
  By definition of $\PBecomeNormal$, at most one process of $\N[u]$
  executes $\ruleColor$ in $\gamma \mapsto \gamma'$. If no process of
  $\N[u]$ executes $\ruleColor$, we are done. If 
  $\ruleColor(u)$ is executed, then $\numberYes(u) \geq f(u)$ in $\gamma$ (see
  $\PQuitAlliance(u)$) and so $\numberYes(u) \geq f(u)$ in $\gamma'$
  too and thus $\satisfaction(u) \geq 0$ in $\gamma'$.  Otherwise, let $v \in \N(u)$ such that $\ruleColor(v)$ is executed  in $\gamma \mapsto \gamma'$. In $\gamma$, $\VColor_v = true$ and
  $\PBecomeNormal(v)$ holds with, in particular, $\VPointer_u = v \neq
  \bot$, {\em i.e.}, $\VColor_{\VPointer_u}$ holds. Hence,
  $\PICorrect(u)$, $\VPointer_u \neq \bot$, and
  $\VColor_{\VPointer_u}$ imply $\satisfaction(u) = 1$ in $\gamma$,
  and so $\satisfaction(u) \geq 0$ in $\gamma'$.
\end{proof}

By definition of \DA, we have 
\begin{remark} Let $u$ be any process. Let $\gamma$ be a configuration where
$\VPointer_u=v$ with $v \neq u$.  Let $\gamma'$ be any configuration such that $\gamma \mapsto \gamma'$. 
In $\gamma'$, we have $\VPointer_u \in \{ v,  \bot \}$.
\end{remark}

\begin{lemma}\label{lem:PICorrect:clos:1}
Let $u$ be any process. Let $\gamma$ be a configuration where $\PClean(u) \wedge \PICorrect(u)$ holds. Let $\gamma'$ be any configuration such that $\gamma \mapsto \gamma'$ is a step  
where $v \in \N(u)$ executes $\ruleColor(v)$.
$\PICorrect(u)$ holds in $\gamma'$
\end{lemma}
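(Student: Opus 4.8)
The plan is to check the two conjuncts of $\PICorrect(u)$ in $\gamma'$ separately. The first, $\satisfaction(u) \geq 0$, costs nothing: it is precisely the conclusion of Lemma~\ref{lem:satisfaction} applied to the step $\gamma \mapsto \gamma'$ (whose hypotheses, $\PClean(u) \wedge \PICorrect(u)$ in $\gamma$, are exactly ours). All the work therefore concerns the bracketed disjunction, i.e.\ showing that in $\gamma'$ at least one of $(\VSatisfaction_u = \satisfaction(u) = 1)$, $\VPointer_u = \bot$, or $(\VPointer_u \neq \bot \wedge \VSatisfaction_u = 1 \wedge \neg\VColor_{\VPointer_u})$ holds.

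First I would harvest the facts forced in $\gamma$ by the hypothesis that $v \in \N(u)$ fires $\ruleColor(v)$. Its guard requires $\PBecomeNormal(v)$, whence $\forall w \in \N[v],\ \VPointer_w = v$ and, through $\PQuitAlliance(v)$, $\VColor_v$ in $\gamma$. As $u \in \N(v) \subseteq \N[v]$, this yields $\VPointer_u = v \neq \bot$ and $\VColor_{\VPointer_u}$ in $\gamma$. Feeding these into $\PICorrect(u)$, which holds in $\gamma$, eliminates its second disjunct (since $\VPointer_u \neq \bot$) and its third (since $\VColor_{\VPointer_u}$ holds), so the first disjunct must hold: $\VSatisfaction_u = \satisfaction(u) = 1$ in $\gamma$. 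Note moreover that $\VColor_v$ flips to $false$ during the step, so $\neg\VColor_v$ holds in $\gamma'$; hence the third disjunct will hold in $\gamma'$ as soon as $\VPointer_u$ is still $v$ there.

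Then I would split on the behaviour of $u$. Since $\VPointer_u = v \notin \{u,\bot\}$ in $\gamma$, rule $\ruleColor(u)$ is disabled (it needs $\VPointer_u = u$ via $\PBecomeNormal(u)$) and so is $\rulePointerB(u)$ (it needs $\VPointer_u = \bot$); thus if $u$ moves it runs $\rulePointerA(u)$ or $\ruleCanQuitB(u)$. If $u$ does not move, $\VPointer_u = v$ and $\VSatisfaction_u = 1$ are preserved and, with $\neg\VColor_v$ in $\gamma'$, the third disjunct holds. If $u$ runs $\rulePointerA(u)$, the action assigns $\VPointer_u := \bot$, so the second disjunct holds. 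If $u$ runs $\ruleCanQuitB(u)$, the macro $\ComputeVariable(u)$ sets $\VSatisfaction_u := \satisfaction(u)$, and the conditional ``if $\satisfaction(u) \leq 0$ then $\VPointer_u := \bot$'' is the only thing that could touch $\VPointer_u$.

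The single delicate point is this last case. Under composite atomicity the macro reads $\gamma$'s neighborhood, where $v$ is still in the alliance, so $\satisfaction(u)$ evaluates to $1$ (not to the post-step value $0$). Consequently $\VSatisfaction_u$ is set to $1$ and, because $\satisfaction(u) = 1 > 0$, the conditional is not taken, leaving $\VPointer_u = v$. Thus in $\gamma'$ we again have $\VPointer_u = v \neq \bot$, $\VSatisfaction_u = 1$, and $\neg\VColor_{\VPointer_u}$, i.e.\ the third disjunct. In every case the disjunction holds in $\gamma'$, which together with $\satisfaction(u) \geq 0$ gives $\PICorrect(u)$ in $\gamma'$.
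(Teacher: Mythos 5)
Your proof is correct and follows essentially the same route as the paper's: harvest $\VPointer_u = v$ and $\VSatisfaction_u = 1$ from the guard of $\ruleColor(v)$, use Lemma~\ref{lem:satisfaction} for the conjunct $\satisfaction(u) \geq 0$, and split on whether $u$ stays put, fires $\rulePointerA(u)$, or fires $\ruleCanQuitB(u)$ (having ruled out $\ruleColor(u)$ and $\rulePointerB(u)$). The only cosmetic difference is in the $\ruleCanQuitB(u)$ sub-case, where the paper invokes the closure property of Lemma~\ref{lem:VSatisfaction} to cover both possible outcomes, whereas you pin down the actual outcome by noting that $\satisfaction(u)$ is evaluated on $\gamma$, where $v$ is still in the alliance; both arguments are sound.
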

\begin{proof}
  Since $\ruleColor(v)$ is enabled in $\gamma$, we have $\VPointer_u =
  v$ and $\VSatisfaction_u = 1$. So, if $u$ does not move in $\gamma
  \mapsto \gamma'$, we have $\VPointer_u \neq \bot$, $\VSatisfaction_u
  = 1$, and $\neg \VColor_{\VPointer_u}$. Hence, $\PICorrect(u)$ holds
  in $\gamma'$ by Lemma \ref{lem:satisfaction}.

Otherwise, either $\rulePointerA(u)$ or $\ruleCanQuitB(u)$ is executed
in $\gamma \mapsto \gamma'$. In the former case, $\VPointer_u = \bot$
in $\gamma'$ and by Lemma \ref{lem:satisfaction}, $\PICorrect(u)$
holds in $\gamma'$. In the latter case, in $\gamma'$ either
$\VPointer_u = v \neq \bot$, $\VSatisfaction_u = 1$ (by Lemma
\ref{lem:VSatisfaction}), and $\neg \VColor_{\VPointer_u}$, or
$\VPointer_u = \bot$. In either case, $\PICorrect(u)$ holds in
$\gamma'$ by Lemma \ref{lem:satisfaction}, and we are done.
\end{proof}

\begin{lemma}\label{lem:PICorrect:clos:2}
Let $u$ be any process. Let $\gamma$ be a configuration where $\PClean(u) \wedge \PICorrect(u)$ holds. Let $\gamma'$ be any configuration such that $\gamma \mapsto \gamma'$ is a step  
where $u$ executes an action and none of its neighbor executes $\ruleColor$.
$\PICorrect(u)$ holds in $\gamma'$.
\end{lemma}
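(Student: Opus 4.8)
The plan is to reduce the two conjuncts of $\PICorrect(u)$ to properties of the single rule of \DA that $u$ executes. Note first that since $\PClean(u) \wedge \PICorrect(u)$ holds in $\gamma$, no rule of \A is enabled at $u$ (argued as in the proof of Theorem~\ref{theo:termA}), so the action $u$ performs is necessarily one of the four rules $\ruleColor(u)$, $\rulePointerA(u)$, $\rulePointerB(u)$, $\ruleCanQuitB(u)$. The first conjunct, $\satisfaction(u) \geq 0$ in $\gamma'$, comes for free: it is exactly the conclusion of Lemma~\ref{lem:satisfaction}, whose hypothesis is our assumption. Hence all the work lies in establishing the bracketed disjunction of $\PICorrect(u)$ in $\gamma'$.

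First I would record the invariant driving everything. Since, by hypothesis, no neighbor of $u$ executes $\ruleColor$, and only $\ruleColor$ flips a $\VColor$ value, every neighbor of $u$ keeps its color across $\gamma \mapsto \gamma'$; hence $\numberYes(u)$ is the same in $\gamma$ and in $\gamma'$. As $\satisfaction(u)$ depends only on $\numberYes(u)$ and $\VColor_u$, this guarantees that whenever $u$ runs $\ComputeVariable(u)$, the value stored in $\VSatisfaction_u$ coincides with $\satisfaction(u)$ evaluated in $\gamma'$; in particular $\VSatisfaction_u = \satisfaction(u)$ holds in $\gamma'$ after any rule calling $\ComputeVariable$.

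Then I would case split on the rule executed by $u$. If $u$ runs $\rulePointerA(u)$, the action sets $\VPointer_u := \bot$, so the disjunct $\VPointer_u = \bot$ holds. If $u$ runs $\rulePointerB(u)$ or $\ruleCanQuitB(u)$, the invariant gives $\VSatisfaction_u = \satisfaction(u) \geq 0$ in $\gamma'$: when $\satisfaction(u) = 1$ the disjunct $\VSatisfaction_u = \satisfaction(u) = 1$ holds, and when $\satisfaction(u) = 0$ the pointer is forced to $\bot$ (because $\BestPointer(u)$ returns $\bot$ once $\VSatisfaction_u \leq 0$ in the $\rulePointerB$ case, and by the explicit guarded assignment $\VPointer_u := \bot$ in the $\ruleCanQuitB$ case), so the disjunct $\VPointer_u = \bot$ holds. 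Finally, if $u$ runs $\ruleColor(u)$, its guard provides $\PBecomeNormal(u)$, hence $\PQuitAlliance(u)$, hence $\numberYes(u) \geq f(u)$ in $\gamma$; since $\VColor_u$ becomes false while $\numberYes(u)$ is unchanged, the ensuing $\update(u)$ again yields $\VSatisfaction_u = \satisfaction(u)$, and the same dichotomy on $\satisfaction(u) \in \{0,1\}$ selects the appropriate disjunct exactly as in the $\rulePointerB$ case.

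The delicate point, on which I would spend the most care, is the intra-action evaluation order: I must argue that the $\satisfaction(u)$ read inside $\ComputeVariable$ (and the $\VSatisfaction_u$ read inside $\BestPointer$) agree with the $\gamma'$-values, and that inside $\ruleColor(u)$ the write $\VColor_u := false$ is already seen by the subsequent score recomputation. This is precisely where the frozen-neighbor invariant ($\numberYes(u)$ constant) is indispensable, since it is what decouples the recomputed score from any simultaneous move of a neighbor; once it is in place, every case closes by direct inspection of the macros.
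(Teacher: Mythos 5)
Your proposal is correct and takes essentially the same route as the paper's (very terse) proof: the key step in both is that, since no neighbor executes $\ruleColor$, the colors around $u$ are frozen, so whichever rule $u$ executes recomputes $\VSatisfaction_u = \satisfaction(u)$ as evaluated in $\gamma'$, and Lemma~\ref{lem:satisfaction} supplies the conjunct $\satisfaction(u) \geq 0$. The only difference is presentational: where the paper invokes Lemma~\ref{lem:VSatisfaction} to obtain $\VSatisfaction_u = 1 \Or \VPointer_u = \bot$ in $\gamma'$, you re-derive that same disjunction by direct rule-by-rule inspection of the actions, which is just an inlining of that lemma's content.
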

\begin{proof}
  Since no neighbor of $u$ executes $\ruleColor$ in $\gamma \mapsto
  \gamma'$, we have $\VSatisfaction_u = \satisfaction(u)$ in $\gamma'$,
  and by Lemmas \ref{lem:VSatisfaction} and \ref{lem:satisfaction}, we
  are done.
\end{proof}

By Remark~\ref{lem:PClean:clos} and Lemmas \ref{lem:PICorrect:clos:1}-\ref{lem:PICorrect:clos:2}, follows.

\begin{corollary}\label{coro:normal}
  $\PClean(u) \wedge \PICorrect(u)$ is closed by \DA, for
  every process $u$.
\end{corollary}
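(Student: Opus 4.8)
The plan is to reduce the claim to the preservation of $\PICorrect(u)$ alone. Since Remark~\ref{lem:PClean:clos} already gives that $\PClean(u)$ is closed by \DA, for every process $u$, it suffices to fix an arbitrary step $\gamma \mapsto \gamma'$ of \DA with $\PClean(u) \wedge \PICorrect(u)$ holding in $\gamma$ and to show that $\PICorrect(u)$ holds in $\gamma'$; the conjunction then follows immediately. I would organize the argument as a case analysis on the step that is manifestly exhaustive over all steps of \DA, so that closure is established uniformly.

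I would dispatch the two substantive cases by direct appeal to the preceding lemmas. If some neighbor $v \in \N(u)$ executes $\ruleColor(v)$ in the step, then Lemma~\ref{lem:PICorrect:clos:1} applies verbatim and yields $\PICorrect(u)$ in $\gamma'$; here I would stress that this case already absorbs the possibility that $u$ itself moves, because the proof of that lemma treats both the ``$u$ moves'' and the ``$u$ does not move'' sub-situations (and, as used there, $\PBecomeNormal(v)$ forces $\VPointer_u = v$ in $\gamma$, so at most one neighbor can recolor). Otherwise no neighbor of $u$ executes $\ruleColor$; if in addition $u$ executes some rule of \DA, then Lemma~\ref{lem:PICorrect:clos:2} applies and again delivers $\PICorrect(u)$ in $\gamma'$.

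The only remaining case is the one not covered by either lemma, namely when neither $u$ moves nor any neighbor of $u$ executes $\ruleColor$. I would handle it by the observation that $\PICorrect(u)$ is determined solely by quantities that this step leaves untouched: $\VSatisfaction_u$ and $\VPointer_u$ are $u$'s own variables, hence fixed since $u$ does not move; $\satisfaction(u)$ depends only on $\VColor_u$ (fixed) and on $\numberYes(u)$, which counts the colored neighbors of $u$ and can change only through an execution of $\ruleColor$ (the sole rule of \DA that writes a $\VColor$ variable); and $\VColor_{\VPointer_u}$, when $\VPointer_u \neq \bot$, is either $\VColor_u$ (if $\VPointer_u = u$) or the color of a neighbor, all of which are fixed under the hypotheses of this case. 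Consequently the truth value of $\PICorrect(u)$ is literally unchanged, so it still holds in $\gamma'$.

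These three cases are mutually exclusive and jointly cover every step of \DA, which establishes that $\PICorrect(u)$, and hence $\PClean(u) \wedge \PICorrect(u)$, is closed by \DA, for every process $u$. I expect the argument to be routine once Lemmas~\ref{lem:PICorrect:clos:1} and~\ref{lem:PICorrect:clos:2} are in hand: the only points requiring mild care are verifying the \emph{exhaustiveness} of the partition---in particular recognizing that a step in which a neighbor recolors while $u$ moves simultaneously is already subsumed by Lemma~\ref{lem:PICorrect:clos:1}---and, in the trivial case, confirming that no quantity on which $\PICorrect(u)$ depends is altered.
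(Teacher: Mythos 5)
Your proof is correct and takes essentially the same route as the paper, whose proof is simply the citation of Remark~\ref{lem:PClean:clos} together with Lemmas~\ref{lem:PICorrect:clos:1} and~\ref{lem:PICorrect:clos:2}. Your only addition is to spell out the residual case (where $u$ does not move and no neighbor executes $\ruleColor$), which the paper leaves implicit as trivial because, exactly as you observe, no quantity on which $\PICorrect(u)$ depends can change in such a step.
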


Since $u$ is disabled in \DA if $\neg \PClean(u)$ holds.  $\neg \PClean(u)
\wedge \PICorrect(u)$ is also closed by \DA, for every process
$u$. Hence, follows.

\begin{corollary}\label{coro:normal2}
  $\PICorrect(u)$ is closed by \DA, for
  every process $u$.
\end{corollary}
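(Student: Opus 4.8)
The plan is to prove closure by splitting on the truth value of $\PClean(u)$ in the source configuration. The guiding observation is the tautology that, for a fixed process $u$, $\PICorrect(u)$ is equivalent to the disjunction $(\PClean(u) \wedge \PICorrect(u)) \vee (\neg\PClean(u) \wedge \PICorrect(u))$. I would show that each disjunct is individually closed by \DA; then, for any step $\gamma \mapsto \gamma'$ of \DA with $\PICorrect(u)$ true in $\gamma$, exactly one disjunct holds in $\gamma$, and the matching closure result delivers $\PICorrect(u)$ in $\gamma'$. The first disjunct is free: $\PClean(u) \wedge \PICorrect(u)$ is closed by \DA by Corollary~\ref{coro:normal}.

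The real content is the second disjunct, and this is the step I expect to be the main obstacle. I would use two facts about the guards of \DA. First, every rule of \DA carries $\PClean(u)$ in its guard, so when $\neg\PClean(u)$ holds $u$ is disabled and all of its \DA-variables $\VColor_u$, $\VSatisfaction_u$, $\VCanQuit_u$, $\VPointer_u$ are frozen across the step. Second, by Remark~\ref{lem:PClean:clos}, \DA never writes the status variables of \A; since $\PClean(u)$ depends only on the statuses inside $\N[u]$, $\neg\PClean(u)$ persists in $\gamma'$.

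It then remains to argue that $\PICorrect(u)$ itself survives, which is delicate precisely because $u$ is frozen and cannot react to its neighbors. The only ingredient of $\PICorrect(u)$ that a neighbor can disturb is a color, either through $\numberYes(u)$ (hence $\satisfaction(u)$) or through $\VColor_{\VPointer_u}$, and the only rule changing a color is $\ruleColor$. Here I would reuse the counting argument of Lemma~\ref{lem:satisfaction}: because $\PBecomeNormal(v)$ forces $\VPointer_w = v$ for every $w \in \N[v]$ and pointers are single-valued, at most one member of $\N[u]$ executes $\ruleColor$ in the step; and since $u$ is disabled, that member can only be a genuine neighbor $v$ with $\VPointer_u = v$ and $\VColor_v$ true in $\gamma$. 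I would note that this argument needs only $\PICorrect(u)$ in $\gamma$ (the $\PClean(u)$ hypothesis of Lemma~\ref{lem:satisfaction} served only to permit $\ruleColor(u)$, which is vacuous here). Since $\VColor_v$ holds in $\gamma$ and $\VPointer_u = v \neq \bot$, the third disjunct of $\PICorrect(u)$ is not the witness, so $\PICorrect(u)$ forces $\VSatisfaction_u = \satisfaction(u) = 1$ in $\gamma$. After $v$ leaves, $\numberYes(u)$ drops by exactly one, giving $\satisfaction(u) \geq 0$ in $\gamma'$, while the frozen values $\VPointer_u = v \neq \bot$ and $\VSatisfaction_u = 1$, together with the now-true $\neg\VColor_{\VPointer_u}$, make the third disjunct of $\PICorrect(u)$ hold in $\gamma'$. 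If instead no member of $\N[u]$ executes $\ruleColor$, then $\numberYes(u)$, $\VColor_u$, and $\VColor_{\VPointer_u}$ are unchanged and $u$'s variables are frozen, so $\PICorrect(u)$ is preserved verbatim. This establishes closure of the second disjunct and, with Corollary~\ref{coro:normal}, completes the case analysis.
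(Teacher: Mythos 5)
Your proof is correct and follows essentially the same route as the paper: the paper also splits on $\PClean(u)$, settles the clean case by Corollary~\ref{coro:normal}, and settles the unclean case by observing that $u$ is then disabled in \DA (so $\neg\PClean(u) \wedge \PICorrect(u)$ is closed). If anything, your treatment of the unclean case is more complete than the paper's one-line assertion, since you explicitly address the only genuine threat --- a neighbor $v$ whose own neighborhood is clean executing $\ruleColor(v)$ while $u$ is frozen --- by rerunning the counting argument of Lemma~\ref{lem:satisfaction} and noting that its $\PClean(u)$ hypothesis is not needed there.
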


\paragraph{Partial Correctness.}

As for Algorithm \DU, we focus on configurations of \DA satisfying
$\PICorrect(u) \wedge \PClean(u)$, every process $u$. Indeed, again,
$\gamma_{init}$ belongs to this class of configurations, and a
configuration of \DA $\circ$ \A is normal if and only if $\PClean(u)
\And \PICorrect(u)$ holds for every process $u$. So, the properties we
exhibit now will be, in particular, satisfied at the completion of \A.

Consider any terminal configuration of \DA where $\PClean(u)
\wedge \PICorrect(u)$ holds for every process $u$.  By checking the
code of Algorithm \DA, we can remark that every process $u$
satisfies \begin{center} $\VSatisfaction_u = \satisfaction(u)$ $\And$
  $\VCanQuit_u = \PQuitAlliance(u)$ $\And$ \\
  $\VPointer_u = \BestPointer(u)$ $\And$ $\neg\PBecomeNormal(u)$.
\end{center} Based on this, one can easily establish that in such a
terminal configuration, the set $A = \{u \in V\ |\ \VColor_u\}$ is a
1-minimal $(f,g)$ alliance of the network.
Indeed, for every process $u$, since $\PICorrect(u)$ holds, $\satisfaction(u) \geq 0$, which in turn implies that $A$ is an $(f,g)$ alliance.
Assume then, by the contradiction, that $A$ is not 1-minimal in some
terminal configuration of \DA where $\PClean(u) \wedge
\PICorrect(u)$ holds for every process $u$. Let $m$ be the process of
minimum identifier such that $A - \{m\}$ is an $(f,g)$
alliance. First, by definition, $\VColor_m \wedge \numberYes(m) \geq
f(m)$ holds. Then, $\forall u \in \N[m]$, $\VSatisfaction_u =
\satisfaction(u) = 1$ since $A - \{m\}$ is an $(f,g)$ alliance. So,
$\PQuitAlliance(m)$ holds, which implies that $\VCanQuit_m =
true$. Finally, by minimality of the $m$'s identifier, $\forall u \in
\N[m]$, $\VPointer_u = \BestPointer(u) = m$. Hence, $\PBecomeNormal(m)$
holds, which in turn implies that $\ruleColor(m)$ is enabled, a
contradiction. Hence, follows.

\begin{theorem}\label{theo:term:alliance}
  In any terminal configuration of \DA where
  $\PClean(u) \wedge \PICorrect(u)$ holds for every process $u$, the set
  $A = \{u \in V\ |\ \VColor_u\}$ is a 1-minimal $(f,g)$ alliance of the
  network.
\end{theorem}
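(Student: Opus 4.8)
The plan is to establish the two halves of the definition of a 1-minimal $(f,g)$-alliance in turn. The common starting point is the characterization of a terminal configuration recorded just before the statement: since no rule of \DA is enabled, every process $u$ satisfies $\VSatisfaction_u = \satisfaction(u)$, $\VCanQuit_u = \PQuitAlliance(u)$, $\VPointer_u = \BestPointer(u)$, and $\neg\PBecomeNormal(u)$. I would fix these four equalities first, as they let me freely replace the stored register values by their freshly recomputed counterparts everywhere in the argument.

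For the alliance property I would simply unfold $\PICorrect(u)$, which holds at every $u$ by hypothesis and in particular gives $\satisfaction(u) \geq 0$. Reading off the definition of $\satisfaction$, this means $\numberYes(u) \geq f(u)$ whenever $\neg\VColor_u$ (that is, $u \notin A$) and $\numberYes(u) \geq g(u)$ whenever $\VColor_u$ (that is, $u \in A$). Since $\numberYes(u)$ counts exactly the neighbors of $u$ lying in $A$, these are precisely the two inequalities defining an $(f,g)$-alliance, so $A$ is one.

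For 1-minimality I would argue by contradiction: suppose some member of $A$ is removable, and let $m$ be the process of smallest identifier for which $A \setminus \{m\}$ is still an $(f,g)$-alliance. The target is to show that $\ruleColor(m)$ is enabled, contradicting terminality; since $\PClean(m) \wedge \PICorrect(m)$ holds, it suffices to establish $\PBecomeNormal(m)$, i.e. $\PQuitAlliance(m) \wedge (\forall v \in \N[m], \VPointer_v = m)$. The conjunct $\PQuitAlliance(m)$ comes from removability: $m \in A$ gives $\VColor_m$; the fact that $m$, now outside $A \setminus \{m\}$, still meets its $f$-constraint gives $\numberYes(m) \geq f(m)$; and for each $v \in \N(m)$, deleting $m$ lowers $v$'s tally of in-alliance neighbors by one while $A \setminus \{m\}$ remains an alliance, which forces $\numberYes(v)$ to strictly exceed the threshold relevant to $v$, hence $\satisfaction(v) = 1$ and so $\VSatisfaction_v = 1$. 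These are exactly the three clauses of $\PQuitAlliance(m)$, and they also yield $\VCanQuit_m = true$.

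The remaining conjunct, $\VPointer_v = \BestPointer(v) = m$ for every $v \in \N[m]$, is where I expect the real work to lie. The tie-break is handled by the minimality of $id_m$: for any $v \in \N[m]$ the process $m$ lies in $\N[v]$ and satisfies $\VCanQuit_m$, and any competitor $w \in \N[v]$ with $\VCanQuit_w$ has $\PQuitAlliance(w)$, which in turn makes $A \setminus \{w\}$ an alliance, so $w$ is itself removable and therefore $id_w \geq id_m$; thus $\argmin$ returns $m$. The genuinely delicate point — and the one I would spend most effort on — is ruling out $\BestPointer(v) = \bot$ through the first guard, which demands $\VSatisfaction_v > 0$ for every $v \in \N[m]$. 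For the neighbors $v \in \N(m)$ this is the $\satisfaction(v) = 1$ already obtained, but for $v = m$ itself it amounts to proving $\satisfaction(m) = 1$, i.e. $\numberYes(m) > g(m)$, which must be extracted from removability together with the standing hypotheses rather than from the mere inequality $\numberYes(m) \geq f(m)$. Once $\satisfaction(m) = 1$ is secured, $\BestPointer(v) = m$ holds throughout $\N[m]$, $\PBecomeNormal(m)$ follows, $\ruleColor(m)$ is enabled, and the contradiction closes the proof.
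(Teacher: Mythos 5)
Your proposal follows exactly the paper's skeleton: extract from terminality the four identities $\VSatisfaction_u = \satisfaction(u)$, $\VCanQuit_u = \PQuitAlliance(u)$, $\VPointer_u = \BestPointer(u)$, $\neg\PBecomeNormal(u)$; get the alliance property from $\satisfaction(u) \geq 0$; then contradict terminality via the minimum-id removable process $m$. Your handling of the neighbors (deleting $m$ forces $\satisfaction(v) = 1$ for all $v \in \N(m)$) and of the tie-break (any $w$ with $\VCanQuit_w$ satisfies $\PQuitAlliance(w)$, hence is itself removable, hence $id_w \geq id_m$) is correct, and in fact more carefully argued than in the paper. The genuine gap is the one step you explicitly defer: you never establish $\satisfaction(m) = 1$, i.e.\ $\numberYes(m) > g(m)$, which is needed so that the first guard of $\BestPointer(m)$ does not return $\bot$. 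Removability of $m$ constrains $m$ only through $f$ (it yields $\numberYes(m) \geq f(m)$), and $\PICorrect(m)$ only yields $\numberYes(m) \geq g(m)$; neither gives the strict inequality, and no argument can, because the claim is false.

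Concretely: take the triangle $K_3$ on processes $m, v_1, v_2$ with $f \equiv 0$, $g(m) = 2$, $g(v_1) = g(v_2) = 1$ (so $\delta_u \geq \max(f(u),g(u))$ holds everywhere). Consider the configuration in which every process is in the alliance and has status $C$, with $\VSatisfaction_m = 0$, $\VSatisfaction_{v_1} = \VSatisfaction_{v_2} = 1$, $\VCanQuit_m = true$, $\VCanQuit_{v_1} = \VCanQuit_{v_2} = false$, $\VPointer_m = \bot$, and $\VPointer_{v_1} = \VPointer_{v_2} = m$. One checks that every rule of \DA is disabled at every process and that $\PClean(u) \And \PICorrect(u)$ holds for all $u$ (this configuration is even reachable from $\gamma_{init}$); yet $A = \{m, v_1, v_2\}$ is not 1-minimal, since $A \setminus \{m\}$ is still an $(f,g)$-alliance. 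The deadlock is exactly the one your flagged step would have to exclude: $\PQuitAlliance(m)$ holds, but $m$ can never point to itself because $\BestPointer(m)$ requires $\VSatisfaction_m > 0$. You should also know that the paper's own proof breaks at precisely this point: it asserts that $\VSatisfaction_u = \satisfaction(u) = 1$ for all $u \in \N[m]$ ``since $A - \{m\}$ is an $(f,g)$ alliance,'' which is valid for $u \in \N(m)$ but a non sequitur for $u = m$. So your instinct about where the real work lies was exactly right; the work cannot be completed, and the statement as written is incorrect.
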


Consider now any execution $e$ of \DA starting from
$\gamma_{init}$. In $\gamma_{init}$, we have $\PClean(u) \wedge
\PICorrect(u)$ for every process $u$. Then, $\PClean(u) \wedge
\PICorrect(u)$, for every process $u$, is invariant in $e$, by
Corollary~\ref{coro:normal}. Hence, we have the following corollary.

\begin{corollary}\label{coro:term:alliance}
  Let $e$ be any execution  of \DA starting from $\gamma_{init}$. If $e$
  terminates, then the set $A = \{u \in V\ |\ \VColor_u\}$ is a 1-minimal
  $(f,g)$ alliance of the network in the terminal configuration of $e$.
\end{corollary}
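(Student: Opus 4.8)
The plan is to combine the closure result of Corollary~\ref{coro:normal} with the partial-correctness result of Theorem~\ref{theo:term:alliance}, treating $\gamma_{init}$ as the base point of an induction on the configurations of $e$. The whole argument is a composition of two already-established facts, so the proof is short; the only place where any genuine (if routine) verification is needed is the base case.

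First I would check that $\gamma_{init}$ already satisfies the invariant $\PClean(u) \wedge \PICorrect(u)$ for every process $u$. The $\PClean$ part is immediate: in $\gamma_{init}$ every process has $\st_u = C$, so $\forall v \in \N[u], \st_v = C$ holds trivially. For $\PICorrect$, note that in $\gamma_{init}$ every process is in the alliance ($\VColor_u = true$), hence $\numberYes(u) = \delta_u \geq \max(f(u),g(u)) \geq g(u)$ by the standing degree assumption, which gives $\satisfaction(u) \geq 0$; together with $\VPointer_u = \bot$ this makes the disjunct ``$\VPointer_u = \bot$'' of $\PICorrect(u)$ true, so $\PICorrect(u)$ holds in $\gamma_{init}$.

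Next I would propagate this invariant along the whole execution. Since $\PClean(u) \wedge \PICorrect(u)$ is closed by \DA for every process $u$ (Corollary~\ref{coro:normal}), a straightforward induction on the index of configurations in $e$ shows that $\PClean(u) \wedge \PICorrect(u)$ holds in every configuration of $e$, for every process $u$. In particular, if $e$ terminates, then its terminal configuration $\gamma_{term}$ satisfies $\PClean(u) \wedge \PICorrect(u)$ for every $u$.

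Finally, I would apply Theorem~\ref{theo:term:alliance} to $\gamma_{term}$: since $\gamma_{term}$ is a terminal configuration of \DA in which $\PClean(u) \wedge \PICorrect(u)$ holds for every process $u$, the set $A = \{u \in V \mid \VColor_u\}$ is a 1-minimal $(f,g)$ alliance of the network, which is exactly the claim. I do not expect any real obstacle here: all the substantive work lies in Theorem~\ref{theo:term:alliance} and Corollary~\ref{coro:normal}, which are already proven; the present corollary merely records that $\gamma_{init}$ lies in the closed set on which that theorem's hypothesis holds, and the only point requiring a moment's attention is confirming $\PICorrect$ in $\gamma_{init}$ through the degree assumption, as done in the base case above.
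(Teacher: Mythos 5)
Your proposal is correct and follows exactly the paper's own argument: check that $\gamma_{init}$ satisfies $\PClean(u) \wedge \PICorrect(u)$ for every process $u$, propagate this invariant along $e$ via the closure result (Corollary~\ref{coro:normal}), and apply Theorem~\ref{theo:term:alliance} to the terminal configuration. The only difference is that you spell out the verification of $\PICorrect(u)$ in $\gamma_{init}$ (via $\numberYes(u) = \delta_u \geq g(u)$ and the disjunct $\VPointer_u = \bot$), which the paper asserts without detail; your check is accurate.
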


\paragraph{Termination.}

We now show that any execution of \DA (starting from any
arbitrary configuration) eventually terminates.
Let $v$ be any process. Let $e$ be any execution of \DA.  First, $\ruleColor(v)$ switches $\VColor_v$
from true to false and no rule of \DA sets $\VColor_v$ from
false to true. So,
\begin{itemize}
\item[(1)] $\ruleColor(v)$ is executed at most once in $e$.
\end{itemize}
Moreover, this implies that
\begin{itemize}
\item[(2)]
 the value of the macro 
$\numberYes(v)$ is monotonically
non-increasing in $e$.
\end{itemize}
If  $\satisfaction(u) < 0$ holds for some process $u$ in some configuration of $e$, then $\PICorrect(u)$ does not hold and so $u$ is disabled. Moreover, by (2), $u$ is disabled forever in $e$. Assume, otherwise, that  $\satisfaction(u) \geq
0$. Then, $\satisfaction(u)$ may increase at most once in $e$:
when $\ruleColor(u)$ is executed while $\numberYes(u) > f(u)$.
  So,
\begin{itemize}
\item[(3)] Every process $u$ updates the value of $\VSatisfaction_u$
  at most $4$ times in $e$.
\end{itemize}
Hence, overall (by (1)-(3)), 
the value of $\PQuitAlliance(v)$ changes
at most $4\delta_v+2$ in $e$, and thus
\begin{itemize}
\item[(4)]
$v$ updates the value of $\VCanQuit_v$ at most $4\delta_v+3$
in $e$.
\end{itemize}
By (3) and (4) 
\begin{itemize}
\item[(5)]
$\ruleCanQuitB(v)$ is executed at most
most $4\delta_v+7$ times in $e$.
\end{itemize}
The value of $\BestPointer(v)$ may change only when a process $u$ in the
closed neighborhood of $v$ changes the value of its variable
$\VCanQuit_u$ or when $v$ updates $\VSatisfaction_v$. 
So, by (3) and (4),
 the value of
$\BestPointer(v)$ changes at most $\delta_v.(4\Delta+3)+4\delta_v+7$ times.
So, 
\begin{itemize}
\item[(6)]
$v$ executes $\rulePointerA$ and $\rulePointerB$ at most
$\delta_v.(4\Delta+3)+4\delta_v+8$ times each in $e$.
\end{itemize}

Overall (by (1), (5), and (6)), follows. 
\begin{lemma}\label{lem:move:alliance}
A process $v$ executes at most $8\delta_v\Delta+18\delta_v+24$ moves in an execution of \DA.
\end{lemma}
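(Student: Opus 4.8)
The plan is to prove Lemma~\ref{lem:move:alliance} by summing the bounds on the number of executions of each rule of \DA, which have already been derived in the bulleted facts (1)--(6) immediately preceding the statement. The key observation is that each of the four rules $\ruleColor$, $\rulePointerA$, $\rulePointerB$, and $\ruleCanQuitB$ has its own upper bound, so the total number of moves of $v$ is simply the sum of these four bounds, and the main work is the bookkeeping of adding them together and confirming that the resulting expression is dominated by $8\delta_v\Delta+18\delta_v+24$.

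First I would recall the four relevant facts. By~(1), $\ruleColor(v)$ is executed at most once. By~(5), $\ruleCanQuitB(v)$ is executed at most $4\delta_v+7$ times. By~(6), each of $\rulePointerA(v)$ and $\rulePointerB(v)$ is executed at most $\delta_v(4\Delta+3)+4\delta_v+8$ times. Since every move of $v$ in \DA is one of these four rules, the total number of moves is at most
$$1 + (4\delta_v+7) + 2\bigl(\delta_v(4\Delta+3)+4\delta_v+8\bigr).$$

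Then I would expand and collect terms: the two $\rulePointerA/\rulePointerB$ contributions give $2\delta_v(4\Delta+3)+8\delta_v+16 = 8\delta_v\Delta+6\delta_v+8\delta_v+16 = 8\delta_v\Delta+14\delta_v+16$, and adding the $\ruleColor$ and $\ruleCanQuitB$ contributions yields $8\delta_v\Delta+14\delta_v+16+1+4\delta_v+7 = 8\delta_v\Delta+18\delta_v+24$. This matches the claimed bound exactly, so the lemma follows.

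The only genuine subtlety — and hence the main obstacle, though it is a mild one — is that each of the counting facts~(1)--(6) must indeed account for \emph{every} move attributable to that rule, with no double counting and no move of \DA falling outside the four enumerated rules. Since Algorithm~\DA has exactly these four rules and they are mutually exclusive at a given process in a single step, no move is counted twice and no move is omitted; the bounds in~(1),~(5), and~(6) are upper bounds valid over an arbitrary execution $e$ starting from any configuration. Therefore the summation is justified and the arithmetic above completes the proof.
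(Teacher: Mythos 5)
Your proof is correct and is exactly the paper's argument: the paper also obtains the lemma by summing the bounds (1), (5), and (6) for the rules $\ruleColor$, $\ruleCanQuitB$, $\rulePointerA$, and $\rulePointerB$, and your arithmetic reproduces the stated bound $8\delta_v\Delta+18\delta_v+24$ precisely. Your added remark that every move of $v$ falls under exactly one of the four rules is a sound (if implicit in the paper) justification of the summation.
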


\begin{corollary}\label{coro:move:alliance} Any execution of \DA contains at most $16.\Delta.m+36.m+24.n$ moves,
  {\em i.e.}, $O(\Delta.m)$ moves.
\end{corollary}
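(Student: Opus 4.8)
The plan is to obtain the global bound by simply summing the per-process bound of Lemma~\ref{lem:move:alliance} over all processes. That lemma states that each process $v$ performs at most $8\delta_v\Delta + 18\delta_v + 24$ moves in any execution of \DA, so the total number of moves is at most $\sum_{v \in V} (8\delta_v\Delta + 18\delta_v + 24)$.

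First I would separate this sum into its three pieces: $8\Delta \sum_{v \in V}\delta_v$, $18\sum_{v \in V}\delta_v$, and the constant term $\sum_{v \in V} 24 = 24n$. The key arithmetic fact needed is the handshake lemma, namely that $\sum_{v \in V}\delta_v = 2m$, since each edge contributes exactly $1$ to the degree of each of its two endpoints. Substituting this identity into the first two pieces yields $8\Delta \cdot 2m = 16\Delta m$ and $18 \cdot 2m = 36m$ respectively.

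Adding the three contributions gives $16\Delta m + 36m + 24n$, which is exactly the claimed bound; the asymptotic estimate $O(\Delta m)$ then follows since $n \leq m + 1$ in any connected graph (and, more crudely, $n = O(m)$ whenever the graph has no isolated vertices, which holds here as the network is connected). There is no genuine obstacle in this argument: it is a one-line summation followed by an appeal to the handshake identity, so the only thing to be careful about is correctly invoking $\sum_{v}\delta_v = 2m$ rather than $m$. I would state the proof in a single short paragraph, cite Lemma~\ref{lem:move:alliance} for the per-process bound, and note the degree-sum identity explicitly so the reader can verify the constants.
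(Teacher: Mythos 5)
Your proposal is correct and matches the paper's approach: the paper states this corollary as an immediate consequence of Lemma~\ref{lem:move:alliance}, obtained exactly as you do by summing the per-process bound $8\delta_v\Delta+18\delta_v+24$ over all processes and applying the identity $\sum_{v\in V}\delta_v = 2m$, yielding $16\Delta m + 36m + 24n$. Your additional remark that $n \leq m+1$ in a connected network, justifying the $O(\Delta.m)$ restatement, is a harmless and valid elaboration of a step the paper leaves implicit.
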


By Corollaries~\ref{coro:term:alliance} and~\ref{coro:move:alliance} , we can conclude with the following
theorem.

\begin{theorem} \DA is distributed (non self-stabilizing)
  1-minimal $(f,g)$-alliance algorithm which terminates in at most
  $O(\Delta.m)$ moves.
\end{theorem}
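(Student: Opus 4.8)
The plan is to assemble the theorem directly from the two corollaries already proved, since all the substantive work is behind us. First I would invoke Corollary~\ref{coro:move:alliance}, which states that every execution of \DA performs at most $16.\Delta.m+36.m+24.n$ moves. In the composite-atomicity model every step activates at least one enabled process, and an activated process executes a rule, i.e., makes at least one move; hence a finite bound on the total number of moves forces the execution to be finite. Consequently \DA terminates from every configuration, and in particular from the predefined initial configuration $\gamma_{init}$.

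Next, with termination secured, I would apply Corollary~\ref{coro:term:alliance} to any execution $e$ of \DA that starts in $\gamma_{init}$. Since $e$ terminates, that corollary guarantees that in the terminal configuration of $e$ the set $A = \{u \in V\ |\ \VColor_u\}$ is a $1$-minimal $(f,g)$-alliance of the network. This is exactly the correctness property required of a (non self-stabilizing) $1$-minimal $(f,g)$-alliance algorithm: launched from its designated initial state, it halts and its output color classes describe a valid solution.

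Finally, for the complexity claim, I would observe that the bound $16.\Delta.m+36.m+24.n$ is $O(\Delta.m)$: because the network $G$ is connected we have $n-1 \le m$, so the $24.n$ term is absorbed into $O(m) \subseteq O(\Delta.m)$, and the remaining two terms are plainly $O(\Delta.m)$ and $O(m)$. Combining the three observations yields the statement.

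I do not expect a genuine obstacle here, as the two corollaries carry the entire burden; the only point requiring a word of care is the implicit step that a finite move count entails a finite (terminating) execution, which rests on the model's guarantee that each step of the daemon produces at least one move. The correctness of the terminal configuration and the move accounting themselves were the hard parts, and they are already discharged by Corollary~\ref{coro:term:alliance} and Corollary~\ref{coro:move:alliance}, respectively.
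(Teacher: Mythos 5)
Your proposal is correct and matches the paper's own argument exactly: the paper derives this theorem in one line by combining Corollary~\ref{coro:term:alliance} (the terminal configuration of any terminating execution from $\gamma_{init}$ yields a 1-minimal $(f,g)$-alliance) with Corollary~\ref{coro:move:alliance} (every execution has at most $16.\Delta.m+36.m+24.n$ moves, hence terminates). Your added remarks---that a finite move bound forces termination because each step contains at least one move, and that the $24.n$ term is absorbed since $n-1 \le m$ in a connected graph---are exactly the details the paper leaves implicit.
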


\paragraph{Round Complexity.}

We already know that in any execution of \DA, each process
executes $\ruleColor$ at most once. So, along any execution there are
at most $n$ steps containing the execution of some 
$\ruleColor$. We now say that a step is {\em $Color$-restricted}, if
no rule $\ruleColor$ is executed during that step. Similarly, we say
that a round is {\em $Color$-restricted} if it only consists of
$Color$-restricted steps.  In the sequel, we show that any execution that 
starts from a configuration where $\PClean(u) \wedge \PICorrect(u)$ holds for every process $u$
contains at most $4$ consecutive $Color$-restricted rounds. Hence, the
number of rounds in any execution starting from $\gamma_{init}$ is bounded by $5n+4$.  To that goal,
we first specialize the notion of closure.  A predicate $P$ over
configurations of \DA is {\em $Color$-restricted closed} if for
every $Color$-restricted step $\gamma \mapsto \gamma'$, $P(\gamma)
\Rightarrow P(\gamma')$.

We then consider the following predicates over configurations of \DA.
\begin{itemize}
  \item $\mathcal{P}_5$ is true if and only if every process $u$
    satisfies $\PClean(u) \wedge \PICorrect(u)$.
  \item $\mathcal{P}_6$ is true if and only if  $\mathcal{P}_5$ holds and every process $u$
    satisfies $\VSatisfaction_u = \satisfaction(u)$.
  \item
    $\mathcal{P}_7$ is true if and only if $\mathcal{P}_6$ holds and
    every process $u$ satisfies $\VCanQuit_u = \PQuitAlliance(u)$.
  \item $\mathcal{P}_8$ is true if and only if $\mathcal{P}_7$ holds
    and every process $u$ satisfies $\VPointer_u \in
    \{\BestPointer(u), \bot\}$.
   \item $\mathcal{P}_9$ is true if and only if $\mathcal{P}_8$ holds
     and every process $u$ satisfies $\VPointer_u = \BestPointer(u)$.
\end{itemize}

\begin{lemma}
  \label{lem:alliance-round1}
  $\mathcal{P}_6$ is $Color$-restricted closed. Moreover, after one
  $Color$-restricted round from any configuration satisfying  $\mathcal{P}_5$, a configuration
  satisfying $\mathcal{P}_6$ is reached.
\end{lemma}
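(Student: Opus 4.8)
The plan is to exploit two facts that make the whole argument go through. First, in a $Color$-restricted step no process executes $\ruleColor$, so no $\VColor$ value changes; since $\satisfaction(u)$ depends only on $\VColor_u$ and on $\numberYes(u) = |\{w \in \N(u) \mid \VColor_w\}|$, the value $\satisfaction(u)$ is \emph{invariant} along any $Color$-restricted step, for every process $u$. Second, $\mathcal{P}_5$ (i.e., $\PClean(u) \wedge \PICorrect(u)$ for all $u$) is closed by \DA by Corollary~\ref{coro:normal}, hence is preserved by every step, in particular by $Color$-restricted ones. Throughout the argument I fix, for each $u$, the constant value $s_u := \satisfaction(u)$ that $\satisfaction(u)$ takes during the round.

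For the closure part, I would show that the single-process predicate $\VSatisfaction_u = \satisfaction(u)$ is $Color$-restricted closed, whence $\mathcal{P}_6$ (which is $\mathcal{P}_5$ together with this predicate for all $u$) is too. Take a $Color$-restricted step $\gamma \mapsto \gamma'$ with $\VSatisfaction_u = \satisfaction(u)$ in $\gamma$; since $\satisfaction(u)$ is unchanged it equals $s_u$ in $\gamma'$. If $u$ does not move, $\VSatisfaction_u$ is unchanged and equal to $s_u$. If $u$ moves, then since the step is $Color$-restricted it executes one of $\rulePointerA$, $\rulePointerB$, or $\ruleCanQuitB$, each of which invokes $\ComputeVariable(u)$ and thus assigns $\VSatisfaction_u := \satisfaction(u) = s_u$. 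In all cases $\VSatisfaction_u = \satisfaction(u)$ holds in $\gamma'$, and with the closure of $\mathcal{P}_5$ this yields the $Color$-restricted closure of $\mathcal{P}_6$.

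The key lemma for convergence is the enabledness claim: \emph{if $\mathcal{P}_5$ holds and $\VSatisfaction_u \neq \satisfaction(u)$, then $u$ is enabled in \DA}. I would prove it by case analysis, using that $\PClean(u) \wedge \PICorrect(u)$ is granted by $\mathcal{P}_5$. If $\PBecomeNormal(u)$ holds, then $\ruleColor(u)$ is enabled. Otherwise, if $\VPointer_u \neq \BestPointer(u)$ then $\PRulePointer(u)$ holds and exactly one of $\rulePointerA(u)$ (when $\VPointer_u \neq \bot$) or $\rulePointerB(u)$ (when $\VPointer_u = \bot$) is enabled. In the remaining case $\neg\PBecomeNormal(u) \wedge \VPointer_u = \BestPointer(u)$, the guard of $\ruleCanQuitB(u)$ is satisfied precisely because its disjunct $\VSatisfaction_u \neq \satisfaction(u)$ holds. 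So $u$ is enabled in every case.

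Finally, I would assemble the convergence over one $Color$-restricted round starting from $\gamma_0 \models \mathcal{P}_5$. A process correct in $\gamma_0$ (i.e., $\VSatisfaction_u = \satisfaction(u)$) stays correct by the single-process closure just established. A process $u$ incorrect in $\gamma_0$ is enabled in $\gamma_0$ by the claim; I must show it ends the round correct. The main obstacle is the neutralization alternative in the definition of a round: I rule it out by observing that as long as $u$ has not moved, $\VSatisfaction_u$ is unchanged while $s_u$ is constant, so $\VSatisfaction_u \neq \satisfaction(u)$ persists, and since $\mathcal{P}_5$ persists the claim keeps $u$ enabled at every intermediate configuration; hence $u$ can never become disabled without moving, so it cannot be neutralized. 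By the definition of a round it must therefore execute a rule, which (being $Color$-restricted) is $\rulePointerA$, $\rulePointerB$, or $\ruleCanQuitB$, each assigning $\VSatisfaction_u := s_u$; closure then keeps it correct for the rest of the round. Thus every process satisfies $\VSatisfaction_u = \satisfaction(u)$ at the end of the round, and since $\mathcal{P}_5$ still holds, $\mathcal{P}_6$ is reached.
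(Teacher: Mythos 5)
Your proof is correct and follows essentially the same route as the paper's: the invariance of $\satisfaction(u)$ under $Color$-restricted steps gives closure, and the enabledness of any process with $\VSatisfaction_u \neq \satisfaction(u)$ under $\mathcal{P}_5$ gives convergence within one round. You in fact spell out two points the paper leaves implicit, namely the case analysis (over $\ruleColor$, $\rulePointerA$/$\rulePointerB$, $\ruleCanQuitB$) showing such a process is enabled, and the observation that it remains enabled until it moves and so cannot be neutralized.
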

\begin{proof}
  Let $u$ be a process.  The value of $\satisfaction(u)$ stays
  unchanged during a $Color$-restricted step. So, the predicate
  $\VSatisfaction_u = \satisfaction(u)$ is $Color$-restricted closed,
  and so $\mathcal{P}_6$ is.  Let $\gamma$ be a configuration
  satisfying $\mathcal{P}_5$. Recall that $\mathcal{P}_5$ is closed,
  by Corollary~\ref{coro:normal}. So, $\PClean(u) \wedge
  \PICorrect(u)$ holds forever from $\gamma$.  If
  $\VSatisfaction_u \neq \satisfaction(u)$ in $\gamma$, then $u$ is
  enabled in \DA. Now, if $u$ moves in a $Color$-restricted step, then
  $\VSatisfaction_u = \satisfaction(u)$ in the reached
  configuration. Hence, $\VSatisfaction_u = \satisfaction(u)$ holds,
  for every process $u$, within at most one round from $\gamma$, and
  we are done.
\end{proof}

\begin{lemma}
  \label{lem:alliance-round2}
  $\mathcal{P}_7$ is $Color$-restricted closed. Moreover, after one
  $Color$-restricted round from any configuration satisfying
  $\mathcal{P}_6$, a configuration satisfying $\mathcal{P}_7$ is
  reached.
\end{lemma}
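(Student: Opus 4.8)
The plan is to follow the same two-part template as Lemma~\ref{lem:alliance-round1}, but now tracking the variable $\VCanQuit_u$ instead of $\VSatisfaction_u$. The guiding observation, which I would establish first and reuse throughout, is that along a $Color$-restricted step issued from a configuration satisfying $\mathcal{P}_6$, the predicate $\PQuitAlliance(u)$ has a \emph{fixed} truth value: its ingredients $\VColor_u$ and $\numberYes(u)$ cannot change since no $\ruleColor$ is executed, and each $\VSatisfaction_v$ with $v \in \N(u)$ equals the stable quantity $\satisfaction(v)$ because $\mathcal{P}_6$ is $Color$-restricted closed (Lemma~\ref{lem:alliance-round1}).

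For the closure of $\mathcal{P}_7$ I would argue process by process. Since $\mathcal{P}_6$ is $Color$-restricted closed and $\mathcal{P}_7$ only adds the conjunct $\VCanQuit_u = \PQuitAlliance(u)$, it suffices to preserve this equality. If $u$ does not move, both sides are unchanged (invariance of $\PQuitAlliance(u)$), so the equality survives. If $u$ moves, then in a $Color$-restricted step it executes one of $\rulePointerA$, $\rulePointerB$, $\ruleCanQuitB$, each of which calls $\ComputeVariable(u)$ and thereby assigns $\VCanQuit_u := \PQuitAlliance(u)$ evaluated in the current configuration; by invariance this is also the correct value in the reached configuration.

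For convergence within one $Color$-restricted round, fix $u$ with $\VCanQuit_u \neq \PQuitAlliance(u)$ in the starting configuration. I would first check $u$ is enabled, and then show $u$ cannot leave the round without executing a rule calling $\ComputeVariable(u)$, which immediately restores $\VCanQuit_u = \PQuitAlliance(u)$ for the rest of the round. When $\neg\PBecomeNormal(u)$ holds, $u$ is enabled by $\ruleCanQuitB(u)$ or, if $\PRulePointer(u)$ holds, by a pointer rule; I would verify that $u$ stays enabled through one of these rules until it moves, using the invariance of $\PQuitAlliance(u)$ and the fact that, while the defect persists, $\PBecomeNormal(u)$ cannot turn true. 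Since a round cannot terminate before the initially enabled process $u$ either moves or is neutralized, and $u$ is never neutralized (it remains enabled until it moves via a non-$\ruleColor$ rule), $u$ is corrected before the round ends.

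The main obstacle is the remaining case $\PBecomeNormal(u)$, where $\ruleColor(u)$ is the only enabled rule of $u$ and is forbidden in a $Color$-restricted round. Here $\PQuitAlliance(u)$ holds, so the defect forces $\VCanQuit_u = false$. The key is that no neighbor $v \in \N(u)$ can have $\BestPointer(v) = u$ while $\VCanQuit_u = false$, whereas each such $v$ has $\VPointer_v = u \neq \BestPointer(v)$ and $\neg\PBecomeNormal(v)$ (because $\VPointer_u = u \neq v$), hence is enabled by $\rulePointerA(v)$; as soon as one neighbor resets its pointer away from $u$, $\PBecomeNormal(u)$ becomes false. At that point $u$ is no longer stuck on $\ruleColor$ but is enabled by $\rulePointerA(u)$ and, crucially, is not neutralized since it stays enabled. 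As neighbors cannot point back to $u$ until $\VCanQuit_u = true$, $u$ remains enabled by $\rulePointerA(u)$ until it finally moves; that move calls $\ComputeVariable(u)$ and repairs the defect. Because $u$ is enabled in the first configuration, the round cannot close until this move occurs, so $\mathcal{P}_7$ holds at the end of the round.
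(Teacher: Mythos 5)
Your proof is correct, and it is substantially more careful than the paper's own proof, which shares your skeleton but omits the case you rightly call the main obstacle. Both arguments rest on the same two pillars: (i) in a $Color$-restricted step from a configuration satisfying $\mathcal{P}_6$, the truth value of $\PQuitAlliance(u)$ is frozen (no $\ruleColor$ move means $\VColor$ and $\numberYes$ are fixed, and the $\VSatisfaction$ values are pinned to the stable $\satisfaction$ values), and (ii) every non-$\ruleColor$ rule calls $\ComputeVariable(u)$, so any move of a defective $u$ re-establishes $\VCanQuit_u = \PQuitAlliance(u)$, which then persists by closure. The paper's convergence argument stops at ``$u$ is enabled in $\gamma_i$ and if $u$ moves \ldots{} $\VCanQuit_u = \PQuitAlliance(u)$ holds,'' and concludes that one round suffices. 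This glosses over two things: a round can also be closed by neutralization rather than by a move, and a defective $u$ satisfying $\PBecomeNormal(u)$ (which forces $\VCanQuit_u = false$ while $\PQuitAlliance(u)$ holds) is enabled \emph{only} through $\ruleColor(u)$, a rule that by definition is never executed in a $Color$-restricted round, so ``if $u$ moves'' cannot be invoked directly for it. Your resolution supplies exactly the missing steps: while $\VCanQuit_u = false$, $\BestPointer(v)$ can never return $u$, so each neighbor $v$ with $\VPointer_v = u$ satisfies $\PRulePointer(v)$ and remains enabled by $\rulePointerA(v)$ until it moves; once one of them retracts its pointer, $\PBecomeNormal(u)$ falls and $u$ becomes, and stays, enabled by $\rulePointerA(u)$; hence $u$ is continuously enabled from the start of the round, is never neutralized, and the round cannot terminate before $u$ executes a non-$\ruleColor$ rule that repairs the defect. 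The same continuous-enabledness reasoning (via the pointer rules or $\ruleCanQuitB(u)$, using your observation that $\PBecomeNormal(u)$ cannot turn true while the defect persists) disposes of the $\neg\PBecomeNormal(u)$ case. In short: same decomposition as the paper, but your version contains the case analysis that actually makes the one-round claim sound; the paper's proof (and, similarly, its siblings, Lemmas~\ref{lem:alliance-round1}, \ref{lem:alliance-round3}, and \ref{lem:alliance-round4}) leaves that gap implicit.
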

\begin{proof}
  Let $u$ be a process.  Let $\gamma_i \mapsto \gamma_{i+1}$ be a
  $Color$-restricted step such that $\mathcal{P}_6(\gamma_i)$ holds.
  For every process $v$, the value of $\VColor_v$, $\VSatisfaction_v$,
  and $\numberYes(v)$ stay unchanged during $\gamma_i \mapsto
  \gamma_{i+1}$. Therefore, the value of $\PQuitAlliance(u)$ stays
  unchanged during $\gamma_i \mapsto \gamma_{i+1}$ for every process
  $u$.  So, the predicate $\mathcal{P}_6 \wedge \VCanQuit_u =
  \PQuitAlliance(u)$ is $Color$-restricted closed, and so
  $\mathcal{P}_7$ is.
  Assume now that, $\VCanQuit_u \neq \PQuitAlliance(u)$ holds in
  $\gamma_i$.  Then, $u$ is enabled in $\gamma_i$ and if $u$ moves in
  $\gamma_i \mapsto \gamma_{i+1}$, $\VCanQuit_u = \PQuitAlliance(u)$
  holds in  in $\gamma_{i+1}$.  Indeed, the value of
  $\PQuitAlliance(u)$ stays unchanged during the step. Therefore,
  after at most one $Color$-restricted round from $\gamma_i$, we have
  $\VCanQuit_u = \PQuitAlliance(u)$ for every process $u$, and so
  $\mathcal{P}_7$ holds.
\end{proof} 

\begin{lemma}
  \label{lem:alliance-round3}
  $\mathcal{P}_8$ is $Color$-restricted closed. Moreover, after one
  $Color$-restricted round from any configuration satisfying
  $\mathcal{P}_7$, a configuration satisfying $\mathcal{P}_8$ is
  reached.
\end{lemma}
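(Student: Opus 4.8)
The plan is to follow the two-part template of Lemmas~\ref{lem:alliance-round1} and~\ref{lem:alliance-round2}: isolate the macro that governs the new component of the predicate, show it is $Color$-restricted invariant on the previously built attractor $\mathcal{P}_7$, and then argue that every process violating the new component is enabled by a rule whose execution repairs it.

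For the closure, I would first note that $\BestPointer(u)$ reads only $\VSatisfaction_u$, the values $\VCanQuit_v$ for $v\in\N[u]$, and the constant identifiers. On $\mathcal{P}_7$ we have $\VSatisfaction_u=\satisfaction(u)$ and $\VCanQuit_v=\PQuitAlliance(v)$ for every $v$, and by the computation in Lemma~\ref{lem:alliance-round2} both $\satisfaction(u)$ and every $\PQuitAlliance(v)$ are unchanged across a $Color$-restricted step; hence $\BestPointer(u)$ is $Color$-restricted invariant on $\mathcal{P}_7$. I would then check, rule by rule, that every rule of \DA except $\ruleColor$ assigns to $\VPointer_u$ either $\bot$ (in $\rulePointerA$, and in the conditional branch of $\ruleCanQuitB$) or exactly $\BestPointer(u)$ (in $\rulePointerB$), and leaves $\VPointer_u$ untouched otherwise. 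Together with the invariance of $\BestPointer(u)$ and the closure of $\mathcal{P}_7$, this gives that $\VPointer_u\in\{\BestPointer(u),\bot\}$ is preserved by every $Color$-restricted step, so $\mathcal{P}_8$ is $Color$-restricted closed.

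For the convergence, I would fix a configuration satisfying $\mathcal{P}_7$ and a process $u$ with $\VPointer_u\neq\bot$ and $\VPointer_u\neq\BestPointer(u)$. As long as $u$ does not move, $\VPointer_u$ is frozen and, by the invariance above, so is $\BestPointer(u)$, so both inequalities persist. If in addition $\neg\PBecomeNormal(u)$ persists, then $\PRulePointer(u)$ holds and $\rulePointerA(u)$ is the \emph{unique} enabled rule of \DA at $u$ (the guards of $\rulePointerB$, $\ruleCanQuitB$ and $\ruleColor$ all fail); hence $u$ cannot be neutralized without moving and must execute $\rulePointerA(u)$ during the round, setting $\VPointer_u:=\bot$, after which closure keeps the property at $u$. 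This disposes of every process that stays outside $\PBecomeNormal$ during the round.

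The delicate case --- which I expect to be the main obstacle --- is $\PBecomeNormal(u)$ (holding initially, or turning true mid-round). Then $\VColor_u$ holds and $\VPointer_u=u$, and here I would use $\PICorrect(u)$: since $\VPointer_u=u\neq\bot$ and $\VColor_{\VPointer_u}=\VColor_u$ is true, the only satisfiable clause of $\PICorrect(u)$ forces $\VSatisfaction_u=\satisfaction(u)=1$. Thus $\BestPointer(u)$ is the minimum-identifier member of $\N[u]$ that can quit, and $\mathcal{P}_8$ at $u$ reduces to $u$ being that minimizer. The obstruction is that the only enabled rule of \DA at such a $u$ is $\ruleColor(u)$, which a $Color$-restricted step forbids, so $u$ cannot repair its own pointer. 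I would resolve this by a dichotomy: either $u$ is the minimum-identifier quitter of $\N[u]$, so $\BestPointer(u)=u=\VPointer_u$ and $\mathcal{P}_8$ already holds at $u$; or there is a quitter $w\in\N[u]$ with $id_w<id_u$, and then $\VPointer_w=u\neq w$ gives $\neg\PBecomeNormal(w)$, while $\BestPointer(w)$ is either $\bot$ or a quitter of identifier at most $id_w$, hence $\BestPointer(w)\neq u$ and $\rulePointerA(w)$ is enabled; its firing falsifies $\PBecomeNormal(u)$ and returns $u$ to the previous case. The real work --- and where the care is needed --- is to show that this neutralization-then-correction completes inside a \emph{single} $Color$-restricted round, i.e. that the chain of pointer corrections it triggers cannot outlast the round.
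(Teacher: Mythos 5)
Your closure half is correct and is essentially the paper's argument: under $\mathcal{P}_7$ a $Color$-restricted step can only execute $\rulePointerA$ or $\rulePointerB$ (the guard of $\ruleCanQuitB$ is false under $\mathcal{P}_7$, and $\ruleColor$ is excluded by hypothesis), such a step leaves every $\VColor_v$, $\VSatisfaction_v$ and $\VCanQuit_v$ unchanged, hence $\BestPointer(u)$ unchanged, and every executable rule writes only $\bot$ or $\BestPointer(u)$ into $\VPointer_u$.

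The convergence half, however, has a genuine gap, and you flag it yourself: you never prove that a violator $u$ in the $\PBecomeNormal(u)$ case is repaired before the round ends, deferring this to a bound on "the chain of pointer corrections". That chain analysis is both missing and unnecessary; it is the wrong way to close the case. The missing idea is that the definition of a round does not care \emph{which} rule makes a process enabled, only \emph{that} it is enabled. If $\VPointer_u \notin \{\BestPointer(u),\bot\}$, then $u$ is enabled in every configuration reached while it has not moved: via $\rulePointerA(u)$ when $\neg\PBecomeNormal(u)$ holds (then $\PRulePointer(u)$ is true and $\VPointer_u \neq \bot$), and via $\ruleColor(u)$ when $\PBecomeNormal(u)$ holds (since $\PClean(u) \wedge \PICorrect(u)$ hold under $\mathcal{P}_7$). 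Moreover, while $u$ does not move, $\VPointer_u$ and $\BestPointer(u)$ are frozen, so the violation persists, and the two enabling conditions swap in lockstep: the instant a neighbor's move toggles $\PBecomeNormal(u)$, the rule $\ruleColor(u)$ is disabled exactly when $\rulePointerA(u)$ becomes enabled (and vice versa). Hence $u$ is \emph{continuously} enabled until it moves, so it can never be neutralized; by the very definition of a round, a $Color$-restricted round cannot terminate before $u$ moves, and under $\mathcal{P}_7$ any $Color$-restricted move of $u$ is $\rulePointerA(u)$ or $\rulePointerB(u)$, which puts $\VPointer_u$ into $\{\BestPointer(u),\bot\}$; your closure argument then preserves this. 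This uniform argument (violator $\Rightarrow$ enabled; enabledness persists until a move; any $Color$-restricted move repairs) is precisely what the paper's terse proof invokes, and it makes your dichotomy on the smaller-identifier quitter $w$, though sound, superfluous.
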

\begin{proof}
Let $u$ be a process.  Let $\gamma_i \mapsto \gamma_{i+1}$ be a
$Color$-restricted step such that $\mathcal{P}_7(\gamma_i)$ holds.
During that step, only rules $\rulePointerA$ or $\rulePointerB$ are
executed. Moreover, for every process $v$, the value of $\VColor_v$
and $\VCanQuit_v$ stay unchanged during $\gamma_i \mapsto
\gamma_{i+1}$.  So, the value of $\BestPointer(u)$ stays unchanged as
well.  So, the predicate $\mathcal{P}_7 \wedge \VPointer_u \in
\{\BestPointer(u), \bot\}$ is $Color$-restricted closed, and so
$\mathcal{P}_8$ is.
 Assume now that, $\VPointer_u \notin \{\BestPointer(u), \bot\}$ holds
 in $\gamma_i$.  Then, $u$ is enabled in $\gamma_i$ and if $u$ moves
 in $\gamma_i \mapsto \gamma_{i+1}$, $\VPointer_u \in
 \{\BestPointer(u), \bot\}$ in $\gamma_{i+1}$.  Indeed, the value of
 $\BestPointer(u)$ stays unchanged during the step. Therefore, after
 at most one $Color$-restricted round from $\gamma_i$, we have
 $\VPointer_u \in \{\BestPointer(u), \bot\}$ for every process $u$,
 and so $\mathcal{P}_8$ holds.
\end{proof}

\begin{lemma}
  \label{lem:alliance-round4}
  $\mathcal{P}_9$ is $Color$-restricted closed. Moreover, after one
  $Color$-restricted round from any configuration satisfying
  $\mathcal{P}_8$, a configuration satisfying $\mathcal{P}_9$ is
  reached.
\end{lemma}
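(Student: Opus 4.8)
The plan is to mirror the structure of Lemmas~\ref{lem:alliance-round1}--\ref{lem:alliance-round3}: first establish that $\mathcal{P}_9$ is $Color$-restricted closed, then prove convergence in a single $Color$-restricted round from any configuration satisfying $\mathcal{P}_8$. The recurring tool I would use is that $\mathcal{P}_8$ (hence $\mathcal{P}_7$) is $Color$-restricted closed by Lemma~\ref{lem:alliance-round3}, so along a $Color$-restricted step issued from a $\mathcal{P}_8$-configuration the values $\VColor_v$, $\VSatisfaction_v$, and $\VCanQuit_v$ stay unchanged for every process $v$; consequently $\BestPointer(u)$ keeps a fixed value for every $u$, exactly as in the proof of Lemma~\ref{lem:alliance-round3}.

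For the closure, I would take a $Color$-restricted step $\gamma_i \mapsto \gamma_{i+1}$ with $\mathcal{P}_9(\gamma_i)$. Since $\mathcal{P}_9$ implies $\mathcal{P}_8$, Lemma~\ref{lem:alliance-round3} gives $\mathcal{P}_8(\gamma_{i+1})$, so only $\VPointer_u = \BestPointer(u)$ must be re-established. The key point is that in a $\mathcal{P}_9$-configuration every process $u$ satisfies $\neg\PRulePointer(u)$ (because $\VPointer_u = \BestPointer(u)$) together with $\VSatisfaction_u = \satisfaction(u)$ and $\VCanQuit_u = \PQuitAlliance(u)$; hence $\rulePointerA(u)$, $\rulePointerB(u)$, and $\ruleCanQuitB(u)$ are all disabled. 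As $\ruleColor$ is never executed in a $Color$-restricted step, no rule can move $\VPointer_u$ away from the (unchanged) value $\BestPointer(u)$, so $\VPointer_u = \BestPointer(u)$ is preserved and $\mathcal{P}_9$ is $Color$-restricted closed.

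For the convergence, I would fix a $Color$-restricted round starting in a configuration $\gamma$ satisfying $\mathcal{P}_8$. A process already meeting $\VPointer_u = \BestPointer(u)$ in $\gamma$ keeps this equality throughout the round, by the same case analysis as in the closure part (its relevant rules are disabled and $\BestPointer(u)$ is fixed). So I would focus on a process $u$ with $\VPointer_u \neq \BestPointer(u)$; by $\mathcal{P}_8$ this forces $\VPointer_u = \bot$ and $\BestPointer(u) \neq \bot$. First I would note $\neg\PBecomeNormal(u)$, since $\PBecomeNormal(u)$ would require $\VPointer_u = u \neq \bot$; combined with $\VPointer_u = \bot \neq \BestPointer(u)$ this yields $\PRulePointer(u)$, so that $\rulePointerB(u)$ is the unique enabled rule of $u$ ($\rulePointerA(u)$ and $\ruleCanQuitB(u)$ are disabled as above, and $\ruleColor(u)$ is disabled because $\neg\PBecomeNormal(u)$). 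The delicate step — and the main obstacle — is to upgrade ``$u$ is enabled'' to ``$u$ actually moves during the round'': I would show $u$ cannot be neutralized, because as long as $u$ does not move, $\BestPointer(u) \neq \bot$ stays fixed and $\PBecomeNormal(u)$ remains false (it needs $\VPointer_u = u$, impossible while $\VPointer_u = \bot$), so $\rulePointerB(u)$ stays continuously enabled until executed. By the definition of a round, $u$ therefore executes $\rulePointerB(u)$, which through $\update(u)$ sets $\VPointer_u := \BestPointer(u)$; and by the closure reasoning this equality then persists for the remainder of the round. Hence at the end of the round every process satisfies $\VPointer_u = \BestPointer(u)$ and, since $\mathcal{P}_8$ still holds, a configuration satisfying $\mathcal{P}_9$ is reached.
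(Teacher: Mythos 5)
Your proof is correct and follows essentially the same route as the paper's: closure via the observation that in a $Color$-restricted step issued from a $\mathcal{P}_8$-configuration only $\rulePointerB$ can fire and $\BestPointer(u)$ stays fixed, and convergence by showing that any process with $\VPointer_u = \bot \neq \BestPointer(u)$ has $\rulePointerB(u)$ enabled and establishes the equality $\VPointer_u = \BestPointer(u)$ (which then persists) when it moves. Your explicit argument that such a process cannot be neutralized, and hence must actually move within the round, is a detail the paper leaves implicit, but it is the same underlying argument.
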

\begin{proof}
Let $u$ be a process.  Let $\gamma_i \mapsto \gamma_{i+1}$ be a
$Color$-restricted step such that $\mathcal{P}_8(\gamma_i)$ holds.
During that step, only rules $\rulePointerB$ are
executed. Moreover, for every process $v$, the value of $\VColor_v$
and $\VCanQuit_v$ stay unchanged during $\gamma_i \mapsto
\gamma_{i+1}$.  So, the value of $\BestPointer(u)$ stays unchanged as
well.  So, the predicate $\mathcal{P}_8 \wedge \VPointer_u =
\BestPointer(u)$ is $Color$-restricted closed, and so
$\mathcal{P}_9$ is.
 Assume now that, $\VPointer_u \neq \BestPointer(u)$ holds
 in $\gamma_i$.  Then, $u$ is enabled in $\gamma_i$ and if $u$ moves
 in $\gamma_i \mapsto \gamma_{i+1}$, $\VPointer_u =
 \BestPointer(u)$ in $\gamma_{i+1}$.  Indeed, the value of $\BestPointer(u)$
 stays unchanged during the step. Therefore, after at most one
 $Color$-restricted round from $\gamma_i$, we have $\VPointer_u =
 \BestPointer(u)$ for every process $u$, and so
 $\mathcal{P}_8$ holds.
\end{proof}

\begin{theorem}
\label{theo:alliance-round}
Starting from any configuration satisfying  $\mathcal{P}_5$, Algorithm \DA terminates in at most $5n+4$ rounds.
\end{theorem}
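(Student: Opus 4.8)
The plan is to combine the chain of Lemmas~\ref{lem:alliance-round1}--\ref{lem:alliance-round4} with a counting argument over the rounds that contain a $\ruleColor$ move. First I would fix an execution $e$ starting from a configuration satisfying $\mathcal{P}_5$ and observe, via Corollary~\ref{coro:normal}, that $\mathcal{P}_5$ holds in every configuration of $e$. Then, starting from the first configuration of any maximal block of consecutive $Color$-restricted rounds (which satisfies $\mathcal{P}_5$), I would apply the four lemmas successively---each $\mathcal{P}_i$ being $Color$-restricted closed, so it survives the later $Color$-restricted rounds---to conclude that after four $Color$-restricted rounds a configuration satisfying $\mathcal{P}_9$ is reached.

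Next I would establish the pivotal fact that in any configuration satisfying $\mathcal{P}_9$ the only rule that can be enabled is $\ruleColor$. This is a direct guard inspection: $\mathcal{P}_9$ forces $\VPointer_u = \BestPointer(u)$, hence $\neg\PRulePointer(u)$, disabling $\rulePointerA(u)$ and $\rulePointerB(u)$; and $\mathcal{P}_9$ forces $\VSatisfaction_u = \satisfaction(u)$ together with $\VCanQuit_u = \PQuitAlliance(u)$, which falsifies the last disjunct of the guard of $\ruleCanQuitB(u)$. Only $\ruleColor(u)$, whose guard contains $\PBecomeNormal(u)$, may survive.

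From here I would argue that no fifth consecutive $Color$-restricted round can exist, giving the bound of four on any block length. After the four $Color$-restricted rounds, the reached configuration satisfies $\mathcal{P}_9$, which persists under further $Color$-restricted steps. If it is terminal the block ends; otherwise the only enabled rule is $\ruleColor$, so in a $Color$-restricted step no process can move, and a process can be neutralized only when a neighbor executes $\ruleColor$ (the single action able to falsify $\PBecomeNormal$). Thus any round that makes progress beyond $\mathcal{P}_9$ must contain a $\ruleColor$ move and is therefore not $Color$-restricted.

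Finally I would close with the count: since each process executes $\ruleColor$ at most once, $e$ contains at most $n$ non-$Color$-restricted rounds, which split $e$ into at most $n+1$ maximal blocks of consecutive $Color$-restricted rounds, each of length at most $4$; this yields at most $n + 4(n+1) = 5n+4$ rounds, and termination follows since \DA performs finitely many moves (Corollary~\ref{coro:move:alliance}). The hard part will be the neutralization argument of the third step: I must ensure that, once $\mathcal{P}_9$ holds, neither an actual move nor a neutralization can occur without some neighbor firing $\ruleColor$, so that a block of $Color$-restricted rounds genuinely cannot exceed length four.
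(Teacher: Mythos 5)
Your proposal is correct and follows essentially the same route as the paper's proof: apply Lemmas~\ref{lem:alliance-round1}--\ref{lem:alliance-round4} to show that after four consecutive $Color$-restricted rounds only $\ruleColor$ can be enabled, then count using the fact that $\ruleColor$ is executed at most $n$ times, yielding $n + 4(n+1) = 5n+4$ rounds. Your treatment is somewhat more explicit than the paper's (in particular the block decomposition and the observation that termination itself rests on Corollary~\ref{coro:move:alliance}), but the decomposition and key steps are identical.
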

\begin{proof}
According to Lemmas \ref{lem:alliance-round1}-\ref{lem:alliance-round4}, after any $4$ consecutive $Color$-restricted rounds, every process $u$ satisfies:
\begin{multline*}
\VSatisfaction_u = \satisfaction(u) ~\And~
\VCanQuit_u = \PQuitAlliance(u)~\And 
\neg\PRulePointer(u)
\end{multline*}
So only  $\ruleColor(u)$ may be enabled at $u$. We conclude
that an execution of $\ruleColor$ occurs at least every $5$ rounds,
unless the system reaches a terminal configuration.  Since, along any
execution, there are at most $n$ steps containing the execution of
some $\ruleColor$, the theorem follows.
\end{proof}

Since $\gamma_{init}$ satisfies $\mathcal{P}_5$, we have the following
corollary.

\begin{corollary} Starting from $\gamma_{init}$, Algorithm \DA terminates in at most $5n+4$ rounds.
  \end{corollary}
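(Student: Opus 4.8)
The plan is to derive this corollary directly from Theorem~\ref{theo:alliance-round}, which already bounds the number of rounds of \DA by $5n+4$ when the execution starts from \emph{any} configuration satisfying $\mathcal{P}_5$. Thus the entire work reduces to a single verification: that the designated initial configuration $\gamma_{init}$ belongs to $\mathcal{P}_5$, i.e.\ that $\PClean(u) \wedge \PICorrect(u)$ holds for every process $u$ in $\gamma_{init}$. Once this is checked, instantiating the theorem with $\gamma_{init}$ yields the claim immediately.

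First I would unfold the definition of $\gamma_{init}$, in which every process $u$ satisfies $\VColor_u = true$, $\VSatisfaction_u = 1$, $\VCanQuit_u = true$, $\VPointer_u = \bot$, and $\st_u = C$. The clause $\PClean(u)$ is the easy half: since $\st_v = C$ for \emph{every} process $v$, in particular for every $v \in \N[u]$, the predicate $\PClean(u)$ holds by definition. For the clause $\PICorrect(u)$, the key observation is that all processes are colored in $\gamma_{init}$, so $\numberYes(u) = \delta_u$. Invoking the standing assumption $\delta_u \geq \max(f(u),g(u)) \geq g(u)$ together with $\VColor_u$, the case analysis defining $\satisfaction(u)$ gives $\satisfaction(u) \geq 0$ (indeed $\satisfaction(u) \in \{0,1\}$ according to whether $\delta_u = g(u)$ or $\delta_u > g(u)$). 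Finally, since $\VPointer_u = \bot$, the disjunction in $\PICorrect(u)$ is satisfied through its second disjunct. Hence $\PICorrect(u)$ holds for every $u$, and $\gamma_{init} \in \mathcal{P}_5$.

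With $\gamma_{init} \models \mathcal{P}_5$ established, Theorem~\ref{theo:alliance-round} applies verbatim and bounds the stabilization of \DA from $\gamma_{init}$ by $5n+4$ rounds, completing the argument. I do not anticipate a genuine obstacle here: the statement is a packaging corollary whose only content is confirming that the concrete start configuration meets the hypothesis of the already-proved round bound. The one point deserving care is the computation of $\satisfaction(u)$ in $\gamma_{init}$, where one must explicitly use the degree assumption $\delta_u \geq \max(f(u),g(u))$ to rule out $\satisfaction(u) = -1$; everything else is an unfolding of definitions.
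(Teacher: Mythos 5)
Your proposal is correct and follows exactly the paper's route: the paper likewise obtains this corollary by noting that $\gamma_{init}$ satisfies $\mathcal{P}_5$ and then invoking Theorem~\ref{theo:alliance-round}. Your explicit verification that $\PClean(u)$ and $\PICorrect(u)$ hold in $\gamma_{init}$ (using $\numberYes(u)=\delta_u \geq g(u)$ and $\VPointer_u=\bot$) merely spells out the detail the paper leaves implicit, so there is nothing to add.
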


\subsection{Algorithm \DA $\circ$ \A}

\paragraph{Requirements.} To show the self-stabilization of \DA
$\circ$ \A, we should first establish that \DA meets the requirements
~\ref{RQ1} to~\ref{RQ6}, given in Subsection~\ref{sect:require}.

\begin{enumerate}
\item From the code of \DA, we can deduce that
  Requirements~\ref{RQ1},~\ref{RQ4},~\ref{RQ3}, and ~\ref{RQ5} are  satisfied.
\item Requirement~\ref{RQ2} is satisfied since $\PICorrect(u)$ does
  not involve any variable of \A and is closed by \DA (Corollary~\ref{coro:normal2}).
\item Finally, recall that $\delta_u \geq max(f(u), g(u))$, for every
  process $u$. So, if $\PReset(v)$ holds, for every $v \in \N[u]$, then $\satisfaction(u) = 1$ and so
  $\PICorrect(u)$ holds, by definition. Hence, Requirement~\ref{RQ6}
  holds.
\end{enumerate}

\paragraph{Partial Correctness.}

Let $\gamma$ be any terminal configuration of \DA $\circ$
\A. Then, $\gamma_{|\A}$ is a terminal configuration of \A and, by
Theorem~\ref{theo:termA}, $\PClean(u)
\And \PICorrect(u)$ holds in $\gamma$, and so $\gamma_{|\A}$, for
every process $u$. Moreover, $\gamma_{|\DA}$ is a terminal
configuration of \DA. Hence, by
Theorem~\ref{theo:term:alliance}, follows.

\begin{theorem}\label{theo:term:alliance:2}
  In any terminal configuration of \DA $\circ$ \A. the set $\{u
  \in V\ |\ \VColor_u\}$ is a 1-minimal $(f,g)$ alliance of the
  network.
\end{theorem}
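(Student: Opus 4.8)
The plan is to reduce this statement to two facts already in hand: the characterization of terminal configurations of \A (Theorem~\ref{theo:termA}), and the correctness of \DA on terminal configurations that are already clean and locally correct (Theorem~\ref{theo:term:alliance}). The whole argument hinges on decomposing a terminal configuration of the composition into a terminal configuration for each component.

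First I would fix an arbitrary terminal configuration $\gamma$ of \DA $\circ$ \A. By definition of composition, the rule set of \DA $\circ$ \A is exactly the union of the rules of \A and those of \DA, so ``no rule of \DA $\circ$ \A is enabled at any process in $\gamma$'' is equivalent to the conjunction of ``no rule of \A is enabled anywhere in $\gamma$'' and ``no rule of \DA is enabled anywhere in $\gamma$.'' Treating $\PICorrect$ and $\PReset$ as inputs whose values are read off the state $\gamma$, the first conjunct says precisely that the projection $\gamma_{|\A}$ is a terminal configuration of \A, while the second says that $\gamma_{|\DA}$ is terminal for \DA.

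Next I would invoke Theorem~\ref{theo:termA} on $\gamma_{|\A}$ to obtain $\PClean(u) \wedge \PICorrect(u)$ for every process $u$ in $\gamma$. This is exactly the hypothesis under which Theorem~\ref{theo:term:alliance} guarantees the output. Since $\gamma_{|\DA}$ is a terminal configuration of \DA in which $\PClean(u) \wedge \PICorrect(u)$ holds everywhere, applying Theorem~\ref{theo:term:alliance} immediately yields that $\{u \in V \mid \VColor_u\}$ is a 1-minimal $(f,g)$-alliance, closing the argument.

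The only point requiring care — and the closest thing to an obstacle — is the legitimacy of the projection step: one must be sure that ``\A has no enabled rule in $\gamma$'' genuinely coincides with ``$\gamma_{|\A}$ is terminal for \A taken in isolation.'' This is sound because the guards of \A read only its own variables $\status$ and $\dist$ together with the inputs $\PICorrect$ and $\PReset$ supplied by \DA, and these inputs are fully determined by $\gamma$; hence enabledness of an \A-rule is a function of $\gamma$ alone and is insensitive to the coexistence of \DA's rules. The symmetric remark applies to \DA, whose guards additionally read $\status$ through $\PClean$ but never write \A's variables (Requirement~\ref{RQ1}). With this observation the decomposition is immediate and the two cited theorems do the rest.
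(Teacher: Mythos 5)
Your proposal is correct and follows essentially the same route as the paper: decompose the terminal configuration of \DA $\circ$ \A into terminal projections $\gamma_{|\A}$ and $\gamma_{|\DA}$, apply Theorem~\ref{theo:termA} to obtain $\PClean(u) \wedge \PICorrect(u)$ everywhere, and then conclude via Theorem~\ref{theo:term:alliance}. Your additional remark justifying the projection step (that enabledness of \A's rules is fully determined by $\gamma$ since \A's guards read only its own variables and the input predicates supplied by \DA) is a point the paper leaves implicit, but it is the same argument.
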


\paragraph{Termination and Self-Stabilization.}

Let $u$ be a process. Let $e$ be an execution of \DA $\circ$ \A.
By Corollary \ref{cor:seg} (page \pageref{cor:seg}), the sequence of
rules executed by a process $u$ in a segment of $e$ belongs to the
following language:
$$(\ruleC + \varepsilon)\ \wordsI\ (\ruleRB + \ruleR + \varepsilon)\ (\ruleRF + \varepsilon)$$ where $\wordsI$ be any sequence of rules of \DA.
Moreover, by Lemma~\ref{lem:transfert} (page \pageref{lem:transfert}) and 
Lemma~\ref{lem:move:alliance}, $\wordsI$ is bounded by
$8\delta_u\Delta+18\delta_u+24$.
Thus, $u$ executes at most
$8\delta_u\Delta+18\delta_u+27$ moves in any segment of
$e$ and, overall each segment of $e$ contains at most
$16m\Delta+36m+27n$ moves. Since, $e$ contains at most $n+1$ segments
(Remark~\ref{rem:nbseg}, page \pageref{rem:nbseg}), $e$ contains at most
$(n+1).(16m\Delta+36m+27n)$ moves, and we have the following theorem.

\begin{theorem}\label{theo:move:ss}
 Any execution of \DA $\circ$ \A terminates in $O(\Delta.n.m)$ moves. 
\end{theorem}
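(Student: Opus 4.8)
The plan is to combine the segment decomposition of executions with a per-process, per-segment bound on the number of moves, and then aggregate over processes and over segments. First I would invoke Remark~\ref{rem:nbseg} to fix that any execution $e$ of \DA $\circ$ \A consists of at most $n+1$ segments. It therefore suffices to bound, for each process $u$, the number of moves $u$ performs within a single segment $S$, and then sum these bounds over all processes and multiply by $n+1$.

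Next I would use Corollary~\ref{cor:seg}, which constrains the sequence of rules $u$ executes during $S$ to the language $(\ruleC + \varepsilon)\,\wordsI\,(\ruleRB + \ruleR + \varepsilon)\,(\ruleRF + \varepsilon)$, where $\wordsI$ is a (possibly empty) sequence of \DA rules. This immediately accounts for at most three moves coming from \A itself per segment, so the only remaining task is to bound the length of $\wordsI$, i.e.\ the number of \DA moves $u$ makes inside $S$.

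The crux is precisely this bound on the \DA moves. Here I would appeal to Lemma~\ref{lem:transfert}: the interleaved \DA rules executed by all processes during $S$ can be replayed as a genuine prefix of a standalone execution of \DA starting from the configuration $c_{\tt I}^S$. This is the delicate point, because inside the composition the \DA activity is scattered among reset activity (moves of \A together with the $\PClean$-based blocking of {\tt I}); Lemma~\ref{lem:transfert} is exactly what guarantees that these interspersed moves nevertheless form a legitimate \DA execution, so that the standalone analysis applies to them. Granting this, Lemma~\ref{lem:move:alliance} caps the number of \DA moves of $u$ by $8\delta_u\Delta+18\delta_u+24$, whence $u$ performs at most $8\delta_u\Delta+18\delta_u+27$ moves per segment.

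Finally I would sum over all processes, using $\sum_{u\in V}\delta_u = 2m$, to obtain at most $16m\Delta+36m+27n$ moves per segment, and multiply by the at most $n+1$ segments to conclude that $e$ contains at most $(n+1)(16m\Delta+36m+27n)$ moves, which is $O(\Delta\cdot n\cdot m)$. I expect the main obstacle to be justifying this middle step cleanly: one must be careful that Lemma~\ref{lem:transfert} genuinely transfers the standalone \DA move bound to the composed setting, since a naive count within the composition could otherwise be inflated by the reset dynamics rather than being controlled by the pure-\DA analysis.
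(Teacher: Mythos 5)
Your proposal is correct and follows exactly the paper's own argument: the same decomposition into at most $n+1$ segments (Remark~\ref{rem:nbseg}), the same per-segment language from Corollary~\ref{cor:seg}, the same use of Lemma~\ref{lem:transfert} to transfer the standalone bound of Lemma~\ref{lem:move:alliance} on $\wordsI$ into the composition, and the same final count $(n+1)(16m\Delta+36m+27n)$. Nothing is missing; your emphasis on why Lemma~\ref{lem:transfert} is the crux matches the role it plays in the paper.
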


By Theorems~\ref{theo:term:alliance:2} and~\ref{theo:move:ss}, we can conclude:

\begin{theorem} \DA $\circ$ \A is
  self-stabilizing for the 1-minimal $(f,g)$-alliance problem. Its
  stabilization time is in $O(\Delta.n.m)$ moves.
\end{theorem}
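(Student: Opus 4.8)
The plan is to recognize that this final theorem is precisely the combination of the two immediately preceding results, packaged through the definition of \emph{silent} self-stabilization given in the preliminaries. Recall that, by that definition, an algorithm is silent and self-stabilizing for a configuration predicate $SP'$ provided (i) all of its executions are finite and (ii) all of its terminal configurations satisfy $SP'$; in that case the set of legitimate configurations is taken to be the set of terminal configurations. So I would first fix $SP'$ to be the predicate stating that the set $\{u \in V \mid \VColor_u\}$ is a $1$-minimal $(f,g)$-alliance of the network.

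First I would establish finiteness of every execution. This is immediate from Theorem~\ref{theo:move:ss}: any execution of \DA $\circ$ \A terminates in $O(\Delta.n.m)$ moves, hence in particular performs only finitely many moves and so is a finite sequence of configurations. Thus condition (i) holds, and \DA $\circ$ \A is silent.

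Next I would verify condition (ii). By Theorem~\ref{theo:term:alliance:2}, in any terminal configuration of \DA $\circ$ \A the set $\{u \in V \mid \VColor_u\}$ is a $1$-minimal $(f,g)$-alliance, which is exactly $SP'$. Taking the legitimate configurations $\mathcal{L}$ to be the terminal configurations, conditions (i) and (ii) together yield that \DA $\circ$ \A is silent and self-stabilizing for the $1$-minimal $(f,g)$-alliance problem.

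Finally, for the complexity bound, the stabilization time in moves is, by definition, the maximum over all executions of the number of moves needed to reach a legitimate (here, terminal) configuration. Since legitimate and terminal coincide, this quantity equals the maximum length in moves of a finite execution, which Theorem~\ref{theo:move:ss} bounds by $O(\Delta.n.m)$. The genuinely substantive work has already been carried out upstream --- namely bounding the number of segments (Remark~\ref{rem:nbseg}) and the moves of \DA per segment via Lemma~\ref{lem:transfert} and Lemma~\ref{lem:move:alliance} --- so no real obstacle remains at this last step; it is a direct assembly of the cited theorems through the silent self-stabilization definition.
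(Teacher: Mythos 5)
Your proposal is correct and follows essentially the same route as the paper, which concludes this theorem directly by combining Theorem~\ref{theo:term:alliance:2} (every terminal configuration yields a $1$-minimal $(f,g)$-alliance) with Theorem~\ref{theo:move:ss} (every execution terminates within $O(\Delta.n.m)$ moves), taking the legitimate configurations to be the terminal ones. Your additional step of routing the argument explicitly through the definition of silent self-stabilization merely makes precise what the paper leaves implicit; there is no substantive difference.
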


\paragraph{Round Complexity.}

Corollary \ref{cor:round} (page \pageref{cor:round}) establishes that
after at most $3n$ rounds a normal configuration of \DA $\circ$ \A is
reached. Then, since the set of normal configuration is closed
(still by Corollary \ref{cor:round}), all rules of \A algorithm are
disabled forever from such a configuration, by Lemma~\ref{lem:a4:term}
(page \pageref{lem:a4:term}).  Moreover, in a normal configuration,
$\PClean(u) \And \PICorrect(u)$ holds, for every process $u$ (still
by Corollary \ref{cor:round}).  Hence, after at most $5n+4$ additional
rounds, a terminal configuration of \DA $\circ$ \A is reached, by
Theorem \ref{theo:alliance-round}, and follows.

\begin{theorem}
The
stabilization time of \DA $\circ$ \A is  at most $8n+4$ rounds.
\end{theorem}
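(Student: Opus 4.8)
The plan is to decompose the stabilization into two phases and invoke the results already established for each. First I would apply Corollary~\ref{cor:round} to the composition \DA $\circ$ \A, which is an instance of {\tt I} $\circ$ \A with {\tt I} $=$ \DA (the requirements~\ref{RQ1} to~\ref{RQ6} having just been verified for \DA): after at most $3n$ rounds the execution reaches a normal configuration, that is, a configuration $\gamma$ satisfying $\mathcal{P}_4$, equivalently one in which $\PClean(u) \And \PICorrect(u)$ holds for every process $u$.

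Next I would argue that from such a $\gamma$ the \A-component is inert for the remainder of the execution. By Corollary~\ref{cor:round} the set of normal configurations is closed, and by Lemma~\ref{lem:a4:term} a configuration is normal exactly when its projection onto \A's variables is terminal. Hence from $\gamma$ onward no rule of \A is ever enabled, so the variables $\st_u$ and $\dist_u$ remain frozen. Combined with Remark~\ref{rem:mutex} (mutual exclusion of \A and \DA), this means that along the suffix starting in $\gamma$ the only enabled processes are those enabled in \DA, so the steps of \DA $\circ$ \A coincide with those of \DA executing alone.

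Then I would observe that $\gamma$ satisfies the predicate $\mathcal{P}_5$, which is precisely ``$\PClean(u) \wedge \PICorrect(u)$ for every process $u$''. Thus Theorem~\ref{theo:alliance-round} applies: starting from $\gamma$, \DA terminates within at most $5n+4$ rounds. By the coincidence of steps noted above, \DA $\circ$ \A therefore reaches a terminal configuration within at most $5n+4$ rounds after $\gamma$. Summing the two phases gives a bound of $3n+(5n+4)=8n+4$ rounds, as claimed.

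The only delicate point is the round-transfer step: one must check that neutralizations and round boundaries behave identically whether rounds are counted in the composition or in \DA alone. This follows because, once the \A-variables are frozen and no rule of \A is ever enabled, a process is enabled (resp.\ neutralized) in \DA $\circ$ \A during a step of the suffix if and only if it is enabled (resp.\ neutralized) in \DA during the corresponding step. Consequently the inductive definition of rounds yields identical round decompositions on the suffix, and the $5n+4$ bound of Theorem~\ref{theo:alliance-round} transfers verbatim to \DA $\circ$ \A.
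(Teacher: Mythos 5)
Your proposal is correct and follows essentially the same route as the paper: Corollary~\ref{cor:round} gives a normal configuration within $3n$ rounds, Lemma~\ref{lem:a4:term} and closure of normal configurations make \A inert from then on, and Theorem~\ref{theo:alliance-round} adds $5n+4$ rounds for \DA to terminate. Your explicit justification of the round-transfer step (that enabledness and neutralization coincide in the suffix for \DA and \DA $\circ$ \A once \A's rules are disabled forever) is a point the paper leaves implicit, but it is a refinement of the same argument rather than a different one.
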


\bibliographystyle{plain}
\bibliography{biblio}

\end{document}